\documentclass[9pt, journal,twoside,web]{ieeecolor}
\usepackage{generic}
\usepackage[noadjust]{cite}
\usepackage{amsmath,amssymb,amsfonts}
\usepackage{graphicx}
\usepackage{textcomp}
\usepackage{enumerate}
\usepackage{mathrsfs}
\usepackage{color}
\usepackage[justification=centering]{caption}
\usepackage{balance}
\usepackage{algorithm}
\usepackage{algorithmic}
\usepackage{verbatim}
\usepackage{romannum}
\usepackage{subfig}
\let\labelindent\relax
\usepackage{enumitem}
\allowdisplaybreaks[4]
\makeatletter
\let\NAT@parse\undefined
\makeatother
\usepackage[colorlinks,
            linkcolor=black,
            anchorcolor=blue,
            citecolor=black
            ]{hyperref}
\newtheorem{remark}{Remark}
\newtheorem{assumption}{Assumption}
\newtheorem{lemma}{Lemma}
\newtheorem{theorem}{Theorem}
\newtheorem{problem}{Problem}
\newtheorem{definition}{Definition}
\newtheorem{corollary}{Corollary}
\newtheorem{proposition}{Proposition}
\newcommand{\IEEEQED}{\hfill\rule{2mm}{2mm}}

\definecolor{blue}{RGB}{0,0,190}

\def\BibTeX{{\rmB\kern-.05em{\sci\kern-.025emb}\kern-.08emT\kern-.1667em\lower.7ex\hbox{E}\kern-.125emX}}
\begin{document}
\pagenumbering{arabic}

\title{\huge To What Extent Can Inherent Communication Noise Guarantee Privacy in Distributed Cooperative Control?}
\author{Yuwen Ma, \IEEEmembership{Student Member, IEEE}, Sarah Spurgeon, \IEEEmembership{Fellow, IEEE}, Tao Li, \IEEEmembership{Senior Member, IEEE,} Boli Chen, \IEEEmembership{Senior Member, IEEE}
\thanks{This work was supported by Engineering and Physical Sciences
Research Council (EPSRC) of UK Research and Innovation (UKRI)
under Grant EP/W524335/1.}
\thanks{Yuwen Ma, Boli Chen and Sarah Spurgeon are with the Department of Electronic and Electrical Engineering, University College London, London, United Kingdom (e-mail: yuwen.ma.24@ucl.ac.uk, boli.chen@ucl.ac.uk, s.spurgeon@ucl.ac.uk).}
\thanks{Tao Li is with the Key Laboratory of Management, Decision and Information Systems, Institute of Systems Science, Academy of Mathematics
and Systems Science, Chinese Academy of Sciences, Beijing 100190,
China, and also with the School of Mathematical Sciences, University of Chinese Academy of Sciences, Beijing 100149, China (e-mail: litao@amss.ac.cn).}
}
\maketitle

\begin{abstract}
This paper proposes a differentially private distributed cooperative control scheme for multi-agent systems (MAS). Unlike conventional approaches that actively inject artificial noise for privacy protection, this work investigates whether inherent communication noise can itself serve as a natural privacy mechanism. A physically motivated communication-noise model is developed for mobile MAS by incorporating transmitter perturbation, receiver noise, path-loss attenuation, and log-normal shadowing. The resulting effective noise variance depends on inter-agent state differences, thereby capturing the distance-dependent signal perturbation arising in practice.
Based on this model, a distributed finite-horizon Linear Quadratic Regulator (LQR) mechanism is designed to achieve formation tracking while protecting agents’ private control preferences. Rather than protecting the full local cost function, the proposed privacy formulation focuses on the ratio of the LQR weighting matrices, which captures the trade-off between tracking accuracy and control effort when the quadratic cost structure is publicly known. A set-theoretic sensitivity analysis shows that this weighting-ratio adjacency formulation yields less conservative privacy bounds than gradient-based protection under the considered addition/removal adjacency relation.
Theoretical analysis demonstrates that, under suitable design conditions, the proposed mechanism provides bounded cumulative $(\epsilon,\delta)$-differential privacy guarantees for the weighting ratios over an infinite horizon without artificial noise injection. Meanwhile, the cooperative tracking error is shown to converge almost surely and in mean square to a finite random limit, with its expectation remaining bounded. Numerical examples validate the theoretical results and illustrate the resulting privacy–performance trade-off.
\end{abstract}

\begin{IEEEkeywords}
Distributed cooperative control, differential privacy, communication noise, linear quadratic regulator, multi-agent systems.
\end{IEEEkeywords}

\section{Introduction}
\label{sec:introduction}
\IEEEPARstart {D}{} {i}stributed cooperative control of multi-agent systems has attracted sustained research attention due to its broad applications in mobile robotics, vehicular platooning, unmanned aerial vehicle (UAV) swarms, and other networked autonomous systems \cite{cao2025proximal,ma2025reduced,oh2015survey,RenCommunication}. A central objective in such systems is to coordinate a group of agents to achieve a prescribed collective behaviour, such as consensus, formation, or cooperative regulation, through local information exchange among neighbouring agents. However, the same communication process that enables cooperation may also expose sensitive information. For example, an external eavesdropper, or a compromised agent within the network, may exploit exchanged signals to infer private reference states, topologies, cost-function parameters, or individual system dynamics \cite{li2024preserving,yan2024privacy}. Therefore, privacy preservation has become an important issue in distributed cooperative control.

A range of privacy-preserving techniques has been studied for networked control and optimisation. Cryptographic approaches, such as encryption and secure multi-party computation, provide strong security guarantees, but typically rely on exact algebraic operations and may be difficult to integrate with stochastic control algorithms and real-time distributed implementation {\cite{darup2021encrypted}}. Differential privacy (DP) \cite{dwork2014algorithmic}, in contrast, provides a probabilistic privacy framework that quantifies the indistinguishability of algorithmic outputs under neighbouring private datasets. Its robustness to post-processing and side information makes it particularly attractive for distributed control and optimisation systems in which adversaries may observe communication signals over time.


Over the past decade, differentially private distributed algorithms have been developed for a variety of protection targets, including initial states \cite{nozari2017differentially,he2020differential}, local cost functions \cite{huang2015differentially,nozari2016differentially,cao2020differentially,huang2024differential,wang2024robust}, constraint functions, and feasible sets \cite{han2016differentially,zhang2024differentially}. Most existing approaches achieve DP by deliberately injecting artificial noise into communicated variables, gradients, objective functions, or control updates \cite{nozari2017differentially,he2020differential,huang2015differentially,nozari2016differentially,cao2020differentially,huang2024differential,wang2024robust,han2016differentially,zhang2024differentially,yazdani2022differentially,cortes2016differential,nozari2015differentially,hawkins2020differentially}. Although effective, such artificial perturbations inevitably introduce a trade-off between privacy and control performance. In particular, the amount of injected noise typically needs to be calibrated to the sensitivity of the protected quantity, and excessive perturbation may degrade convergence accuracy or closed-loop performance. This limitation raises a natural question: can randomness that already exists in the physical communication process be exploited as a privacy-preserving resourc?

In practical mobile multi-agent systems, inter-agent communication is rarely noise-free. Wireless channels are affected by transmitter distortion, receiver noise, path loss, fading, and obstacle-induced shadowing. Moreover, because the agents are mobile, the quality of the received signal often depends on the relative distance between agents. This means that communication noise is not merely an external disturbance to be suppressed; it may also provide a natural source of randomness for privacy preservation. Similar ideas have recently been explored in wireless federated learning, where channel noise has been used to reduce or even eliminate the need for additional artificial noise \cite{liu2020privacy,liu2022wireless,liu2024differentially}. In control and optimisation, inherent system imperfections such as quantisation errors have also been studied as privacy-preserving resources \cite{wang2022quantization,liu2026design}. However, the use of inherent communication noise as a direct DP mechanism for distributed cooperative control remains much less understood.

This paper is motivated by the following question: \textit{to what extent can inherent communication noise itself guarantee privacy in distributed cooperative control, without actively injecting additional noise?} Addressing this question is challenging because inherent communication noise is fundamentally different from artificial DP noise. Artificial noise can be designed and scaled according to a prescribed privacy budget, whereas physical channel noise is determined by the communication environment and is not freely tunable. Consequently, the achievable privacy level depends not only on the channel-noise intensity, but also on the sensitivity of the protected quantity and the associated adjacency relation.

This observation motivates a careful choice of the privacy target. Existing DP mechanisms for distributed optimisation often protect the full local objective function or its gradient \cite{huang2015differentially,nozari2016differentially,cao2020differentially,huang2024differential,wang2024robust,han2016differentially,zhang2024differentially}. While such formulations are appropriate in many optimisation problems, they may be overly conservative in control settings such as Linear Quadratic Regulator (LQR) and Model Predictive Control (MPC), where the quadratic structure of the cost function is typically known a priori \cite{di2018dynamic}. In such cases, the truly private information is not necessarily the mathematical form of the cost, but rather the preference encoded by the relative weighting of tracking accuracy and control effort. Motivated by this perspective, this paper focuses on protecting the weighting ratios of the quadratic LQR cost function. This formulation follows the spirit of the authors’ preliminary work on differentially private finite-horizon optimal control \cite{ma2025distributed}, while moving from artificially designed privacy noise to inherent communication noise. 

The present study also builds on the authors’ preliminary investigation of communication-noise-induced privacy in distributed cooperative control \cite{ma2026inherent}. Compared with that initial study, the present work considers a more realistic channel-noise model for mobile agents and develops a distributed finite-horizon LQR framework in which the protected private information is the weighting ratio of the quadratic cost function. In this way, the paper connects two previously separate ideas: using inherent communication noise as a privacy resource, and protecting control preferences through a weighting-ratio adjacency formulation. Specifically, this work investigates cooperative formation control of multi-agent systems in which the communication noise is scaled by inter-agent state differences, such as relative position gaps. A distributed finite-horizon LQR mechanism is proposed to achieve both formation tracking and privacy protection for the weighting ratios of the local quadratic cost functions. Theoretical analysis shows that, over an infinite horizon, the proposed mechanism can provide bounded DP guarantees while ensuring that the tracking error converges almost surely (a.s.) and in mean square to a finite random limit, with its expectation remaining strictly bounded. Numerical simulations are provided to validate the theoretical findings.

The main contributions of the paper are summarised as follows.

\begin{enumerate}

\item \textbf{A physically motivated communication-noise model for mobile multi-agent systems.}
    A distance-dependent channel-noise model is developed for mobile agents by incorporating transmitter perturbation, receiver noise, path-loss attenuation, and log-normal shadowing. The resulting effective communication noise depends on the inter-agent state difference, thereby capturing both signal perturbation and distance-dependent channel variation in cooperative mobile systems.

\item \textbf{A weighting-ratio adjacency formulation for LQR privacy.}
    Motivated by the fact that the quadratic LQR structure is often publicly known, the paper formulates the private database in terms of the weighting ratios $R_i^{-1}Q_i$, rather than the full cost function. This ratio-based formulation captures the agents’ control preferences and is shown, via set-theoretic sensitivity analysis, to be less conservative than gradient-based protection under the considered addition/removal adjacency formulation.


\item \textbf{A distributed finite-horizon LQR mechanism with inherent-noise-induced privacy guarantees.}
    A distributed receding-horizon LQR mechanism is proposed to achieve formation tracking while using inherent communication noise, rather than artificially injected noise, to protect the weighting ratios. The analysis establishes tracking-error convergence and bounded cumulative $(\epsilon,\delta)$-DP over an infinite horizon under suitable conditions, and it further clarifies the resulting privacy–performance trade-off.
\end{enumerate}
The remainder of this paper is organised as follows. Section~\uppercase\expandafter{\romannumeral2} gives the problem formulation and preliminaries, including the multi-agent system dynamics, the communication-noise model, the finite-horizon LQR-based cooperative control formulation, the definition of differential privacy, and the formal problem statement. Section~\uppercase\expandafter{\romannumeral3} develops the differentially private distributed LQR cooperative mechanism, and the convergence analysis of the tracking error, the DP guarantees induced by inherent communication noise, and the sensitivity-based justification for protecting weighting ratios rather than gradients. Section~\uppercase\expandafter{\romannumeral4} provides numerical simulations to validate the theoretical results and illustrate the privacy-performance trade-off. Finally, Section~\uppercase\expandafter{\romannumeral5} concludes the paper.

\textit{Notations:} $\mathbb{R}^{m\times n}$ and $\mathbb{R}^{n}$ denote the spaces of real $m\times n$ matrices and $n$-dimensional real vectors, respectively. The sets of positive definite and positive semi-definite matrices are represented by $\mathbb{S}^{n\times n}_{++}$ and $\mathbb{S}^{n\times n}_{+}$. The set of natural numbers is denoted by $\mathbb{N}$, with $\mathbb{N}_+$ referring specifically to the set of positive integers. Regarding matrix notation, $\mathbf{I}$ and $\mathbf{0}$ represent the identity and zero matrices of compatible dimensions, respectively, and $\mathbf{1}$ denotes the column vector with all entries equal to one. The notation $\mathrm{diag}\left\{ {{A}_{1}},\ldots, {{A}_{n}}\right\}$ corresponds to a diagonal matrix with diagonal entries ${{A}_{1}},\ldots,{{A}_{n}}$, and $\mathrm{range}(\cdot)$ denotes the range of an operator. The $l_\infty$-norm is indicated by ${\left\| \cdot \right\|_\infty}$, and unless stated otherwise, $\|\cdot\|$ refers to the standard Euclidean (${l}_2$) norm. For a vector $x \in \mathbb{R}^n$ and a matrix $P \in \mathbb{S}_{++}^{n \times n}$, $\|x\|_P^2 \triangleq x^\top P x$ denotes the squared weighted Euclidean norm. For $p \in \{1, 2\}$, the space $\ell_p$ is defined as the set of real sequences for which the $p$-th power of the absolute value is summable, i.e., $\ell_p = \{ x=(x_n)_{n=1}^{\infty} \mid \sum_{n=1}^{\infty} |x_n|^p < \infty \}$. Additionally, $\otimes$ indicates the Kronecker product. For a square matrix $A\in \mathbb{R}^{n\times n}$, $\mathrm{tr}(A)$ is its trace and $\lambda_{\min}(A)$ represents its smallest positive eigenvalue; for symmetric matrices, the inequalities $A \succ 0$ and $A \succeq 0$ signify positive definiteness and positive semi-definiteness, respectively. For two sets $A$ and $B$, $\rm{d_H}(A,B)$ denotes the Hausdorff distance between $A$ and $B$. In probabilistic terms, $\mathbb{E}[\cdot]$ denotes the mathematical expectation, $\mathrm{Cov[\cdot]}$ represents the covariance matrix, $\mathbb{P}(B)$ is the probability of an event $B$, and $\mathcal{N}(0,\,\Sigma)$ represents a zero-mean Gaussian distribution with covariance matrix $\Sigma$. For two real-valued sequences $\{a(t)\}$ and $\{b(t)\}$, the notation $a(t) = O(b(t))$ implies the existence of constants $c > 0$ and $t_0$ such that $|a(t)| \le c |b(t)|$ holds for all $t \ge t_0$.

\section{Problem formulation and preliminaries} 
\subsection{Multi-agent Systems}
A multi-agent system (MAS) consisting of $N$ agents is studied, where the dynamics of each agent are modelled as a discrete-time single integrator:
\begin{equation}\label{e1}
x_i(t+1)=x_i(t)+u_i(t),\quad \forall i = {1,\ldots,N}.
\end{equation}
Here, $x_i(t), u_i(t) \in \mathbb{R}^n$ correspond to the state (position) and input vectors. The initial state $x_i(0)$ is deterministic for every agent.

The communication topology is described by an undirected graph $\mathcal{G}(\mathcal{V}, \mathcal{E}, \mathcal{A})$, where $\mathcal{V} = \{1, \ldots, N\}$ and $\mathcal{E} \subseteq \mathcal{V} \times \mathcal{V}$ denote the node and edge sets, respectively. The total number of edges is represented by $N_{\mathcal{E}} = |\mathcal{E}|$, and the edges are labelled as $l_1, \ldots, l_{N_{\mathcal{E}}}$ following the node index order. For any $(j,i) \in \mathcal{E}$, it implies a communication channel exists between node $j$ and node $i$, designating node $j$ as a neighbour of node $i$. The set of neighbours for node $i$ is denoted by $\mathcal{N}_i = \{j \in \mathcal{V} \mid (j, i) \in \mathcal{E}\}$. Regarding the algebraic representation, the adjacency matrix $\mathcal{A} = [a_{ij}] \in \mathbb{R}^{N \times N}$ is defined as $a_{ij} = 1$ if $j \in \mathcal{N}_i$, and $a_{ij} = 0$ otherwise (implying $a_{ii}=0$). The degree matrix is defined as $D_g= \mathrm{diag}\{d_1, \dots, d_N\}$, where $d_i = \sum_{j \in \mathcal{N}_i} a_{ij}$. Consequently, the Laplacian matrix $\mathcal{L} = [l_{ij}] \in \mathbb{R}^{N \times N}$ associated with graph $\mathcal{G}$ is defined as $\mathcal{L} = D_g - \mathcal{A}$. The edge weight matrix is then given by $W \triangleq \mathrm{diag}\{a_{ij}\} = \mathbf{I}_{N_{\mathcal{E}}}$ for the corresponding edges $(i,j) = l_1, \ldots, l_{N_{\mathcal{E}}}$. Furthermore, an arbitrary direction is assigned to each edge on $\mathcal{G}$. Based on this orientation, the incidence matrix $B = [b_{ij}] \in \mathbb{R}^{N_{\mathcal{E}} \times N}$ is defined as follows: $b_{ij} = 1$ if edge $i$ enters node $j$, $b_{ij} = -1$ if it leaves node $j$, and $b_{ij} = 0$ otherwise.

The following assumption and lemma regarding graph connectivity are imposed.
\begin{assumption}\label{ass1}
The topology $\mathcal{G}(\mathcal{V} ,\mathcal{E},\mathcal{A} )$ is a tree.
\end{assumption}
\begin{lemma}[\cite{dimarogonas2010stability}]\label{le1}
Under the condition that $\mathcal{G}$ is a tree, the incidence matrix $B$ is of full row rank.
\end{lemma}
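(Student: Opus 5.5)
The plan is a dimension count combined with a kernel computation. The incidence matrix $B\in\mathbb{R}^{N_{\mathcal{E}}\times N}$ has full row rank exactly when $\operatorname{rank}(B)=N_{\mathcal{E}}$. Two standard facts give this directly:
\begin{itemize}
\item a tree on $N$ nodes has $N_{\mathcal{E}}=N-1$ edges;
\item for a connected graph, $\operatorname{rank}(B)=N-1$.
\end{itemize}

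For the first fact, I would argue by induction on $N$. A finite tree with $N\ge 2$ has a leaf, since a maximal path ends at a vertex of degree one. Deleting that leaf and its incident edge leaves a tree on $N-1$ nodes, so the edge count drops by exactly one at each step.

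For the second fact, I would compute the kernel of $B$. Take $v\in\mathbb{R}^N$ with $Bv=\mathbf{0}$. Each row of $B$ corresponds to an oriented edge $(j,k)$ and gives the equation $v_k-v_j=0$. Values of $v$ are therefore equal along every edge. Because a tree is connected, any two nodes are joined by a path, so $v$ is constant, i.e., $v\in\mathrm{span}\{\mathbf{1}\}$. Conversely, $B\mathbf{1}=\mathbf{0}$, because each row has exactly one $+1$ and one $-1$. Hence $\dim\ker B=1$, and rank–nullity gives $\operatorname{rank}B=N-1=N_{\mathcal{E}}$, which is full row rank.

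As an independent check that avoids counting, I could show directly that the rows of $B$ are linearly independent. Suppose $w^\top B=\mathbf{0}$ with $w\in\mathbb{R}^{N_{\mathcal{E}}}$. Pick a leaf node. Its column of $B$ has a single nonzero entry, in the row of the unique edge at that leaf, so that edge's coefficient in $w$ must vanish. Removing the leaf and its edge leaves a smaller tree, and induction forces $w=\mathbf{0}$.

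There is no serious obstacle here. The only care needed is that the orientation is arbitrary: the kernel argument uses only $|v_k-v_j|=0$, so flipping an edge's orientation merely negates its row and does not change the rank.
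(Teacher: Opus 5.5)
Your proof is correct. The kernel computation ($\ker B=\mathrm{span}\{\mathbf{1}\}$ by connectivity) combined with the edge count $N_{\mathcal{E}}=N-1$ gives $\mathrm{rank}\,B=N_{\mathcal{E}}$, and your leaf-peeling argument gives an independent, self-contained check of row independence. The paper proves nothing here and only cites \cite{dimarogonas2010stability}, where the result rests on these same standard algebraic-graph-theory facts, so your argument supplies the details the paper leaves to the reference.
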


Assumption \ref{ass1} implies a connected, acyclic communication network. This naturally occurs in practical mobile agent scenarios like vehicle platooning or strict leader-follower formations. In the subsequent analysis, Assumption \ref{ass1} is applied directly via the Lemma \ref{le1}'s conclusion, that the incidence matrix $B$ has full row rank.

\subsection{Communication Noise Model}
In practical MAS applications such as vehicular platoons and UAV swarms, inter-agent communication is realised over wireless fading channels, thereby inevitably compromising the information exchange with communication noise \cite{khuwaja2018survey}. To capture this, the communication channel between agents is modelled as a linear Single-Input Single-Output (SISO) link for block-transmitted vector signals. The signal received by agent $i$ from agent $j$, prior to internal processing, is
\begin{equation}\label{e3}
r_{ij}(t) = h_{ij}(t) \left(x_j(t)+\eta_{{ij}_s}(t)\right) + \eta_{{ij}_z}(t),
\end{equation}
where $x_j(t) \in \mathbb{R}^n$ denotes the transmitted state, $h_{ij}(t) \in \mathbb{R}$ represents the scalar channel gain, and $\eta_{{ij}_s}(t)$, $\eta_{{ij}_z}(t)$ represent the impairments occurring within the transmitter and receiver units respectively, which are analytically treated as independent additive white Gaussian noise (AWGN) processes \footnote{The transmitter distortion $\eta_{{ij}_s}$, which encompasses hardware impairments such as quantization errors, phase noise, and DAC nonlinearities, exhibits a significantly larger variance than the receiver thermal noise $\eta_{{ij}_z}$ \cite{bjornson2014massive}.}. The following assumption is imposed on the communication channel.
\begin{assumption}\label{ass2}
    The communication links are modelled as SISO AWGN channels dominated by large-scale fading, where the receiver is assumed to possess perfect channel state information (CSI). Furthermore, vector signals are transmitted in a block-wise manner using orthogonal resources.
\end{assumption}

Under Assumption \ref{ass2}, perfect CSI allows phase compensation at the receiver, $h_{ij}(t)$ is treated as real, non-negative, and random. Orthogonal resources prevent inter-agent interference. Assuming perfect CSI is a standard convention adopted to facilitate the analysis of communication noise \cite{sery2020analog,cao2020optimized}. The randomness in $h_{ij}(t)$ arises from shadow fading caused by environmental obstacles. The large-scale fading model {is classical and well-established in the analysis of} robotic, UAV, and vehicular networks \cite{molisch2009survey,khuwaja2018survey,cai2017low
}.

In accordance with the properties of large-scale fading, the channel gain is derived from an empirical path loss model \cite{khuwaja2018survey} following the linear-domain conversion in \cite[Chapter 2]{goldsmith2005wireless}:
\begin{equation}\label{e5}
    h_{ij}(t)=C_0 \cdot \frac{\mathcal{X}_{ij}(t)}{\|x_i(t) - x_j(t)\|^{\alpha/2}}
\end{equation}
where $C_0 = \sqrt{10^{-P_\mathrm{L}(d_0)/10}} \cdot d_0^{\alpha/2}$ is a constant determined by the reference distance $d_0$ and reference path loss $P_\mathrm{L}(d_0)$. The empirical path loss exponent $\alpha>1$ (with instances of $\alpha \leq 1$ being exceedingly rare) characterises the propagation environment. As reported in {\cite{molisch2009survey, seidel1992914} (see Table 1 in each)}, $\alpha$ typically ranges from 1.6 to 6. For instance, $\alpha$ is 1.61 in urban vehicle-to-vehicle (V2V) channels, $\alpha=1.8$ in V2V highway environments and $\alpha = 2$ corresponds to free space. The log-normal shadowing factor satisfies $\ln(\mathcal{X}_{ij}(t)) \sim \mathcal{N}(0, {(\frac{\ln 10}{20}\sigma_{dB})}^2)$, with $\sigma_{dB}=0$ when $\alpha=2$ \footnote{Strictly, the absence of shadow fading ($\sigma_{dB}=0$, $\mathcal{X}_{ij}=1$) occurs exclusively when $\alpha=2$, as this indicates a free-space propagation environment devoid of obstacle occlusion.}. Both $\alpha$ and $\sigma_{\mathrm{dB}}$ are assumed homogeneous across links and time-invariant. Since packet loss is not considered, $h_{ij}(t)>0$ for all $t\ge0$, $(j,i)\in\mathcal{E}$. For the case where $x_i(t) - x_j(t) \to 0$, $h_{ij}(t) \to \infty$. Based on \eqref{e3}, the impact of $\eta_{{ij}_z}(t)$ becomes negligible, and the perturbation stems solely from $\eta_{{ij}_s}(t)$. This limiting behaviour should be interpreted in the context of the operational range of mobile multi-agent systems, where agents maintain nonzero safety distances. In practice, near-field effects, receiver saturation, and hardware constraints impose a finite upper bound on the channel gain. 

By performing coherent equalization based on channel estimation, the compensated received signal derived from \eqref{e3} is
\begin{equation}\label{e6}
    \hat{x}_{ij}(t) =h_{ij}^{-1}(t)r_{ij}(t)= x_j(t) +\eta_{{ij}_s}(t)+ h_{ij}^{-1}(t) \eta_{{ij}_z}(t),
\end{equation}
where $\eta_{{ij}_s}\sim\mathcal{N}(0,s_{j}\mathbf{I})$ (with $s_{j} > 0$) and $\eta_{{ij}_z}\sim\mathcal{N}(0,z_i\mathbf{I})$ (with $z_i > 0$)  are spatially and temporally independent and identically distributed (i.i.d.) and mutually independent. Defining the aggregate noise $\eta_{ij}(t)\triangleq \eta_{{ij}_s}(t)+h_{ij}^{-1}(t)\eta_{{ij}_z}(t)$, its conditional distribution given the shadowing realisation $X_{ij}(t)$ is
\begin{equation}\label{e8}
    \eta_{ij}(t)\mid\mathcal{X}_{ij}(t) \sim \mathcal{N}\left(0,\left(\frac{z_i \|x_i(t) - x_j(t)\|^{\alpha}}{C_0^2\mathcal{X}_{ij}^2(t)}+s_j\right)\mathbf{I}\right).
\end{equation}

\subsection{Finite-Horizon LQR-Based Cooperative Control}\label{sc}
To coordinate the MAS \eqref{e1} into a target spatial configuration, a distributed finite-horizon LQR framework is considered. Specifically, at each time instant $t$, the control action for agent $i$ is derived by solving a local optimal control problem over a horizon of length $T$:
\begin{subequations}\label{e14}
\begin{align}
\min\limits_{u_i(\cdot)}\quad&  J^\ast_i (x_i(t),u_i(t))\label{eq:14a}\\
\textbf{s.t. } &  \eqref{e1}
\end{align}
\end{subequations}
where the local cost function is defined as
\begin{align}\label{e9}
    J^\ast_i(x_i(t),u_i(t)) &\triangleq \sum\limits_{q=0}^{T-1} \left( \ell_i^{u}(t+q) + \ell_i^{x}(t+q) \right).
\end{align}
with $\ell_i^{u}(t+q) \triangleq \|u_i(t+q)\|_{R_i}^2$ and $\ell_i^{x}(t+q) \triangleq \sum_{j\in \mathcal{N}_i}a_{ij}\|x_i(t+q)-{x}_{j}(t)-d_{ij}\|_{Q_i}^2$. At each time step $t$, agents exchange their current states across the network. The value $x_j(t), j\in\mathcal N_i$, denotes the most recently received state of neighbour $j$. In the local finite-horizon problem solved by agent $i$, this neighbour state is held fixed over the prediction horizon, while only the {$i$-th agent's} own future trajectory $x_i(t+q)$ is optimised. Under the receding-horizon implementation, only the first element of the optimal sequence is applied, after which agents exchange updated states and the local optimisation is solved again at the next time step. The vector $d_{ij}\in \mathbb{R}^n$ represents the desired relative state, satisfying $d_{ij}=-d_{ji}$ for all $(j,i)\in\mathcal{E}$. Furthermore, the diagonal matrices $Q_i \in \mathcal{Q}_i \subset {\mathbb{S}_{++}^{n\times n}}$ and $R_i\in \mathcal{R}_i \subset {\mathbb{S}_{++}^{n\times n}}$ are weights designed to balance the trade-off between cooperative accuracy and control effort, with $\mathcal{Q}_i$ and $\mathcal{R}_i$ denoting their respective finite admissible sets.


From a privacy perspective, the threat model considers an external eavesdropper capable of intercepting transmissions at any location along the communication channels. Furthermore, it is assumed that this adversary possesses complete prior knowledge of the system dynamics, the communication topology, the underlying algorithmic structure, and the formation offsets $d_{ij}$. The eavesdropper's objective is to infer the agents' local control preferences from the observed trajectories, which may allow it to predict agent behaviour, identify vulnerable agents, or induce system miscoordination. According to the noise model in \eqref{e8}, the effective noise in the signal intercepted by the eavesdropper depends on its spatial location. Specifically, the noise variance reaches its minimum at the location of transmitter and its maximum at receiver.

In the LQR-based cooperative control problem considered here, the quadratic form of the cost function in \eqref{e9} is standard and is therefore assumed to be public. Consequently, this work does not regard the algebraic cost structure itself as the private information. Instead, the protected information is the collection of weighting ratios $R_i^{-1}Q_i,\ \forall i\in\mathcal{V}$, which encodes each agent’s relative preference between formation-tracking accuracy and control effort. Protecting $R_i^{-1}Q_i$, rather than $Q_i$ and $R_i$ separately, is also consistent with the scale invariance of the unconstrained LQR formulation \footnote{Replacing $(Q_i,R_i)$ with $(\kappa Q_i,\kappa R_i)$, where $\kappa>0$, scales the local objective by $\kappa$ but leaves the optimiser unchanged}. Thus, uniformly scaled weighting matrices do not represent distinguishable control preferences under the considered mechanism. For analyses of recovery mechanisms for optimal-control weighting parameters, see \cite{jin2021inverse}. The formal adjacency relation and DP guarantee for this weighting-ratio database are introduced next.

\subsection{On Differential Privacy}

   This subsection introduces the definition of DP concerning the weighting ratios, alongside the requisite supporting lemmas. Let $D=\{R_i^{-1}Q_i\}_{i=1}^{N}$ denote the private database comprising the weighting ratios. The universe of all admissible databases is defined as $\mathcal{D}=\{D \mid \forall Q_i\in \mathcal{Q}_i, \forall R_i\in \mathcal{R}_i, i=1,2,\ldots,N\}$.

\begin{definition}[Bounded Adjacency, \cite{dwork2014algorithmic}]\label{de1}
    Two databases $D=\{R_i^{-1}Q_i\}_{i=1}^{N}$ and $D'=\{R_i'^{-1}Q_i'\}_{i=1}^{N}$ are said to be adjacent if there exists a $k_1\in\mathcal{V}$
    \begin{equation}\label{e10}
        \left\{
        \begin{aligned}
            &R_i^{-1}Q_i=R_i'^{-1}Q'_i,\, i\neq k_1, \\
            &\|R_i^{-1}Q_i-R_i'^{-1}Q_i'\| \leq \theta_1,\, i=k_1,
        \end{aligned}
        \right.
    \end{equation}
\end{definition}

A randomised mechanism $\mathcal{M}$ acting on a sensitive database $D$ is said to be differentially private if it ensures that adjacent databases $D$ and $D'$ produce nearly indistinguishable outputs $\mathcal{M}(D)$ and $\mathcal{M}(D')$.

\begin{definition}[\cite{dwork2014algorithmic}]\label{de3}
Let $\epsilon>0$ and $\delta>0$. A randomised algorithm $\mathcal{M}$ with $\mathrm{Range}(\mathcal{M}) \subseteq \mathbb{R}^n$ is $(\epsilon,\delta)$-differentially private if for all $S \subseteq \mathrm{Range}(\mathcal{M})$ and for all $D,D' \in \mathcal{D}$  defined in Definition \ref{de1}:
\begin{equation}\label{e12}
    \Pr[\mathcal{M}(D) \in S] \le \mathbf{e}^\epsilon \Pr[\mathcal{M}(D') \in S] + \delta.
\end{equation}
\end{definition}

\begin{lemma}[\cite{le2014differentially}]\label{le2}
 Let $\epsilon >0$, $\delta\in (0,\frac{1}{2})$ and $\mathcal{Q}(x)=\frac{1}{\sqrt{2\pi}}\int_x^\infty\mathrm{e}^{-\frac{u^2}{2}}du$. Consider a mechanism $M$ with $\ell_2$-sensitivity,
    \begin{align}\label{e13}
        \Delta_{M}=\max\limits_{D,D'} \|M(D)-M(D')\|,
    \end{align}
    where databases $D$ and $D'$ satisfy the adjacent relation as defined in Definition \ref{de1}. Then the mechanism $\mathcal{M}(D)=M(D)+\eta$ is $(\epsilon,\delta)$-differentially private if each component of $\eta$ are i.i.d. and scaled to $\mathcal{N}(0,\sigma^2)$, where $\sigma \ge \frac{\Delta_{M}}{\sqrt{{\mathcal{Q}^{-2}(\delta)}+2\epsilon}-\mathcal{Q}^{-1}(\delta)}$. 
\end{lemma}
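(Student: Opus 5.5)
This is the standard Gaussian-mechanism bound, and I would prove it by reducing the $(\epsilon,\delta)$ condition in \eqref{e12} to a tail bound on the privacy-loss random variable. Fix adjacent $D,D'$ as in Definition~\ref{de1} and write $\mu=M(D)$, $\mu'=M(D')$, with $v=\mu-\mu'$ and $\|v\|\le\Delta_M$ by \eqref{e13}. Since the components of $\eta$ are i.i.d.\ $\mathcal N(0,\sigma^2)$, the outputs $\mathcal M(D)$ and $\mathcal M(D')$ have densities $p$ and $p'$, which are isotropic Gaussians centred at $\mu$ and $\mu'$.

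For $y=\mu+\eta$, the log-likelihood ratio is
\begin{equation*}
L(y)=\ln\frac{p(y)}{p'(y)}=\frac{\|y-\mu'\|^2-\|y-\mu\|^2}{2\sigma^2}=\frac{2\eta^\top v+\|v\|^2}{2\sigma^2}.
\end{equation*}
Because $\eta^\top v\sim\mathcal N(0,\sigma^2\|v\|^2)$, this gives $L\sim\mathcal N\!\left(\tfrac{\|v\|^2}{2\sigma^2},\tfrac{\|v\|^2}{\sigma^2}\right)$. The problem is therefore one-dimensional, depending only on $r=\|v\|/\sigma$.

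Next I split any measurable $S$ into $S_1=S\cap\{L\le\epsilon\}$ and $S_2=S\cap\{L>\epsilon\}$. On $S_1$ we have $p\le \mathrm e^{\epsilon}p'$, so $\Pr[\mathcal M(D)\in S_1]\le \mathrm e^\epsilon\Pr[\mathcal M(D')\in S]$. It then suffices to show $\Pr[L>\epsilon]\le\delta$ when $y$ is drawn from $\mathcal M(D)$. Writing $L=\tfrac{r^2}{2}+rZ$ with $Z\sim\mathcal N(0,1)$,
\begin{equation*}
\Pr[L>\epsilon]=\mathcal Q\!\left(\frac{\epsilon}{r}-\frac r2\right),
\end{equation*}
so we need $\tfrac{\epsilon}{r}-\tfrac r2\ge \mathcal Q^{-1}(\delta)$. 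The assumption $\delta<\tfrac12$ makes $\mathcal Q^{-1}(\delta)>0$, and $\mathcal Q$ is decreasing.

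Setting $a=\mathcal Q^{-1}(\delta)$, the requirement becomes the quadratic inequality $r^2+2ar-2\epsilon\le 0$ in $r>0$. This holds exactly when $r\le -a+\sqrt{a^2+2\epsilon}$, i.e.\ when $\sigma\ge \|v\|/(\sqrt{a^2+2\epsilon}-a)$. The left-hand side $\epsilon/r-r/2$ is decreasing in $r$, so the worst case is $\|v\|=\Delta_M$, and the hypothesis on $\sigma$ covers every adjacent pair. If $v=0$ the two distributions coincide and the claim is trivial.

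The only delicate step is this monotonicity and inversion, which is where $\delta<\tfrac12$ is needed. Everything else is routine Gaussian algebra.
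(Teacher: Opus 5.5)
Your proof is correct and is essentially the standard privacy-loss argument behind the cited result \cite{le2014differentially}; the paper itself does not reprove the lemma, it only invokes it. Both reduce to the one-dimensional Gaussian $L\sim\mathcal N(r^2/2,r^2)$, require $\mathcal{Q}(\epsilon/r-r/2)\le\delta$, and solve the quadratic in $r$.

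One small correction to your closing comment: the inversion step does not actually use $\delta<\frac12$. Since $\mathcal{Q}$ is a strictly decreasing bijection from $\mathbb{R}$ onto $(0,1)$ and $\sqrt{a^2+2\epsilon}-a>0$ for every real $a$, your argument works for all $\delta\in(0,1)$; the restriction only guarantees $a=\mathcal{Q}^{-1}(\delta)>0$.
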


\subsection{Problem Statement}
Given the stochastic nature of the communication noise derived previously, the desired spatial configuration control problem is reformulated and formally stated as follows.

\begin{problem}\label{pr1}
Design a \textit{differentially private distributed finite-horizon LQR mechanism} for the MAS described by~\eqref{e1} such that:
\begin{enumerate}
    \item The tracking error $\xi_{l_k}(t)\triangleq x_i(t)-x_j(t)-d_{l_k}$ converges to a finite random limit a.s. and in mean square, $\forall l_k=(i,j)\in\mathcal{E}$;
    \item The mechanism guarantees bounded $(\epsilon,\delta)$-differential privacy from time $t=0$ to infinity for all weighting ratios $R_i^{-1}Q_i$, under the adjacency relation defined in Definition~\ref{de1}.
\end{enumerate}
\end{problem}

The first objective ensures the convergence of the tracking error in a stochastic sense, while the second enforces privacy protection on the weighting ratios. 

\section{Main results}

In this section, a differentially private distributed finite-horizon LQR framework is presented. Within this framework, the convergence of the MAS \eqref{e1} is established. {Specifically, the analysis focuses on the cases $\alpha \leq 2$, which are motivated as they characterize propagation conditions commonly observed in mobile-agent networks. The case $\alpha > 2$, corresponding to more heavily obstructed propagation, are reserved for future research.} Based on the convergence results, the DP guarantees provided by the communication noise \eqref{e8} for the weighting ratios $R_i^{-1}Q_i$, $\forall i \in \mathcal{V}$, are then rigorously derived. Furthermore, a sensitivity-based justification is provided to validate the rationale for protecting weighting ratios rather than gradients.

\subsection{Differentially Private LQR Cooperative Mechanism}

Building upon the LQR-based cooperative framework in Section \ref{sc}, the differentially private mechanism is established by introducing communication noise \eqref{e8} and modifying cost function \eqref{eq:14a} as follows: 
\begin{align}\label{e15}
    J_i(x_i(t),u_i(t),c(t)) &\triangleq \sum\limits_{q=0}^{T-1} \left( {\ell}_i^{u}(t\!+\!q) \!+ \!c(t)\hat{\ell}_i^{x}(t\!+\!q) \right),
\end{align}
with $\hat{\ell}_i^{x}(t+q)\triangleq \sum\limits_{j\in \mathcal{N}_i}a_{ij}\|x_i(t+q)-\hat{x}_{ij}(t)-d_{ij}\|_{Q_i}^2,$ $c(t)>0$ a time-varying penalty factor and $c(t)\rightarrow0$ as $t\rightarrow\infty$, $\hat{x}_{ij}(t)$ the signal received by agent $i$ according to the communication model \eqref{e6}. 
The sequence $c(t)$ is treated as a globally design parameter and provides an adjustable trade-off between tracking performance and privacy. In principle, although the physical communication-noise variance in \eqref{e8} is determined by the channel properties and is therefore not designable, the penalty factor $c(t)$ scales the cooperative-error term in \eqref{e15} and the control input in \eqref{e16}. As a result, $c(t)$ changes the sensitivity of the mechanism to the protected weighting ratios and therefore indirectly shapes the privacy–performance trade-off.

Following the receding horizon control principle, only the initial element of the computed optimal sequence in \eqref{e14} and \eqref{e15} is applied to the system \eqref{e1} at each step. From an analytical perspective, the unconstrained control input $u_i(t)$ for the optimisation problem in \eqref{e14} is then derived following the explicit MPC \cite{bemporad2002model}. Combining \eqref{e1}, \eqref{e14} and \eqref{e15}, the control sequence $U_i^{\ast}=\left[\begin{matrix}
    u_i^{\ast}(t)^\top & u_i^{\ast}(t+1)^\top &\ldots &u_i^{\ast}(t+T-1)^\top
\end{matrix}\right]^\top$ is obtained as
   \begin{align*}
U_i^\ast&=c(t)\sum\limits_{j\in\mathcal{N}_i}a_{ij}\left(\bar{R}_i+c(t)\sum\limits_{j\in\mathcal{N}_i}a_{ij}\Phi^\top\bar{Q}_i\Phi\right)^{-1}\\
&\quad\times\Phi^\top\bar{Q}_i\Lambda(\hat{x}_{ij}(t)-x_i(t)+d_{ij}),
\end{align*}
where $\bar{Q}_i=\mathrm{diag}\{\underbrace{Q_i,\ldots,Q_i}_T\}$, $\bar{R}_i=\mathrm{diag}\{\underbrace{R_i,\ldots,R_i}_T\}$, $\Phi \in \mathbb{R}^{nT\times nT}$, $\Lambda \in \mathbb{R}^{nT\times n}$ and
\begin{align*}
    \Phi=\left[\begin{matrix}
        \mathbf{I}_{n\times n} & \mathbf{0}_{n\times n} &\ldots & \mathbf{0}_{n\times n}\\
       \mathbf{I}_{n\times n} & \mathbf{I}_{n\times n} &\ldots & \mathbf{0}_{n\times n} \\
        \vdots & \vdots &\ddots & \vdots  \\
        \mathbf{I}_{n\times n} & \mathbf{I}_{n\times n}& \ldots & \mathbf{I}_{n\times n} 
    \end{matrix}\right],\ \Lambda=\left[\begin{matrix}\mathbf{I}_{n\times n}\\\mathbf{I}_{n\times n}\\ \vdots \\\mathbf{I}_{n\times n} \end{matrix}\right].
\end{align*} Define 
\begin{align}\label{e16}
\left[
    \begin{matrix}
     \psi_{1,t} \\ \psi_{2,t} \\ \vdots \\\psi_{T,t}  
\end{matrix}\right]\triangleq\left(\bar{R}_i+c(t)\sum\limits_{j\in\mathcal{N}_i}a_{ij}\Phi^\top\bar{Q}_i\Phi\right)^{-1}\Phi^\top\bar{Q}_i\Lambda,
\end{align} $\psi_{1,t} \in \mathbb{R}^{n\times n}$. The input $u_i(t)$ is given as
\begin{align}\label{e17}
    u_i(t) &= u_i^{\ast}(t)=c(t)K_{i,t}\sum\limits_{j\in\mathcal{N}_i}a_{ij}  (\hat{x}_{ij}(t)+d_{ij}-x_i(t)),\nonumber\\
    K_{i,t}&=\psi_{1,t}.
\end{align}

The complete distributed execution protocol is outlined in Algorithm \ref{alg1}. In this framework, Algorithm~\ref{alg1} can be viewed as a sequential composition of control mechanisms, denoted by $\mathbf{\mathcal{M}}=\{\mathbf{\mathcal{M}}_0, \mathbf{\mathcal{M}}_1, \ldots\}$. The joint mechanism $\mathbf{\mathcal{M}}_t=\{\mathbf{\mathcal{M}}_{t,1}, \mathbf{\mathcal{M}}_{t,2}, \ldots, \mathbf{\mathcal{M}}_{t,N}\}$ characterises the execution of Steps~4–7 at time~$t$. Specifically, each local mechanism $\mathbf{\mathcal{M}}_{t,i}$ yields the updated states $x_i(t+1)$ for all $i\in \mathcal{V}$. Owing to the recursive nature of the algorithm, $\mathbf{\mathcal{M}}_t$ exhibits a dependence on the historical sequence $\{\mathbf{\mathcal{M}}_0, \mathbf{\mathcal{M}}_1, \ldots, \mathbf{\mathcal{M}}_{t-1}\}$. The selection rule and analytical properties of $c(t)$ will be discussed later.

\begin{algorithm}
\caption{Differentially Private Cooperative Control Algorithm with Inherent Communication Noise}
\begin{algorithmic}[1]\label{alg1}
\STATE \textbf{Initialisation:} $x_i(0)\in \mathbb{R}^n$, $Q_i \in \mathcal{Q}_i$, $R_i\in \mathcal{R}_i$, $i=1,2,\ldots,N$, and $d_{ij}$, $(i,j)=l_1,l_2,\ldots,l_{N_\mathcal{E}}$.
\STATE \textbf{Choose} the public penalty sequence $\{c(t)\}_{t\ge 0}$ for all agents $i=1,2,\ldots,N$.
\FOR{$t=0, 1, 2, \ldots$}
    \FOR{$i=1, 2, \ldots, N$}
    \STATE Agent $i$ receives $\hat{x}_{ij}(t)$ in \eqref{e6} from its neighbour.
    \ENDFOR 
    \STATE Each agent $i$ derives the control input $u_i(t)$ by computing \eqref{e17}, and updates its state $x_i(t+1)$ according to \eqref{e1}.
    \ENDFOR
\end{algorithmic}
\label{alg:dpcc}
\end{algorithm}

\begin{remark}\label{re2}
    In \eqref{e15}, the cooperative error penalty factor $c(t)$ is designated as a globally shared parameter among all agents. $c(t)$ could theoretically be designed as a private parameter $c_i(t)$. However, doing so would expand the scope of the sensitive dataset. Furthermore, given the time-varying nature of $c(t)$, continuously providing privacy guarantees for it across iterations would impose an excessive and redundant privacy burden. Thus, treating $c(t)$ as public information avoids unnecessary noise accumulation. 
\end{remark}

\begin{remark}
The frozen-neighbour formulation is adopted deliberately in the proposed finite-horizon LQR framework (see \eqref{e9} and \eqref{e15}). Agent $i$ has no access to the future inputs or predicted trajectories of its neighbours, and the alternative of exchanging assumed trajectories, as in distributed MPC schemes (e.g., \cite{bemporad2002model}). However, this would require each agent to transmit $T$ predicted states at every time step. Since these predictions depend on the private weighting ratio through $K_{i,t}$, such an exchange would inflate the cumulative privacy budget via composition, in addition to increasing the communication burden.
\end{remark}

\subsection{Convergence Analysis}

Note that $\alpha = 2$ corresponds to an obstacle-free communication environment, which implies that the shadowing effect term $\mathcal{X}_{ij}(t)$ in \eqref{e8} is absent. Consequently, the convergence analysis is conducted separately for the cases of $\alpha = 2$ and $\alpha < 2$. 

First, let the edge-related errors $\xi_{l_k}$ in Problem \ref{pr1} be stacked as $\xi=[\xi^\top_{l_1},\xi^\top_{l_2},\ldots,\xi^\top_{l_{N_\mathcal{E}}}]^\top$. In contrast to the formulation $\eta_{ij}(t) = -\eta_{ji}(t)$ in \cite{ma2026inherent}, the noises $\eta_{ij}(t)$ and $\eta_{ji}(t)$ in this work are mutually independent and drawn from distinct distributions according to \eqref{e8}. Define $\eta_i(t) = \sum\limits_{j \in \mathcal{N}_i} \eta_{ij}(t)$ and stack the aggregated noise term as $\eta=[\eta^\top_{1},\eta^\top_{2},\ldots, \eta^\top_{{N}}]^\top$. Based on \eqref{e1} and \eqref{e17}, the dynamics of $\xi(t)$ are governed by:
\begin{align}\label{e23}
\xi(t+1) &= \xi(t) - c(t)(B\otimes \mathbf{I}_n)K_t(B^\top W\otimes\mathbf{I}_n)\xi(t)\nonumber\\
&\quad +c(t)(B\otimes \mathbf{I}_n)K_t\eta(t).
\end{align} Here, $B$ and $W$ denote the incidence matrix and the edge weight matrix, respectively, while $K_t = \mathrm{diag}\{K_{1,t},\ldots,K_{N,t}\}$. Given that $Q_i$ and $R_i$ are drawn from the finite matrix sets $\mathcal{Q}_i$ and $\mathcal{R}_i$, and that $c(t) \to 0$ as $t \to \infty$, it can be deduced from \eqref{e16} and \eqref{e17} that $\|K_{i,t}\|$ is uniformly bounded for all $i\in\mathcal{V}$ and $t\ge0$. Moreover, since $Q_i$ and $R_i$ are diagonal matrices for all $i\in\mathcal{V}$, their positive definiteness and commutativity guarantee that $K_{i,t}$ is a positive definite matrix, and thus $K_t$ is also positive definite. Define $\Psi_t \triangleq (B\otimes \mathbf{I}_n)K_t(B^\top W\otimes\mathbf{I}_n)$. The norm of $\Psi_t$ is then bounded by a positive constant $\rho_\Psi$ for all $t\ge0$, i.e., $\|\Psi_t\| \leq \rho_\Psi$. Furthermore, considering that $K_t$ is positive definite and $W=\mathbf{I}_{N_\mathcal{E}}$, combined with the full row rank property of $B$ established in Lemma \ref{le1}, it follows that $\Psi_t$ is positive definite. By virtue of the boundedness of $K_{i,t}$, let $\lambda_{\min}^{\Psi} \triangleq \inf\limits_{t > 0} \lambda_{\min}(\Psi_t)<\infty$ denotes the infimum of the minimum eigenvalues of $\Psi_t$ over the time horizon $t\ge0$. The strict positivity $\lambda_{\min}^{\Psi} > 0$ is guaranteed by the positive definiteness of $\Psi_t$.

{The convergence properties of $\xi(t)$ are first established for the case $\alpha = 2$, as part of the resulting analysis will subsequently be used to facilitate the convergence analysis for $\alpha < 2$.} Prior to establishing the main convergence results, the following lemmas are introduced.
\begin{lemma}\cite[Lemma A.4]{li2018distributed}\label{le3}
    Let $\{X(k),\mathcal{F}(k)\}$ be a martingale sequence satisfying $\sup\limits_{k\geq0}\mathbb{E}\|X(k)\|^2< \infty$. Then $X(k)$ converges a.s. and in mean square.
\end{lemma}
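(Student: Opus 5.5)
The plan is to prove the statement directly from orthogonality of martingale increments and Doob's maximal inequality. Because convergence of a vector sequence, both a.s. and in mean square, is equivalent to convergence of each coordinate, one may either work coordinatewise or work directly with $\|\cdot\|$. Both routes go through, since $\|X(k+m)-X(k)\|$ is a nonnegative submartingale in $m$, as explained below.

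\emph{Step 1 (mean-square convergence).} Write $\Delta(j)\triangleq X(j+1)-X(j)$. The martingale property gives $\mathbb{E}[\Delta(j)\mid\mathcal{F}(j)]=0$. Each $X(k)$ is in $L^2$ by hypothesis, so for $i<j$ we have $\mathbb{E}[\Delta(i)^\top\Delta(j)]=\mathbb{E}[\Delta(i)^\top\mathbb{E}[\Delta(j)\mid\mathcal{F}(j)]]=0$. Similarly, $\mathbb{E}[X(0)^\top\Delta(j)]=0$. It follows that
\begin{equation*}
\mathbb{E}\|X(k)\|^2=\mathbb{E}\|X(0)\|^2+\sum_{j=0}^{k-1}\mathbb{E}\|\Delta(j)\|^2 .
\end{equation*}
The left-hand side is bounded uniformly in $k$, so $\sum_{j}\mathbb{E}\|\Delta(j)\|^2<\infty$. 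Using orthogonality again,
\begin{equation*}
\mathbb{E}\|X(k+m)-X(k)\|^2=\sum_{j=k}^{k+m-1}\mathbb{E}\|\Delta(j)\|^2\to0 \quad \text{as } k\to\infty,
\end{equation*}
uniformly in $m$. Hence $\{X(k)\}$ is Cauchy in $L^2$. By completeness of $L^2$ there is an $X_\infty$ with $\mathbb{E}\|X(k)-X_\infty\|^2\to0$. Letting $m\to\infty$ in the display above also gives $\mathbb{E}\|X_\infty-X(k)\|^2=\sum_{j\ge k}\mathbb{E}\|\Delta(j)\|^2$.

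\emph{Step 2 (almost sure convergence).} Fix $k$. The process $Y(m)\triangleq X(k+m)-X(k)$ is a martingale with respect to $\mathcal{F}(k+m)$. Therefore $\|Y(m)\|$ is a nonnegative submartingale: by conditional Jensen, $\mathbb{E}[\|Y(m+1)\|\mid\mathcal{F}(k+m)]\ge\|\mathbb{E}[Y(m+1)\mid\mathcal{F}(k+m)]\|=\|Y(m)\|$. Doob's $L^2$ maximal inequality then gives, for every $\varepsilon>0$,
\begin{equation*}
\mathbb{P}\Big(\sup_{0\le m\le M}\|X(k+m)-X(k)\|\ge\varepsilon\Big)\le\frac{\mathbb{E}\|X(k+M)-X(k)\|^2}{\varepsilon^2}\le\frac{1}{\varepsilon^2}\sum_{j\ge k}\mathbb{E}\|\Delta(j)\|^2 .
\end{equation*}
Let $M\to\infty$ and then $k\to\infty$. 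This shows that $\sup_{m,m'\ge k}\|X(k+m)-X(k+m')\|\to0$ in probability. Since this supremum is monotone nonincreasing in $k$, it also converges to zero a.s. Hence $\{X(k)\}$ is a.s. Cauchy and has an a.s. limit $\tilde X$.

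\emph{Step 3 (identification).} Mean-square convergence implies that a subsequence converges a.s. to $X_\infty$. Therefore $\tilde X=X_\infty$ a.s., so the single random limit is attained both a.s. and in mean square.

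The step requiring the most care is Step 2. One must verify the submartingale property needed for Doob's inequality in the vector case, and pass from convergence in probability of the tail oscillation to a.s. convergence via monotonicity in $k$. Step 1 and Step 3 are routine Hilbert-space and subsequence arguments.
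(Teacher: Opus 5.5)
Your proof is correct. It differs from the paper only because the paper gives no proof at all: the lemma is imported as a known result from \cite{li2018distributed}. The usual textbook proof of such a statement has two steps. First, Doob's martingale convergence theorem (via the upcrossing inequality) gives a.s. convergence from $\sup_k\mathbb{E}\|X(k)\|<\infty$ alone. Second, Doob's $L^2$ maximal inequality $\mathbb{E}[\sup_k\|X(k)\|^2]\le 4\sup_k\mathbb{E}\|X(k)\|^2$ combined with dominated convergence upgrades this to mean-square convergence.

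You instead use only the Hilbert-space structure of $L^2$. Orthogonality of increments gives the $L^2$-Cauchy property. A weak-type maximal inequality applied to the tail martingale $X(k+m)-X(k)$ gives the a.s. Cauchy property. No upcrossing argument is needed.

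Your route is more elementary and yields the explicit tail identity $\mathbb{E}\|X_\infty-X(k)\|^2=\sum_{j\ge k}\mathbb{E}\|\Delta(j)\|^2$, which quantifies how fast the mean-square error vanishes. The classical route is more general, since a.s. convergence needs only $L^1$-boundedness. For how the lemma is used in this paper, where the martingale is explicitly square-integrable, your argument is entirely sufficient.

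A few points to tidy:
\begin{itemize}
\item The displayed bound is not Doob's $L^2$ maximal inequality. It is Doob's weak-type (Kolmogorov) inequality applied to the nonnegative submartingale $\|Y(m)\|^2$. This is a submartingale because $x\mapsto x^2$ is convex and nondecreasing on $[0,\infty)$ and $Y(m)\in L^2$.
\item The oscillation should be indexed as $\sup_{p,q\ge k}\|X(p)-X(q)\|$. You should state explicitly that it is bounded by $2\sup_{m\ge 0}\|X(k+m)-X(k)\|$ via the triangle inequality.
\item When passing $M\to\infty$, work with the event $\{\sup_{m\ge0}\|X(k+m)-X(k)\|>\varepsilon\}$ and continuity from below.
\end{itemize}
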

\begin{lemma}\cite{polyak1987introduction}\label{le4} 
Let $\{x(k)\}$, $\{\beta(k)\}$, and $\{\alpha(k)\}$ be real sequences defined for $k \ge 0$ that satisfy $x(k+1) \le (1 - \beta(k))x(k) + \alpha(k)$. Provided that these sequences and coefficients satisfy $0\le x(k) $, $0 < \beta(k) \le 1$ for $k_1\le k\ (k_1<\infty)$, $\sum\limits_{k=0}^{\infty} \beta(k) = \infty$, and $\lim\limits_{k \to \infty} \frac{\alpha(k)}{\beta(k)} = 0$, then $\lim\limits_{k \to \infty} x(k) = 0$.
\end{lemma}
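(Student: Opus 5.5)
The plan is to prove this directly with an $\varepsilon$-argument, comparing $x(k)$ against a shifted sequence that contracts geometrically along the product $\prod(1-\beta(k))$. Nothing earlier in the paper is needed; only the stated hypotheses are used.

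\textbf{Step 1 (tail reduction).} Fix an arbitrary $\varepsilon>0$. Since $\alpha(k)/\beta(k)\to0$, choose $K\ge k_1$ such that $\alpha(k)\le\varepsilon\beta(k)$ for all $k\ge K$. The term $\alpha(k)$ may be negative, but that only helps. For $k\ge K$ the recursion then gives
\begin{equation*}
x(k+1)\le(1-\beta(k))x(k)+\varepsilon\beta(k).
\end{equation*}

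\textbf{Step 2 (shift and positive part).} Set $y(k)\triangleq x(k)-\varepsilon$. Subtracting $\varepsilon$ from both sides yields $y(k+1)\le(1-\beta(k))y(k)$ for $k\ge K$. Because $0<\beta(k)\le1$, the factor $1-\beta(k)$ is nonnegative. Hence, whatever the sign of $y(k)$, we have $y(k+1)^+\le(1-\beta(k))\,y(k)^+$, where $y^+\triangleq\max\{y,0\}$. This is the one place where the restriction $\beta(k)\le1$ is used, and it is the point to handle carefully: without it the sign of $y$ could flip and the comparison would fail.

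\textbf{Step 3 (product estimate).} Iterating from $K$ and using $1-b\le \mathrm{e}^{-b}$,
\begin{equation*}
y(k)^+\le y(K)^+\prod_{m=K}^{k-1}(1-\beta(m))\le y(K)^+\exp\Bigl(-\sum_{m=K}^{k-1}\beta(m)\Bigr).
\end{equation*}
Since $\sum_{k}\beta(k)=\infty$ and only finitely many terms are discarded, the tail sum also diverges. Hence $y(k)^+\to0$, which gives $\limsup_{k\to\infty}x(k)\le\varepsilon$.

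\textbf{Step 4 (conclusion).} Combining this with $x(k)\ge0$ gives $0\le\liminf_{k\to\infty} x(k)\le\limsup_{k\to\infty} x(k)\le\varepsilon$. Since $\varepsilon>0$ was arbitrary, $\lim_{k\to\infty}x(k)=0$.
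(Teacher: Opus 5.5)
Your proof is correct and complete. The tail reduction $\alpha(k)\le\varepsilon\beta(k)$ for $k\ge K\ge k_1$, the shift $y(k)=x(k)-\varepsilon$, iterating with the nonnegative factors $1-\beta(k)$, and the bound $\prod_{m=K}^{k-1}(1-\beta(m))\le\exp\bigl(-\sum_{m=K}^{k-1}\beta(m)\bigr)\to 0$ give $\limsup_{k\to\infty}x(k)\le\varepsilon$, and nonnegativity of $x(k)$ finishes the argument. The paper does not prove this lemma itself; it cites it from Polyak's book, and your $\varepsilon$-shift argument is the standard proof of this result.
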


\begin{theorem}\label{th2}
  Suppose that Assumptions~\ref{ass1} and~\ref{ass2} hold for the MAS \eqref{e1} with $\alpha=2$ in \eqref{e8}. If the $c(t)$ in \eqref{e15} is selected to satisfy $c(t) \in \ell_1$ and $c(t)>0,\, \forall t\ge0$, then Algorithm~\ref{alg1} ensures that the tracking error $\xi(t)$ converges both a.s. and in mean square.  
\end{theorem}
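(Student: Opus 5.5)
The plan is to work directly with the error recursion \eqref{e23}, written as
\begin{equation*}
\xi(t+1)=(\mathbf{I}-c(t)\Psi_t)\xi(t)+c(t)(B\otimes\mathbf{I}_n)K_t\eta(t).
\end{equation*}
Since $c(t)\in\ell_1$, the drift $\sum_t c(t)\Psi_t\xi(t)$ has only finite total weight, so I do not expect $\xi(t)\to 0$. The goal is convergence to a finite random limit. I will show this by splitting $\xi(t)$ into a drift sum and a martingale sum, and proving that each converges a.s. and in mean square.

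\textbf{Step 1: filtration and conditional noise.} Let $\mathcal{F}_t$ be the $\sigma$-algebra generated by $\{\eta(s)\}_{s<t}$. Then $\xi(t)$ is $\mathcal{F}_t$-measurable. Because $c(t)$ is public and deterministic and $Q_i,R_i$ are fixed, $K_t$ is deterministic. For $\alpha=2$ there is no shadowing ($\mathcal{X}_{ij}=1$). For each edge $(j,i)$ we have $x_i(t)-x_j(t)=\pm(\xi_{l}(t)+d_{l})$, so by \eqref{e8}
\begin{equation*}
\mathbb{E}[\eta(t)\mid\mathcal{F}_t]=0,\qquad \mathbb{E}[\|\eta(t)\|^2\mid\mathcal{F}_t]\le a_1\|\xi(t)\|^2+a_2 .
\end{equation*}
Here $a_1,a_2$ depend only on $z_i,s_j,C_0,n,\max\|d_l\|$ and the node degrees; the bound uses $\|\xi_l+d_l\|^2\le 2\|\xi_l\|^2+2\|d_l\|^2$.

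\textbf{Step 2: uniform second-moment bound.} The matrix $\Psi_t=(B\otimes\mathbf{I}_n)K_t(B^\top\otimes\mathbf{I}_n)$ is symmetric, since $W=\mathbf{I}$ and $K_t$ is symmetric positive definite. Because $c(t)\to0$, there is a $t_1$ such that $c(t)\rho_\Psi\le 1$ for $t\ge t_1$, and hence $\|\mathbf{I}-c(t)\Psi_t\|\le1$. The cross term between $\xi$ and $\eta$ vanishes by conditional zero mean. Squaring the recursion, taking expectations and using Step 1 then gives
\begin{equation*}
\mathbb{E}\|\xi(t+1)\|^2\le(1+b_1c(t)^2)\,\mathbb{E}\|\xi(t)\|^2+b_2c(t)^2 ,
\end{equation*}
with $b_1,b_2$ depending on $\|B\|$ and $\sup_t\|K_t\|$. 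Since $c\in\ell_1$ implies $c\in\ell_2$, iterating yields
\begin{equation*}
\sup_t\mathbb{E}\|\xi(t)\|^2\le\prod_s(1+b_1c(s)^2)\Big(\mathbb{E}\|\xi(t_1)\|^2+b_2\textstyle\sum_s c(s)^2\Big)<\infty .
\end{equation*}
The finitely many steps before $t_1$ are handled by a crude bound, since each step only inflates the second moment by a finite factor. I expect this step to be the main obstacle. The noise variance grows with the state, so the recursion must be closed on itself. It is essential that the state-dependence enters only through $c(t)^2$ and is therefore summable; this is exactly where the restriction $\alpha=2$ is used, because it makes the variance quadratic in $\|\xi\|$.

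\textbf{Step 3: decomposition and convergence of each part.} Write
\begin{equation*}
\xi(t)=\xi(0)-\sum_{s<t}c(s)\Psi_s\xi(s)+M(t),\qquad M(t)=\sum_{s<t}c(s)(B\otimes\mathbf{I}_n)K_s\eta(s).
\end{equation*}

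\emph{Martingale part.} $M(t)$ is an $\mathcal{F}_t$-martingale. Its increments are orthogonal, so
\begin{equation*}
\mathbb{E}\|M(t)\|^2\le C\sum_s c(s)^2\big(a_1\mathbb{E}\|\xi(s)\|^2+a_2\big),
\end{equation*}
which is bounded uniformly in $t$ by Step 2. Lemma~\ref{le3} then gives convergence of $M(t)$ a.s. and in mean square.

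\emph{Drift part.} By Minkowski's inequality,
\begin{equation*}
\Big(\mathbb{E}\Big[\Big(\sum_s c(s)\|\Psi_s\xi(s)\|\Big)^2\Big]\Big)^{1/2}\le\rho_\Psi\sup_s\big(\mathbb{E}\|\xi(s)\|^2\big)^{1/2}\sum_s c(s)<\infty .
\end{equation*}
Hence $\sum_s c(s)\|\Psi_s\xi(s)\|$ is finite a.s., so the drift series converges absolutely a.s. Its tails tend to zero in $L^2$, so it also converges in mean square.

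Combining the two parts, $\xi(t)$ converges a.s. and in mean square to a finite random limit, which is the claim of Theorem~\ref{th2}.
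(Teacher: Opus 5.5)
Your proof is correct, and it takes a genuinely different route from the paper's.

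The paper unrolls the recursion with the transition matrix $\Phi(t,q)=\prod_{j=q}^{t}(\mathbf{I}-c(j)\Psi_j)$ and writes $\xi(t+1)=\Phi(t,0)\xi(0)+M(t)$ with $M(t)=\sum_{q\le t}\Phi(t,q+1)c(q)\mathcal{B}K_q\eta(q)$. This $M(t)$ is not a martingale, since its weights depend on $t$. The paper therefore passes to the auxiliary martingale $\bar M(t)=\Phi(\infty,t+1)M(t)$ and bounds it in $L^2$ with essentially your Step~2 recursion, using the sharper factor $1-2c(t)\lambda^{\Psi}_{\min}+O(c^2(t))$. It then applies Lemma~\ref{le3} to $\bar M$ and transfers the limit back to $M(t)$ via $\Phi(\infty,t)\to\mathbf{I}$, which is where $c\in\ell_1$ is used. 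You instead telescope into an unweighted noise martingale plus a drift series. You then use $c\in\ell_1$ together with $\sup_t\mathbb{E}\|\xi(t)\|^2<\infty$ to make the drift converge absolutely a.s. and in $L^2$.

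Your route is more elementary in several ways:
\begin{itemize}
\item There are no infinite matrix products, so you never need the existence of $\Phi(\infty,q)$.
\item You avoid the paper's a.s. step $(\mathbf{I}-\Phi(\infty,t+1))M(t)\to0$. That step strictly needs more than $\Phi(\infty,t+1)\to\mathbf{I}$, e.g.\ invertibility of $\Phi(\infty,t+1)$ for large $t$, so that $M(t)=\Phi(\infty,t+1)^{-1}\bar M(t)$.
\item You correctly use only the conditional mean and variance of $\eta(t)$ given $\mathcal{F}_t$. The paper asserts that $\eta(t+1)$ is independent of the past, which is not literally true because its variance depends on the state; uncorrelatedness of martingale differences is all that is needed.
\item Your argument shows that convergence needs neither $\lambda^{\Psi}_{\min}>0$ nor $\|\mathbf{I}-c(t)\Psi_t\|\le1$. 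With $c\in\ell_1$, even the cruder factor $(1+c(t)\rho_\Psi)^2$ has a finite product.
\end{itemize}

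What the paper's route buys is an explicit limit, $\xi^*=\Phi(\infty,0)\xi(0)+\sum_{q}\Phi(\infty,q+1)c(q)\mathcal{B}K_q\eta(q)$. From it, $\mathbb{E}[\xi^*]$, $\mathrm{Cov}[\xi^*]$ and the bound on $\mathbb{E}\|\xi^*\|^2$ are read off directly, and that bound is reused for the case $1<\alpha<2$. Your limit $\xi(0)-\sum_s c(s)\Psi_s\xi(s)+M(\infty)$ is implicit. Recovering, say, $\mathbb{E}[\xi^*]=\Phi(\infty,0)\xi(0)$ therefore takes an extra step: take expectations in the recursion and pass to the limit using the $L^2$ convergence.

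One small correction to your commentary: closing the second-moment recursion only needs $\alpha\le 2$, because then the noise variance grows at most quadratically in $\|\xi\|$. The choice $\alpha=2$ specifically just removes the shadowing factor.
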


\begin{proof}
    Following from \eqref{e23},
    \begin{align}\label{e24}
    \xi(t+1)&=(\mathbf{I}_{n\cdot N\mathcal{E}}-c(t)\Psi_t)\xi(t)+c(t)(B\otimes \mathbf{I}_n)K_t\eta(t)\nonumber\\
&=\Phi(t,0)\xi(0)+\sum\limits_{q=0}^t\Phi(t,q+1)c(q)\mathcal{B}K_q\eta(q),\end{align}where the state transition matrix is defined as $\Phi(t,q)\triangleq\prod\limits_{j=q}^t(\mathbf{I}_{n\cdot N_\mathcal{E}}-c(j)\Psi_j)$ for $t\ge q$, with $\Phi(t,t+1)=\mathbf{I}$ and $\mathcal{B}=B\otimes \mathbf{I}_n$. The first term of \eqref{e24} is deterministic and depends on the initial state, whereas
the second term $$M(t)=\sum\limits_{q=0}^t\Phi(t,q+1)c(q)\mathcal{B}K_q\eta(q),$$
is stochastic. To facilitate the convergence analysis, the auxiliary process $\bar{M}(t)\triangleq \Phi(\infty, t+1) M(t)=\sum\limits_{q=0}^t \Phi(\infty, q+1) c(q)\mathcal{B}K_q \eta(q)$ is introduced since $\Phi(\infty, q+1) = \Phi(\infty, t) \Phi(t, q+1)$. It is now established that the sequence $\{\bar{M}(t)\}_{t\ge0}$ constitutes a martingale with respect to the natural filtration $\mathcal{F}_t=\sigma(\eta(0),\ldots,\eta(t))$. Since $c(q)$, $\Psi_q$, and $\Phi(\infty,q+1)$ are deterministic and uniformly bounded, the sequence is integrable at any given step (i.e., $\mathbb{E}[\|\bar{M}(t)\|] < \infty$). Second, note that $\eta(t+1)$ is zero-mean and independent of the history $\mathcal{F}_t$, there holds $\mathbb{E}[\bar{M}(t+1)\mid\mathcal{F}_t]=\bar{M}(t)$. With the martingale property verified, the uniform boundedness of this sequence is subsequently analysed as follows:

\begin{align}\label{e25}
    &\sup\limits_{t\geq0}\mathbb{E}[\|\bar{M}(t)\|^2]=\sup\limits_{t\geq0}\mathbb{E}[\bar{M}^\top(t)\bar{M}(t)] \nonumber\\
    &=\sup\limits_{t\geq0}\sum\limits_{q=0}^t\!\mathbb{E}[c^2(q)\eta^\top\!(q)K^\top_q\mathcal{B}^\top\Phi^\top\!(\infty,q\!+\!1)\Phi(\infty,q\!+\!1)\mathcal{B}K_q\eta(q)] \nonumber\\
    &\leq \sup\limits_{t\geq0} \sum\limits_{q=0}^t\rho^2_{K\mathcal{B}} c^2(q)\|\Phi(\infty,q+1)\|^2\mathbb{E}[\eta^\top(q)\eta(q)],
\end{align} where $\|K_q\|\|\mathcal{B}\|\le\rho_{K\mathcal{B}}$, $\forall q\ge0$. The second equality holds because for any $q_1 \neq q_2$, $\eta(q_1)$ and $\eta(q_2)$ are mutually independent and satisfy $\mathbb{E}[\eta(q_1)]=\mathbb{E}[\eta(q_2)]=0$. Since $\Psi_t$ is positive definite and $c(t)\rightarrow0$ as $t \to \infty$, a time instant $t_1 > 0$ can be identified such that for all $t \geq t_1$, the matrix $\mathbf{I}-c(t)\Psi_t$ remains positive definite and $\|\mathbf{I}-c(t)\Psi_t\| < 1$. It subsequently follows that $\|\Phi(\infty,t_1)\|^2 \leq \prod_{j=t_1}^\infty \|\mathbf{I}_{nN_\mathcal{E}}-c(j)\Psi_j\|^2 < 1$. This inequality guarantees the existence of a uniform bound $\rho_\Phi > 0$ such that $\|\Phi(t,q)\| \leq \rho_\Phi$ holds for all $t \ge q\ge0$. Based on \eqref{e8}, with the shadowing effect $\mathcal{X}_{ij}(t)$ omitted and $\alpha=2$,
\begin{align}\label{e26}
        \mathbb{E}[\eta^\top(t)\eta(t)]&=\mathbb{E}[\sum\limits_{i={1}}^{N}\eta_i^\top(t)\eta_i(t)]\nonumber\\
        &=\mathbb{E}[\sum\limits_{i={1}}^{N}\sum\limits_{j\in\mathcal{N}_i}\eta_{ij}^\top(t)\eta_{ij}(t)]\nonumber\\
        &=n \sum\limits_{i={1}}^{N}\sum\limits_{j\in\mathcal{N}_i}\mathbb{E}[\frac{z_{i}\|x_i(t)-x_j(t)\|^2}{C_0^2}+s_{j}]\nonumber\\
        &\leq n\sum\limits_{i={1}}^{N}\sum\limits_{j\in\mathcal{N}_i}\mathbb{E}[\frac{2z_{i}(\|\xi_{ij}(t)\|^2+\|d_{ij}\|^2)}{C_0^2}+s_{j}]\nonumber\\
        &= n\!\sum\limits_{(i,j)\in\mathcal{E}}\!\mathbb{E}[\frac{2(z_i\!+\!z_j)(\|\xi_{ij}(t)\|^2\!+\!\|d_{ij}\|^2)}{C_0^2}\!+\!s_{i}\!+\!s_{j}]\nonumber\\
        &\leq  \frac{2nzd^\top d+2nz\mathbb{E}[\xi^\top(t)\xi(t)]}{C_0^2}+nsN_\mathcal{E}
    \end{align}
 where the second equality holds since $\eta_{ij}$ are mutually independent across all links $(i,j)$, the fourth inequality follows from the triangle inequality and $z = \max \{z_i + z_j \mid (i,j) \in \mathcal{E}\}$, $s = \max \{s_i + s_j \mid (i,j) \in \mathcal{E}\}$, and $d=[d^\top_{l_1},\ldots,d^\top_{l_{N_\mathcal{E}}}]^\top$.

 Following from \eqref{e24}, 
 \begin{align}\label{e27}
\xi^\top(t+1)&\xi(t+1)=\xi^\top(t)\xi(t)-2c(t)\xi^\top(t)\Psi_t\xi(t)\nonumber\\
&+c^2(t)\xi^\top(t)\Psi_t^\top\Psi_t\xi(t)+c^2(t)\eta^\top(t)K_t^\top\mathcal{B}^\top\mathcal{B}K_t\eta(t)\nonumber\\
     &+c(t)\xi^\top(t)(\mathbf{I}_{n\cdot N\mathcal{E}}-c(t)\Psi_t^\top)\mathcal{B}K_t\eta(t)\nonumber\\
&+c(t)\eta^\top(t)K_t^\top\mathcal{B}^\top(\mathbf{I}_{n\cdot N\mathcal{E}}-c(t)\Psi_t^\top)\xi(t).
 \end{align}
By taking the expectation operator to both sides and applying the Double Expectation Theorem \cite{billingsley2012probability}, there holds 
\begin{align*}
    &\mathbb{E}\left[\mathbb{E}[c(t)\xi^\top(t)(\mathbf{I}_{n\cdot N\mathcal{E}}-c(t)\Psi_t^\top)\mathcal{B}K_t\eta(t)]\mid\xi(t)\right] = 0,\\
    &\mathbb{E}\left[\mathbb{E}[c(t)\eta^\top(t)K_t^\top\mathcal{B}^\top(\mathbf{I}_{n\cdot N\mathcal{E}}-c(t)\Psi_t^\top)\xi(t)]\mid\xi(t)\right] = 0.
\end{align*} Consequently,
 \begin{align}\label{e28}
     &\mathbb{E}[\xi^\top(t+1)\xi(t+1)]
     =\mathbb{E}[\xi^\top(t)\xi(t)]-2c(t)\mathbb{E}[\xi^\top(t)\Psi_t\xi(t)]\nonumber\\
     &\quad+c^2(t)\mathbb{E}[\xi^\top(t)\Psi_t^\top\Psi_t\xi(t)]+c^2(t)\mathbb{E}[\eta^\top(t)K_t^\top\mathcal{B}^\top\mathcal{B}K_t\eta(t)]\nonumber\\
     &\leq \!(1\!-\!2c(t)\lambda_{\min}^{\Psi}\!+\!c^2(t)\rho_{\Psi}^2)\mathbb{E}[\xi^\top(t)\xi(t)]\!+\!c^2(t)\rho_{K\mathcal{B}}^2\mathbb{E}[\eta^\top(t)\eta(t)].
 \end{align}
 Substituting \eqref{e26} into \eqref{e28},
 \begin{align}\label{e29}
        &\mathbb{E}[\xi^\top(t\!+\!1)\xi(t\!+\!1)]\nonumber\\
        &\leq(1\!-\!2c(t)\lambda_{\min}^{\Psi}+c^2(t)\rho_{\Psi}^2
    +\frac{2nzc^2(t)\rho_{K\mathcal{B}}^2}{C_0^2}) \mathbb{E}[\xi^\top(t)\xi(t)]\nonumber\\
    &\quad+\frac{2nz c^2(t)\rho_{K\mathcal{B}}^2d^\top d}{C_0^2}+nsN_\mathcal{E}c^2(t)\rho_{K\mathcal{B}}^2\nonumber\\ 
    &\leq\! \phi(t,0)\xi^\top\!(0)\xi(0)\!\!+\!\!\sum\limits_{q=0}^t\phi(t,q\!+\!1)c^2(q)\rho_{K\mathcal{B}}^2[\frac{2nz d^\top d}{C_0^2}+nsN_\mathcal{E}],
    \end{align}
where $\phi(t,q)=\prod\limits_ {j=q} ^t(1\!-\!2c(j)\lambda_{\min}^{\Psi}\!+\!c^2(j)\rho_{\Psi}^2
    +\frac{2nzc^2(j)\rho_{K\mathcal{B}}^2}{C_0^2})$, when $t\ge q\ge0$, and $\phi(t,t+1)=1$. Applying the inequality $1+w \le e^{w},\,\forall w\in\mathbb{R}$, for all $t\ge q\ge 0$
\begin{multline}\label{e30}
\phi(t,q)
=\prod_{j=q}^{t}\!\left(1 - 2c(j)\lambda_{\min}^{\Psi} + c^2(j)\rho_{\Psi}^2 + \frac{2nzc^2(j)\rho_{K\mathcal{B}}^2}{C_0^2}\right)
\\
\leq
\exp\!\Bigg(\sum_{j=q}^{t}\!\!\left[-2c(j)\lambda_{\min}^{\Psi} + c^2(j)\rho_{\Psi}^2 + \frac{2nzc^2(j)\rho_{K\mathcal{B}}^2}{C_0^2}\right]\!\Bigg).
\end{multline}
The validity of this inequality also relies on the fact that
\begin{align*}
    1&-2c(j)\lambda_{\min}^{\Psi}+c^2(j)\rho_{\Psi}^2
    +\frac{2nzc^2(j)\rho_{K\mathcal{B}}^2}{C_0^2}\\
    & =(1-c(j)\lambda_{\min}^{\Psi})^2\!+\!c^2(j)(\rho_{\Psi}^2-(\lambda_{\min}^{\Psi})^2)\!+\!\frac{2nzc^2(j)\rho_{K\mathcal{B}}^2}{C_0^2}>0,
\end{align*}where $\rho_{\Psi}^2-(\lambda_{\min}^{\Psi})^2\ge0$ given $\lambda_{\min}^{\Psi}\le\|\Psi(t)\|\le{\rho}_{\Psi}, \forall t\ge0$.
It follows from the conditions $c(t)\in\ell_1$ in Theorem \ref{th2} that $\sum_{j=0}^{\infty} c^2(j) < \infty$ since $\ell_1\subset\ell_2$. Together with $c(t)>0$ and $c(t)\to0$ as $t\to \infty$, this ensures that there exists a constant $\rho_{\phi}>0$ such that
\[
\phi(t,q)\le \rho_{\phi},\quad \forall\, t\ge q\ge0.
\]
Combining \eqref{e29}, \eqref{e30} and $\sum_{j=0}^{\infty} c^2(j)\!<\!\infty$, yields 
\begin{align*}
    \mathbb{E}&[\xi^\top(t\!+\!1)\xi(t\!+\!1)]\leq \rho_\xi<\infty,\, \forall t\ge0,
\end{align*} 
where 
\begin{align}\label{e31}
    \rho_\xi\triangleq\rho_{\phi}\xi^\top\!(0)\xi(0)+\rho_{\phi}\rho_{K\mathcal{B}}^2[\frac{2nz d^\top d}{C_0^2}+nsN_\mathcal{E}]\sum\limits_{q=0}^\infty c^2(q).
\end{align} 
Substituting this relation back into \eqref{e26} and \eqref{e25} leads to the result
\begin{align*}
    \sup\limits_{t\geq0}\mathbb{E}[\|\bar{M}(t)\|^2]< \infty.
\end{align*}
By Lemma \ref{le3}, $\bar{M}(t)$ converges a.s. and in mean square to a finite random limit $$M^*=\sum_{q=0}^{\infty}\Phi(\infty,q+1)c(q)\mathcal{B}K_q\eta(q).$$

It is now demonstrated that the deviation $M(t)-\bar{M}(t)$ converges to $0$ a.s. and in mean square. By substituting the definitions of $M(t)$ and $\bar{M}(t)$, $M(t)-\bar{M}(t)= (\mathbf{I}-\Phi(\infty, t)) M(t)$. Similarly, following the steps from \eqref{e25} to \eqref{e31} yields  
\begin{align*}
    \sup\limits_{t\geq0}\mathbb{E}[\|{M}(t)\|^2]< \infty.
\end{align*}
The limit in mean square is given 
\begin{align*}
     \lim_{t \to \infty}& \mathbb{E}[ \|M(t) - \bar{M}(t)\|^2 ]\\ &\leq \lim_{t \to \infty} \|\mathbf{I}-\Phi(\infty, t)\|^2\sup\limits_{t\geq0}\mathbb{E}[\|{M}(t)\|^2]\\
     &=\lim_{t \to \infty}\left\| \mathbf{I}-\prod_{j=t}^{\infty} (\mathbf{I} - c(j)\Psi_j)\right \|^2\sup\limits_{t\geq0}\mathbb{E}[\|{M}(t)\|^2]\\
     &=\lim_{t \to \infty}\left\|\sum_{j=t}^{\infty} c(j)\Psi_j \prod_{k=t}^{j-1} (\mathbf{I} - c(k)\Psi_k)\right\|^2\sup\limits_{t\geq0}\mathbb{E}[\|{M}(t)\|^2]\\
     &\leq\lim_{t \to \infty}\left(\sum_{j=t}^{\infty} c(j)\right)^2\rho_{\Psi}^2\sup\limits_{t\geq0}\mathbb{E}[\|{M}(t)\|^2],
\end{align*}
where the fourth inequality is obtained by noting that $\|\mathbf{I}-c(t)\Psi_t\| < 1$ for a sufficiently large $t>t_1$. Given that $c(t) \in \ell_1$, it holds that $\lim\limits_{t \to \infty}\left(\sum\limits_{j=t}^{\infty} c(j)\right)^2 = 0$, from which it follows directly that $\lim\limits_{t \to \infty} \mathbb{E}\left[ \|M(t) - \bar{M}(t)\|^2 \right] = 0$. Applying Minkowski's inequality confirms that $M(t)$ converges in mean square to the identical limit $M^*$. On the other hand, given the relation $\bar{M}(t) = \Phi(\infty,t+1) M(t)$, where $\Phi(\infty,t+1)$ is deterministic, the a.s. convergence of $\bar{M}(t)$ to $M(t)$ follows immediately from the condition $c(t) \in \ell_1$, which  guarantees that $\lim\limits_{t\to \infty}\Phi(\infty,t+1) = \lim\limits_{t\to \infty}\prod\limits_{j=t}^\infty(\mathbf{I}-c(j)\Psi_j) = \mathbf{I}$. By invoking the additivity of almost sure limits, ${M}(t)$ converges to ${M}^*$ almost surely. Consequently, it is established that ${M}(t)$ converges to ${M}^*$ a.s. and in mean square.

Synthesising the aforementioned analyses, from \eqref{e24},
\begin{equation}\label{e32}
\xi(t)\to \xi^* := \Phi(\infty,0)\,\xi(0) + M^*, \quad \text{a.s.\ and in mean square,}
\end{equation}
where $\Phi(\infty,0)$ exists 
since $\prod_{j=0}^\infty (I - c(j)\Psi_j)$ converges under $c(j)\in\ell_1$ and $\Psi_j\succ0$. More specifically, the properties of the limiting variable $\xi^*$ are characterised as follows.

The convergence of the first-order moment is addressed first. 
\begin{align*}
    \mathbb{E}[\xi^*]&=\Phi(\infty,0)\xi(0)+\mathbb{E}[M^*]=\Phi(\infty,0)\xi(0).
\end{align*} Moreover, since $c(t) \to 0$ as $t \to \infty$, there exists a finite integer $t_2 \geq 0$ such that $c(t)\lambda_{\min}^{\Psi} \leq 1$ for all $t \geq t_2$. Given that the inequality $1 - x \leq \exp(-x)$ holds for $x \in (0, 1]$, it follows that:
\begin{align*}
    \|\mathbb{E}[\xi^*]\|&\leq \|\Phi(\infty,0)\,\|\cdot\|\xi(0)\|\\
   & \leq \exp\left( -\lambda_{\min}^{\Psi} \sum\limits_{j=t_2}^\infty c(j)\right)\|\xi(0)\|\prod_{j=0}^{t_2-1} \|I - c(j)\Psi_j\|.
\end{align*}

Proceeding to the steady-state covariance
\begin{align*}
   &\mathrm{Cov}[\xi^*]= \mathbb{E}[(\xi^*-\mathbb{E}[\xi^*])(\xi^*-\mathbb{E}[\xi^*])^\top]\\
&=\mathbb{E}\left[\sum_{q=0}^{\infty}c^2(q)\Phi(\infty,q\!+\!1)\mathcal{B}K_q\eta(q)\eta^\top(q)K_q^\top\mathcal{B}^\top\Phi^\top(\infty,q\!+\!1)\right]\\
&=\sum_{q=0}^{\infty}c^2(q)\Phi(\infty,q\!+\!1)\mathcal{B}K_q\mathrm{Cov}[\eta(q)]K_q^\top\mathcal{B}^\top\Phi^\top(\infty,q\!+\!1),
\end{align*}where 
\begin{align*}
    \mathrm{Cov}[\eta(q)]=\mathrm{diag}\{\mathrm{Cov}[\eta_{1}(q)],\, \mathrm{Cov}[\eta_{2}(q)],\, \ldots,\,\mathrm{Cov}[\eta_{N}(q)]\},
\end{align*} as $\mathbb{E}[\eta_m(t)\eta_n^\top(t)]=0, \forall n\neq m, n,m\in\{1,2,\ldots,N\}$. Moreover, $\mathrm{Cov}[\eta_{i}(q)]=\sum\limits_{j\in\mathcal{N}_i}(\frac{z_i\|x_i(t) - x_j(t)\|^{2}}{C_0^2}+s_j)\mathbf{I}, \forall i\in \mathcal{V}$. Observing that $\mathbb{E}[\xi^\top(t)\xi(t)]$ is bounded, it follows by definition that $\mathbb{E}[\|x_i(t) - x_j(t)\|^2]$ is bounded $\forall(i,j)\in\mathcal{E}$, which, together with the condition $c(q) \in \ell_2$ since $\ell_1\subset\ell_2$, confirms that 
\begin{align*}
    \mathrm{Cov}[\xi^*]<\infty.
\end{align*}

Lastly, the mean square deviation is examined.
\begin{align*}
    \mathbb{E}[\|\xi^*\|^2] &= \mathbb{E}[\xi^{*\top}\xi^*]\\
    &=\mathbb{E}[\xi^\top(0)\Phi^\top(\infty,0)\Phi(\infty,0)\xi(0)]+\mathbb{E}[M^{\ast\top}M^\ast]\\
    &\leq \exp\left( -2\lambda_{\min}^{\Psi} \sum\limits_{j=t_2}^\infty c(j)\right)\|\xi(0)\|^2\prod_{j=0}^{t_2-1} \|I - c(j)\Psi_j\|^2\\
    &+\rho^2_{K\mathcal{B}} \rho_\Phi^2\left[\frac{2nzd^\top d+2nz\rho_\xi}{C_0^2}\!+\!nsN_\mathcal{E}\right]\sum\limits_{j=0}^\infty c^2(j)<\infty,
\end{align*}
where $\rho_\xi$ is defined in \eqref{e31}. The proof is completed.
\end{proof}

\begin{corollary} \label{co1}
    Under the conditions of Theorem~\ref{th2}, if the penalty factor is selected such that $c(t)\in\ell_2$, $c(t) \notin \ell_1$ and $c(t)\to 0, t\to\infty$, then $\xi(t)$ converges to $0$ in mean square.
\end{corollary}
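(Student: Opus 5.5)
The plan is to reuse the mean-square recursion \eqref{e29} from the proof of Theorem~\ref{th2} and close it with Lemma~\ref{le4}. Set $x(t)\triangleq\mathbb{E}[\xi^\top(t)\xi(t)]\ge 0$. By \eqref{e28} and \eqref{e26} (still with $\alpha=2$), \eqref{e29} gives
\begin{align*}
x(t+1)\le\Big(1-2c(t)\lambda_{\min}^{\Psi}+c^2(t)\rho_\Psi^2+\frac{2nzc^2(t)\rho_{K\mathcal{B}}^2}{C_0^2}\Big)x(t)+c^2(t)\rho_{K\mathcal{B}}^2\Big[\frac{2nzd^\top d}{C_0^2}+nsN_\mathcal{E}\Big].
\end{align*}
This recursion does not depend on $c\in\ell_1$; it only uses the uniform bounds $\rho_\Psi$, $\rho_{K\mathcal{B}}$ and $\lambda_{\min}^\Psi>0$. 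Those bounds came from the finiteness of $\mathcal{Q}_i,\mathcal{R}_i$, from $c(t)\to0$, and from Lemma~\ref{le1}, so they remain valid in the present setting.

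Next I will cast the recursion in the form of Lemma~\ref{le4}. Set $\beta(t)\triangleq c(t)\lambda_{\min}^{\Psi}$ and $\alpha(t)\triangleq c^2(t)\rho_{K\mathcal{B}}^2[2nzd^\top d/C_0^2+nsN_\mathcal{E}]$. Write $\gamma\triangleq\rho_\Psi^2+2nz\rho_{K\mathcal{B}}^2/C_0^2$, so the coefficient of $x(t)$ equals $1-2\beta(t)+\gamma c^2(t)$. Because $c(t)\to0$, there is a finite $k_1$ such that for all $t\ge k_1$ both $\gamma c^2(t)\le c(t)\lambda_{\min}^\Psi$ and $0<c(t)\lambda_{\min}^\Psi\le 1$ hold. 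For $t\ge k_1$ the coefficient is then at most $1-\beta(t)$. Since $x(t)\ge0$, this yields $x(t+1)\le(1-\beta(t))x(t)+\alpha(t)$ with $0<\beta(t)\le1$. I then check the remaining hypotheses of Lemma~\ref{le4}:
\begin{itemize}
\item $\sum_t\beta(t)=\lambda_{\min}^\Psi\sum_t c(t)=\infty$, because $c\notin\ell_1$ and $c>0$.
\item $\alpha(t)/\beta(t)=O(c(t))\to0$.
\end{itemize}
Lemma~\ref{le4} then gives $x(t)\to0$, i.e., $\xi(t)\to0$ in mean square. The values of $x$ on the finite initial segment $t<k_1$ are finite by iterating the recursion, so the Lemma can be applied from $k_1$ onward.

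The main subtlety is the first step. Absorbing the quadratic term $\gamma c^2(t)$ into the linear decrease needs a uniform positive lower bound on $\lambda_{\min}(\Psi_t)$. The proof of Theorem~\ref{th2} supplies $\lambda_{\min}^\Psi>0$, and I will invoke that directly. The condition $c\in\ell_2$ is not strictly needed for Lemma~\ref{le4}; only $c\to0$ is used. However, $c\in\ell_2$ keeps the interpretation consistent with Theorem~\ref{th2}, and it guarantees that $x(t)$ stays uniformly bounded before the contraction takes over. I will note this remark in the write-up.
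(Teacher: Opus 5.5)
Your proposal is correct and follows the same route as the paper's proof. Both reuse the mean-square recursion \eqref{e29} and close it with Lemma~\ref{le4}, using $c(t)\notin\ell_1$ to get $\sum_t\beta(t)=\infty$ and $c(t)\to0$ to get $\alpha(t)/\beta(t)\to0$. The only difference is cosmetic. The paper takes $\beta(t)$ to be the full decrement $2c(t)\lambda_{\min}^{\Psi}-c^2(t)\rho_\Psi^2-2nzc^2(t)\rho_{K\mathcal{B}}^2/C_0^2$, whereas you absorb the quadratic terms to obtain the cleaner $\beta(t)=c(t)\lambda_{\min}^{\Psi}$. Your version also makes explicit your correct observation that the hypothesis $c(t)\in\ell_2$ is not actually used.
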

\begin{proof}
Given $c(t)\to 0$, there exists a sufficiently large $t_3>0$ such that $\forall t\ge t_3$, $2\lambda_{\min}^{\Psi}-c(t)\rho_{\Psi}^2-\frac{2nzc(t)\rho_{K\mathcal{B}}^2}{C_0^2}>0$ and $c(t)(2\lambda_{\min}^{\Psi}-c(t)\rho_{\Psi}^2-\frac{2nzc(t)\rho_{K\mathcal{B}}^2}{C_0^2})\leq1$. Note that $\sum\limits_{t=0}^\infty2c(t)\lambda_{\min}^{\Psi}-c^2(t)\rho_{\Psi}^2-\frac{2nzc^2(t)\rho_{K\mathcal{B}}^2}{C_0^2}=\infty$ and $\frac{{2nz c(t)\rho_{K\mathcal{B}}^2d^\top d}+nsN_\mathcal{E}c(t)C_0^2\rho_{K\mathcal{B}}^2}{2C_0^2\lambda_{\min}^{\Psi}-c(t)C_0^2\rho_{\Psi}^2-{2nzc(t)\rho_{K\mathcal{B}}^2}}\to0$ as $t\to\infty$ since $c(t)\in \ell_2,\notin\ell_1$ and $c(t)\to0, t\to\infty$. Recall \eqref{e29} and follow Lemma \ref{le4}, $\lim\limits_{t\to \infty} \mathbb{E}[\xi^\top(t\!+\!1)\xi(t\!+\!1)]=0$. The proof is completed.
\end{proof} 
Next, the convergence of the system is analysed for the case where $1<\alpha < 2$. Given that the distribution presented in \eqref{e8} is conditional on $\mathcal{X}_{ij}(t)$, the expectation and variance of $\eta_{ij}(t)$ are analysed first.
\begin{lemma}\label{le5}
    Consider a scalar random variable $y$ and a random vector $x \in \mathbb{R}^n$. Suppose $y$ is log-normally distributed with $\ln(y) \sim \mathcal{N}(0, b^2)$, and the conditional distribution of $x$ given $y$ is
    \begin{align*}
        x \mid y \sim \mathcal{N}\left(0, \left(\frac{a}{y^2} + c\right)\mathbf{I}\right),
    \end{align*}
    where $a, b, c > 0$ are positive constants. Then, the unconditional expectation and covariance of $x$ are given by:
    \begin{align*}
        \mathbb{E}[x] = 0, \quad \mathrm{Cov}[x] = (a e^{2b^2} + c)\mathbf{I}.
    \end{align*}
\end{lemma}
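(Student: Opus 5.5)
We compute both moments by conditioning on $y$ and then averaging over the log-normal law of $y$, i.e.\ via the tower property. The key steps are as follows.

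\emph{Integrability.} We first check that $\mathbb{E}\|x\|^2<\infty$, so that the tower property can be applied to first and second moments. Conditionally on $y$, we have $\mathbb{E}[\|x\|^2\mid y]=n(a/y^2+c)$. Hence
\begin{equation*}
\mathbb{E}\|x\|^2=n\left(a\,\mathbb{E}[y^{-2}]+c\right).
\end{equation*}
This is finite provided $\mathbb{E}[y^{-2}]<\infty$, which is shown in the third step.

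\emph{Mean.} By the first step, $\mathbb{E}\|x\|<\infty$, so $\mathbb{E}[x]=\mathbb{E}\big[\mathbb{E}[x\mid y]\big]$. Since $\mathbb{E}[x\mid y]=0$ almost surely, it follows that $\mathbb{E}[x]=0$.

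\emph{Covariance.} Because the mean is zero, $\mathrm{Cov}[x]=\mathbb{E}[xx^\top]=\mathbb{E}\big[\mathbb{E}[xx^\top\mid y]\big]$. Using the conditional law, this equals
\begin{equation*}
\mathbb{E}\!\left[\left(\frac{a}{y^2}+c\right)\mathbf{I}\right]=\left(a\,\mathbb{E}[y^{-2}]+c\right)\mathbf{I}.
\end{equation*}
It remains to evaluate $\mathbb{E}[y^{-2}]$. Write $y=e^{w}$ with $w\sim\mathcal{N}(0,b^2)$. Then $y^{-2}=e^{-2w}$, and the Gaussian moment generating function $\mathbb{E}[e^{sw}]=e^{s^2b^2/2}$ with $s=-2$ gives
\begin{equation*}
\mathbb{E}[y^{-2}]=e^{2b^2}<\infty.
\end{equation*}
Substituting this value yields $\mathrm{Cov}[x]=(ae^{2b^2}+c)\mathbf{I}$, as claimed. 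It also confirms the finiteness needed in the first step.

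The only delicate point is justifying the interchange of conditioning and expectation, which requires unconditional integrability of $x$ and $xx^\top$. This is settled by the finiteness of the log-normal negative moment $\mathbb{E}[y^{-2}]$. Everything else is a direct computation.
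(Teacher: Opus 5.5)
Your proof is correct and takes essentially the same route as the paper. The paper uses iterated expectations for the mean, the law of total covariance (whose second term vanishes because $\mathbb{E}[x\mid y]=0$) for the covariance, and the log-normal moment formula with $k=-2$ to get $\mathbb{E}[y^{-2}]=e^{2b^2}$; your explicit integrability check via $\mathbb{E}\|x\|^2=n(a\,\mathbb{E}[y^{-2}]+c)<\infty$ is a small rigor point the paper leaves implicit.
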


By applying Lemma \ref{le5}, the expectation and covariance of $\eta_{ij}(t)$ are obtained as:
\begin{align}\label{e33}
    \mathbb{E}[\eta_{ij}(t)]&=0,\nonumber\\
    \mathrm{Cov}[\eta_{ij}(t)]&=\left(\frac{z_i \|x_i(t) - x_j(t)\|^{\alpha}\mathrm{e}^{2(\frac{\ln 10}{20}\sigma_{dB})^2}}{C_0^2}\!+\!s_j\right)\mathbf{I}.
\end{align} Define $\mathcal{X}=\big[\mathcal{X}_{l_1},\mathcal{X}_{l_2},\ldots, \mathcal{X}_{l_{N_\mathcal{E}}},\eta_{{l_1}_s},\eta_{{l_2}_s},\ldots,\eta_{{l_{N_\mathcal{E}}}_s},\eta_{{l_1}_z},\eta_{{l_2}_z},$ $\ldots,\eta_{{l_{N_\mathcal{E}}}_z}\big]$ and the natural filtration $\mathcal{F'}_t=\sigma(\mathcal{X}(0),\ldots,\mathcal{X}(t))$. The term $\mathcal{F}'_t$ represents the $\sigma$-algebra generated by the random variables up to time $t$, representing the complete information history of the channel environment. The convergence properties of $\xi(t)$ for the case where $1<\alpha < 2$ are characterised as follows.
\begin{proposition}
Suppose that Assumptions \ref{ass1} and \ref{ass2} hold for the MAS described by \eqref{e1}. With $1<\alpha < 2$ in \eqref{e8}, if the penalty factor $c(t)$ in \eqref{e15} is chosen to be strictly positive and $c(t)\in \ell_1$, then under Algorithm 1, the tracking error $\xi(t)$ converges to $\xi^*$ in \eqref{e32} a.s. and in mean square. More specifically,
\begin{align}\label{e34}
     \mathbb{E}[\xi^*]&=\Phi(\infty,0)\xi(0),\nonumber\\
     \mathbb{E}[\|\xi^*\|^2] &\leq \exp\left( -2\lambda_{\min}^{\Psi} \sum\limits_{j=t_2}^\infty c(j)\right)\|\xi(0)\|^2\prod_{j=0}^{t_2-1} \|I - c(j)\Psi_j\|^2\nonumber\\&\quad+\rho^2_{K\mathcal{B}} \rho_\Phi^2\sum\limits_{j=0}^\infty c^2(j)\bigg[ nsN_\mathcal{E} +\frac{2^{\alpha-1}nz\mathrm{e}^{2(\frac{\ln 10}{20}\sigma_{dB})^2}}{C_0^2}\nonumber \\
    & \quad \times \left((d^\top d)^{\frac{\alpha}{2}}+{\rho'_{\xi}}^{\frac{\alpha}{2}}\right) \bigg],
\end{align}
where $\rho'_\xi$ is the upper bound of $\mathbb{E}[\xi^\top(t)\xi(t)]$ when $1<\alpha<2$.
\end{proposition}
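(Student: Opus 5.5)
The plan is to follow the proof of Theorem~\ref{th2} almost verbatim, because the error dynamics \eqref{e23}--\eqref{e24} are unchanged. The only difference is the second-moment bound on the aggregated noise, which must now account for the shadowing factor and the exponent $\alpha<2$. Write $\xi(t)=\Phi(t,0)\xi(0)+M(t)$ and define $\bar M(t)=\Phi(\infty,t+1)M(t)$ as before. With respect to the enlarged filtration $\mathcal{F}'_t$, the conditional law of $\eta(t+1)$ given $\mathcal{F}'_t$ and the shadowing realisation is zero-mean Gaussian by \eqref{e8}. Moreover, $\mathcal{X}(t+1)$ is independent of $\mathcal{F}'_t$. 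Hence $\mathbb{E}[\eta(t+1)\mid\mathcal{F}'_t]=0$ by the tower property, and $\bar M(t)$ is again a martingale. The cross terms in \eqref{e27} vanish for the same reason, and \eqref{e28} holds unchanged.

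The key new step replaces \eqref{e26}. Apply Lemma~\ref{le5} conditionally on $x(t)$, with $a=z_i\|x_i(t)-x_j(t)\|^\alpha/C_0^2$, $b=\frac{\ln10}{20}\sigma_{dB}$ and $c=s_j$, which yields \eqref{e33}. Then bound $\|x_i-x_j\|^\alpha=(\|\xi_{ij}+d_{ij}\|^2)^{\alpha/2}$ using convexity of $r\mapsto r^{\alpha/2}$ for $\alpha>1$... Since $\alpha/2<1$, concavity is the relevant property. The cleanest route is $\|\xi_{ij}+d_{ij}\|^\alpha\le 2^{\alpha-1}(\|\xi_{ij}\|^\alpha+\|d_{ij}\|^\alpha)$, which holds because $r\mapsto r^\alpha$ is convex for $\alpha>1$. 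Jensen's inequality for the concave map $r\mapsto r^{\alpha/2}$ then gives $\mathbb{E}\|\xi_{ij}\|^\alpha\le(\mathbb{E}\|\xi_{ij}\|^2)^{\alpha/2}$. Summing over edges gives
\begin{align*}
\mathbb{E}[\eta^\top(t)\eta(t)]\le nsN_{\mathcal E}+\frac{2^{\alpha-1}nz\mathrm{e}^{2(\frac{\ln10}{20}\sigma_{dB})^2}}{C_0^2}\Big((d^\top d)^{\frac{\alpha}{2}}+(\mathbb{E}[\xi^\top(t)\xi(t)])^{\frac{\alpha}{2}}\Big),
\end{align*}
with possibly an $N_{\mathcal E}$-dependent constant from passing between per-edge and stacked norms. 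Absorbing that constant into $z$ would have to be checked.

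The main obstacle is closing the recursion \eqref{e28} into a uniform bound $\rho'_\xi$, because the noise term is now sublinear in $V(t)=\mathbb{E}[\xi^\top\xi]$ rather than linear. I would use $V^{\alpha/2}\le 1+V$, valid since $\alpha/2<1$. This reduces \eqref{e28} to the same linear form as \eqref{e29}, with modified constants $c^2(t)$ times bounded coefficients. The product bound \eqref{e30} and $c\in\ell_1\subset\ell_2$ then give $\sup_t V(t)\le\rho'_\xi<\infty$, exactly as in \eqref{e31}. With $\rho'_\xi$ in hand, $\sup_t\mathbb{E}\|\bar M(t)\|^2<\infty$ follows from \eqref{e25}. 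Lemma~\ref{le3} gives a.s. and mean-square convergence of $\bar M(t)$ to $M^*$. The argument that $M(t)-\bar M(t)\to0$ is deterministic in the factor $\mathbf{I}-\Phi(\infty,t)$ and carries over verbatim, so $\xi(t)\to\xi^*$ as in \eqref{e32}.

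Finally, the moment formulas in \eqref{e34} follow as at the end of Theorem~\ref{th2}. First, $\mathbb{E}[M^*]=0$ gives the mean. Second, $\mathbb{E}\|\xi^*\|^2=\|\Phi(\infty,0)\xi(0)\|^2+\mathbb{E}\|M^*\|^2$, since the cross term vanishes. The first part is bounded through $t_2$ and $1-x\le e^{-x}$. The second part is bounded by $\rho^2_{K\mathcal B}\rho_\Phi^2\sum_j c^2(j)$ times the new noise bound evaluated at $\rho'_\xi$.
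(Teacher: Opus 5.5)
Your proposal is correct, and its skeleton is the paper's: the martingale $\bar M(t)$ with respect to $\mathcal{F}'_t$, Lemma~\ref{le5} for the noise moment, Lemma~\ref{le3}, then the deterministic factor $\mathbf{I}-\Phi(\infty,t+1)$. The one place you genuinely differ is the uniform bound on $V_t=\mathbb{E}[\xi^\top(t)\xi(t)]$.

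The paper keeps the sublinear term and rewrites \eqref{e36} as the drift inequality \eqref{e37}, $\Delta V_t\le -c(t)(L_2V_t+L_3V_t^{\alpha/2}+L_4)[L_1/(L_2+L_3V_t^{\alpha/2-1}+L_4V_t^{-1})-c(t)]$. It picks a threshold $V_M$ beyond which the bracket exceeds $L_1/(2L_2)-c(t)>0$ once $c(t)<L_1/(2L_2)$, and concludes by contradiction that $V_t$ stays bounded.

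You instead use $V^{\alpha/2}\le 1+V$. This puts the recursion back into exactly the affine form \eqref{e29}, and the multiplier stays positive by the same completing-the-square identity. So \eqref{e30} and $\sum_t c^2(t)<\infty$ give you an explicit $\rho'_\xi$ shaped like \eqref{e31}.

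Your route is shorter, reuses Theorem~\ref{th2} verbatim, and makes $\rho'_\xi$ explicit. The paper's contradiction step, as written, only negates $\lim V_t=\infty$. It needs the easy extra remark that each excursion above $V_M$ starts from a bounded one-step overshoot. What the paper's drift form buys is that it needs only $c(t)\to0$ for boundedness, and it shows directly the role of $\alpha/2<1$, which Remark~\ref{re4} relies on. Your linearisation breaks down for $\alpha>2$ for the same underlying reason.

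Your hesitation about the $N_{\mathcal E}$-dependent constant is justified: it cannot be absorbed into $z=\max\{z_i+z_j\}$. Jensen gives $\sum_k a_k^{\alpha/2}\le N_{\mathcal E}^{1-\alpha/2}(\sum_k a_k)^{\alpha/2}$. Subadditivity of $r\mapsto r^{\alpha/2}$ gives the reverse, $\sum_k a_k^{\alpha/2}\ge(\sum_k a_k)^{\alpha/2}$, strictly as soon as two terms are nonzero. For example, $\sum_k\|d_{l_k}\|^{\alpha}>(d^\top d)^{\alpha/2}$ for the simulation offsets. The third inequality in \eqref{e35} drops this factor, so the explicit constant in \eqref{e34} is only established with $z$ replaced by $N_{\mathcal E}^{1-\alpha/2}z$. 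Keeping the factor, as you propose, is the correct version, and it changes nothing qualitative.
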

\begin{proof}
The proof proceeds analogously to that of Theorem~\ref{th2} up to the derivation of \eqref{e26}. Therefore, detailed steps are omitted here. It is also worth noting that the pair $\{\bar{M}(t), \mathcal{F}'_t\}$ constitutes a martingale sequence. Considering the shadowing effect $\mathcal{X}_{ij}(t)$ and $1<\alpha < 2$, in conjunction with \eqref{e33},
\begin{align}\label{e35}
       & \mathbb{E}[\eta^\top(t)\eta(t)]
        \nonumber\\
        &= n \sum\limits_{i={1}}^{N}\sum\limits_{j\in\mathcal{N}_i}\mathbb{E}[\frac{z_{i} \|\xi_{ij}(t) - d_{ij}\|^{\alpha}\mathrm{e}^{2(\frac{\ln 10}{20}\sigma_{dB})^2}}{C_0^2}\!+\!s_j]\nonumber\\
        &\leq n \sum\limits_{i={1}}^{N}\sum\limits_{j\in\mathcal{N}_i}\mathbb{E}[\frac{2^{\alpha-1}\mathrm{e}^{2(\frac{\ln 10}{20}\sigma_{dB})^2}z_{i}(\|\xi_{ij}(t)\|^\alpha+\|d_{ij}\|^\alpha)}{C_0^2}+s_{j}]\nonumber\\
        &\leq \frac{2^{\alpha-1}nz\mathrm{e}^{2(\frac{\ln 10}{20}\sigma_{dB})^2}\left((d^\top d)^{\frac{\alpha}{2}}+\mathbb{E}[(\xi^\top(t)\xi(t))^{\frac{\alpha}{2}}]\right)}{C_0^2}\!+\!nsN_\mathcal{E}.
    \end{align} The second inequality holds since $\alpha > 1$, and the third follows from Jensen's inequality for concave functions $x^a$ with $a = \frac{\alpha}{2} < 1$. A derivation similar to (18)–(20) yields:
    \begin{align}\label{e36}
        &\mathbb{E}[\xi^\top(t\!+\!1)\xi(t\!+\!1)]\nonumber\\
        &\leq(1\!-\!2c(t)\lambda_{\min}^{\Psi}+c^2(t)\rho_{\Psi}^2)\mathbb{E}[\xi^\top(t)\xi(t)]
    \nonumber\\
    &\quad+\frac{2^{\alpha-1}nz\mathrm{e}^{2(\frac{\ln 10}{20}\sigma_{dB})^2}c^2(t)\rho_{K\mathcal{B}}^2}{C_0^2}( \mathbb{E}[\xi^\top(t)\xi(t)])^{\frac{\alpha}{2}}\nonumber\\
    &\quad+\frac{2^{\alpha-1}nz\mathrm{e}^{2(\frac{\ln 10}{20}\sigma_{dB})^2} c^2(t)\rho_{K\mathcal{B}}^2(d^\top d)^\frac{\alpha}{2}}{C_0^2}+nsN_\mathcal{E}c^2(t)\rho_{K\mathcal{B}}^2.
    \end{align} Let $V_t\triangleq\mathbb{E}[\xi^\top(t)\xi(t)]$, $\Delta V_t\triangleq V_{t+1}-V_t$ and
    \begin{align*}
        L_1&=2\lambda_{\min}^{\Psi},\ L_2=\rho_{\Psi}^2,\ L_3=\frac{2^{\alpha-1}nz\mathrm{e}^{2(\frac{\ln 10}{20}\sigma_{dB})^2}\rho_{K\mathcal{B}}^2}{C_0^2},\\
        L_4&=\frac{2^{\alpha-1}nz\mathrm{e}^{2(\frac{\ln 10}{20}\sigma_{dB})^2} \rho_{K\mathcal{B}}^2(d^\top d)^\frac{\alpha}{2}}{C_0^2}+nsN_\mathcal{E}\rho_{K\mathcal{B}}^2.
    \end{align*}
    It can be verified that $L_1,L_2,L_3,L_4>0$. The inequality in \eqref{e36} can be rewritten as:
    \begin{align}\label{e37}
        \Delta V_t &\leq -c(t)  \left[ (L_1 - L_2 c(t)) V_t - L_3 c(t) V_t^{\frac{\alpha}{2}} - L_4 c(t) \right] \nonumber \\ 
         &\leq -c(t)\left[ L_1V_t-c(t)(L_2V_t+L_3V_t^{\frac{\alpha}{2}}+L_4) \right]\nonumber\\
         &\leq-c(t)\left(L_2V_t+L_3V_t^{\frac{\alpha}{2}}+L_4\right)\nonumber\\
         &\quad \times\left[ \frac{L_1}{L_2+\frac{L_3}{V_t^{1-\frac{\alpha}{2}}}+\frac{L_4}{V_t}} -c(t) \right].
    \end{align} 
    Note that $1 - \frac{\alpha}{2} > 0$. Suppose that $\lim\limits_{t \to \infty} V_t = \infty$. Observing the fraction in \eqref{e37}, as $V_t \to \infty$, 
    $$\lim_{V_t \to \infty} \frac{L_1}{L_2 + \frac{L_3}{V_t^{1-\alpha/2}} + \frac{L_4}{V_t}} = \frac{L_1}{L_2}$$ 
    A sufficiently large constant $V_M >(\frac{\alpha L_3}{2L_2})^{\frac{1}{1-\frac{\alpha}{2}}}$ can be defined such that $V_M-\frac{L_3}{L_2}V_M^{\frac{\alpha}{2}}>\frac{L_4}{L_2}$. Consequently, whenever $V_t > V_M$, the fraction $L_1 \big( L_2 + L_3 V_t^{\alpha/2 - 1} + L_4 V_t^{-1} \big)^{-1}$ is strictly greater than $\frac{L_1}{2L_2}$. Furthermore, since the sequence $c(t)$ diminishes to $0$ over time as $c(t)\in \ell_1$, there exists a time instant $t_4 > 0$ such that for all $t \ge t_4$, the inequality $c(t) < \frac{L_1}{2L_2}$ holds. This, in turn, indicates that $\Delta V_t < 0$ for $t > t_4$, which contradicts the premise that $V_t$ is unbounded. Therefore, $V_t$ is restricted to be bounded. Let this bound be denoted by $\rho'_{\xi}$; that is, for any $t \ge 0$, $\mathbb{E}[\xi^\top(t)\xi(t)]\leq\rho'_{\xi}$. The remainder of the convergence proof and the analysis of the limit random vector $\xi^\ast$ proceed analogously to the proof of Theorem \ref{th2}. Therefore, they are omitted here for brevity. The proof is completed.
\end{proof}

\begin{remark}\label{re4}
While the analysis above establishes global convergence for $1<\alpha \leq 2$, a structural limitation arises when $\alpha > 2$. Observe that the term $V_t^{-(1-\alpha/2)}$ in inequality \eqref{e37} exhibits a positive exponent in this case, leading to unbounded growth in the denominator as $V_t$ increases. This behaviour precludes the satisfaction of the negative drift condition for arbitrarily large states. Thus, global convergence is lost. The system instead exhibits semi-global convergence, requiring the initial states $\xi^\top(0)\xi(0)$ to lie within a specific region of attraction where the stabilising drift outweighs the noise amplification. One feasible solution lies in the design of a state-dependent gain sequence $c(t)$, aiming to counteract the super-linear growth of the noise intensity. Investigating such adaptive protocols for $\alpha > 2$ is left for future work.
\end{remark}


\subsection{Privacy Analysis}

Recall that the protection target in this work is the weighting ratio $R_i^{-1}Q_i$ for all $i\in\mathcal{V}$. This metric is critical as it intrinsically reflects the operational priorities encoded within the cost function. Specifically, a larger ratio corresponds to an aggressive control policy, where minimising state deviation takes precedence. In a physical context, this signifies that the agent is engaged in high-precision tasks. Conversely, a smaller ratio indicates a conservative, energy-efficient strategy, typically adopted when an agent experiences power constraints or actuator faults. Consequently, the disclosure of this ratio exposes the internal operational status of the agent, enabling adversaries to identify vulnerable units or anticipate execution trajectories. 

To quantify the safeguard against the information leakage of $R_i^{-1}Q_i$ by Algorithm \ref{alg1}, a formal DP analysis is presented. Note that the effective noise intercepted by the eavesdropper attains its minimum variance at the transmitter. For analytical simplicity and to establish a worst-case privacy guarantee, this subsection assumes that the eavesdropper is exactly co-located with the transmitter, where the eavesdropper's effective noise model is given as
\begin{align*}
    \eta_{ij}^e(t)\sim\mathcal{N}(0,s_j\mathbf{I})
\end{align*}

The sensitivity parameter $\Delta_{K_i,t}$ is firstly defined as follow:
    \begin{align}\label{38}
        \Delta_{K_{i,t}}& \triangleq \max \Big\{  \|K_{i,t} - K'_{i,t}\| : 
 R_i^{-1}Q_i \in D_1,\nonumber \\&R_i'^{-1}Q'_i \in D'_1,
 D_1 \text{ and } D'_1 \text{ satisfy Definition \ref{de1}} \Big\}.
    \end{align}
    This metric quantifies the maximum deviation in the gain matrix $K_{i,t}$ induced by the adjacency of  weighting ratios in Definition \ref{de1}. To facilitate the analysis, step-wise parameters $(\epsilon_t, \delta_t)$ are introduced to quantify the privacy budget at each time instant $t$. Within this framework, $\delta_t$ characterises the relaxation probability of the differential privacy guarantee in \eqref{e12}, which is calibrated to navigate the inevitable trade-off between privacy security and system performance. 
\begin{theorem}\label{th3}
Suppose that Assumptions \ref{ass1} and \ref{ass2} hold,  and that the eavesdropper is co-located with the transmitter. Furthermore, let $c(t)>0$ and $\delta'_t\in(0,\frac{1}{2})$ be user-defined sequences. Then for any finite time $t_f>0$, Algorithm \ref{alg1} guarantees cumulative $(\epsilon,\delta)$-DP for the weighting ratios $R_i^{-1}Q_i,\ \forall i\in\mathcal{V}$, and
\begin{align*}
    \epsilon &\leq \frac{N_{\mathcal{E}}C_{\Delta}}{2\underline{s}} \sum_{t=0}^{t_f} c(t) + \frac{N_{\mathcal{E}}\sqrt{C_{\Delta}}}{\sqrt{\underline{s}}} \sum_{t=0}^{t_f} \sqrt{c(t)} \mathcal{Q}^{-1}(\delta'_t), \\[10pt]
\delta&\leq N_{\mathcal{E}} \sum_{t=0}^{t_f} \delta'_t + \frac{2\Delta_{K}^2N_{\mathcal{E}}^2(\rho'_{\xi} + \rho'_{\eta})}{C_{\Delta}} \sum_{t=0}^{t_f} c(t),
\end{align*}
where
\begin{align*}
    \rho'_\eta&=\frac{2^{\alpha-1}nz\mathrm{e}^{2(\frac{\ln 10}{20}\sigma_{dB})^2}\left((d^\top d)^{\frac{\alpha}{2}}+(\rho'_\xi)^{\frac{\alpha}{2}}\right)}{C_0^2}\!+\!nsN_\mathcal{E},\\
    \Delta_{K}&=\max\limits_{i,t}\{\Delta_{K_{i,t}}\},\quad \underline s=\min\limits_j \{s_j\},
\end{align*}%
$C_{\Delta}$ is a tunable parameter and selected to satisfy $C_\Delta>\sup\limits_{t\ge0}{2c(t)|\mathcal{N}_i| \Delta^2_{K}(\rho'_{\xi}+\rho'_\eta)}$. Furthermore, as  $t_f \rightarrow \infty$ 
select 
\begin{align}\label{con2}
    c(t)>0,\ c(t)\in\ \ell_1,\ \delta'_t\in\ell_1,\ \sqrt{c(t)} \mathcal{Q}^{-1}(\delta'_t)\in \ell_1,
\end{align} and sufficiently large $C_{\Delta}$, Algorithm \ref{alg1} can achieve a bounded cumulative privacy budget (i.e., $\epsilon < \infty$) and a valid cumulative failure probability (i.e., $\delta < 1$) over an infinite horizon.
\end{theorem}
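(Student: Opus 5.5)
The plan is to treat Algorithm~\ref{alg1} as the adaptive sequential composition $\mathcal{M}=\{\mathcal{M}_0,\mathcal{M}_1,\ldots\}$. At each step and on each edge, the eavesdropper sees a Gaussian mechanism $\hat{x}_{ij}(t)=x_j(t)+\eta^e_{ij}(t)$, with $\eta^e_{ij}(t)\sim\mathcal{N}(0,s_j\mathbf{I})$. I would first bound the per-edge, per-step privacy loss, then compose over the $N_{\mathcal{E}}$ edges and the steps $t=0,\ldots,t_f$.

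\textbf{Per-step sensitivity.} Conditioning on the observed history up to $t-1$, which fixes $x_j(t-1)$ and the received neighbour signals, the only dependence of $x_j(t)$ on the database comes through $K_{j,t-1}$. From \eqref{e1} and \eqref{e17},
\begin{align*}
x_j(t)-x_j'(t)=c(t-1)(K_{j,t-1}-K'_{j,t-1})\sum_{k\in\mathcal{N}_j}a_{jk}(\hat{x}_{jk}(t-1)+d_{jk}-x_j(t-1)).
\end{align*}
By \eqref{38}, the norm of this difference is at most $c(t-1)\Delta_K\,\|g_j(t-1)\|$, where $g_j$ denotes the sum above. Each term of $g_j$ is an edge error plus a noise $\eta_{jk}$. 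Hence $\mathbb{E}\|g_j\|^2\le 2|\mathcal{N}_j|(\rho'_\xi+\rho'_\eta)$, using the uniform bound $\rho'_\xi$ from the Proposition and the bound \eqref{e35} evaluated at $\rho'_\xi$, which gives $\rho'_\eta$. The sensitivity is therefore random and cannot be bounded almost surely.

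\textbf{Handling the random sensitivity (main obstacle).} I would introduce the good event
\begin{align*}
G_{j,t}=\{c(t)\Delta_K^2\|g_j(t)\|^2\le C_\Delta c(t)\cdot c(t)^{0}\},
\end{align*}
with the threshold chosen so that on $G_{j,t}$ the squared sensitivity is at most $C_\Delta c(t)$. Markov's inequality then gives
\begin{align*}
\mathbb{P}(G^c_{j,t})\le \frac{c(t)^2\Delta_K^2\,\mathbb{E}\|g_j\|^2}{C_\Delta c(t)}\le\frac{2\Delta_K^2 N_{\mathcal{E}}(\rho'_\xi+\rho'_\eta)c(t)}{C_\Delta},
\end{align*}
where $|\mathcal{N}_j|\le N_{\mathcal{E}}$. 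Summing over edges produces the second term of $\delta$, and the condition on $C_\Delta$ keeps these probabilities below one. On $G_{j,t}$ the sensitivity is at most $\sqrt{C_\Delta c(t)}$.

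\textbf{Per-step Gaussian privacy loss.} With noise standard deviation $\sqrt{s_j}\ge\sqrt{\underline s}$ and sensitivity $\Delta=\sqrt{C_\Delta c(t)}$, I would use the privacy-loss random variable $\frac{\Delta^2}{2\sigma^2}+\frac{\Delta}{\sigma}Z$ rather than the formula of Lemma~\ref{le2}. Requiring $Z\le\mathcal{Q}^{-1}(\delta'_t)$ except with probability $\delta'_t$ gives
\begin{align*}
\epsilon_t=\frac{C_\Delta c(t)}{2\underline s}+\sqrt{\frac{C_\Delta c(t)}{\underline s}}\,\mathcal{Q}^{-1}(\delta'_t).
\end{align*}
This is equivalent to inverting Lemma~\ref{le2}.

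\textbf{Composition and the infinite-horizon limit.} Basic adaptive composition over the $N_{\mathcal{E}}$ edges and $t\le t_f$ adds the $\epsilon_t$ and adds $\delta'_t$ together with the bad-event probabilities. This yields exactly the stated bounds. Post-processing covers all downstream quantities. As $t_f\to\infty$, the conditions in \eqref{con2} make every series finite, so $\epsilon<\infty$. Taking $C_\Delta$ large and the $\delta'_t$ small enough makes $\delta<1$.

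The delicate point is the conditioning argument. One has to justify that the bad-event probability can be added to $\delta$ even though the events depend on earlier outputs. I would do this by coupling: replace $g_j$ by its truncation at the threshold, so the truncated mechanism is exactly DP and differs from the true one only on $G^c$.
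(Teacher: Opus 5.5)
This is essentially the paper's proof. It has the same sensitivity bound $c(t)\Delta_{K_{i,t}}\|\sum_{j\in\mathcal{N}_i}a_{ij}(\hat{x}_{ij}(t)+d_{ij}-x_i(t))\|$, the same Markov truncation of $\Delta_i^2(t)$ at $c(t)C_\Delta$ with the tail probability added to $\delta$, and the Gaussian bound of Lemma~\ref{le2} with $\underline{s}$ (your privacy-loss form is its inversion). It also composes over links and time; there, the paper's sum over $j\in\mathcal{N}_i$ followed by a maximum over agents is what justifies your factor $N_{\mathcal{E}}$, because only the $|\mathcal{N}_{k_1}|\le N_{\mathcal{E}}$ transmissions of the agent whose ratio changes have nonzero sensitivity. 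One caution on your last paragraph: a truncation/total-variation coupling in general only gives $\delta'+\Pr_{D}(G^c)+\mathrm{e}^{\epsilon}\Pr_{D'}(G^c)$. To recover exactly the stated $\delta$, instead union-bound the cumulative privacy-loss variable under $D$ over the events $\{Z_t>\mathcal{Q}^{-1}(\delta'_t)\}$ and $\{\Delta_i^2(t)>c(t)C_\Delta\}$, which your per-step privacy-loss formula already sets up.
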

\begin{proof}
 The privacy properties of the proposed LQR mechanism $\mathcal{M}_{t}$ at time $t$ are first analysed. Let $D$ and $D'$ be two adjacent databases as defined in Definition \ref{de1}. Noting that the weighting ratio $R_i^{-1}Q_i$ is queried exclusively during the execution of $\mathcal{M}_{t,i}$ for all $ i\in\mathcal{V}$, the subsequent analysis focuses on quantifying the privacy level of local $\mathcal{M}_{t,i}$. Furthermore, given that the gain $\|K_{i,t}\|$ is uniformly bounded for all $i \in \mathcal{V}$ and $t \geq 0$, the boundedness of $\Delta_{K_{i,t}}$ and existence of $\triangle_K$ are immediately established. Let $\Delta_{K_i}$ denote the uniform upper bound, such that $ \Delta_{K_i} \ge \Delta_{K_{i,t}} $. Based on Lemma \ref{le2}, the sensitivity of the output $x_i(t+1)$ with respect to $R_i^{-1}Q_i$ is derived as
    \begin{align}\label{EX5}
        \Delta_i(t)&=\max\limits_{D_1,D_1'} \|\mathcal{M}_{t,i}(D_1)-\mathcal{M}_{t,i}(D_1')\| \nonumber\\
        &=\max\limits_{D_1,D_1'} \|x_i(t+1)\mid_{D_1}-x_i(t+1)\mid_{D_1'}\| \nonumber\\
        &=\max\limits_{D_1,D_1'} \|c(t)\left(K_{i,t}\mid_{D_1}-K_{i,t}\mid_{D_1'}\right)\nonumber\\
        &\quad \times\sum\limits_{j\in\mathcal{N}_i}a_{ij}  (\hat{x}_{ij}(t)+d_{ij}-x_i(t))\| \nonumber\\
        &\leq c(t)\Delta_{K_{i,t}}\|\sum\limits_{j\in\mathcal{N}_i}a_{ij}  (\hat{x}_{ij}(t)+d_{ij}-x_i(t))\|.
    \end{align}
where $\hat{x}_{ij}=x_j+\eta_{ij}$ and the third equality follows from \eqref{e17} and the Adaptive Sequential Composition Theorem \cite{dwork2014algorithmic}. As revealed by this theorem, the term $c(t)K_{i,t}\sum_{j\in\mathcal{N}_i}a_{ij} (\hat{x}_{ij}(t)-x_i(t))$ is determined by the output of the preceding mechanism $\mathbf{\mathcal{M}}_{t-1,i}$. Consequently, it is treated as a fixed constant at time $t$ and remains strictly invariant with respect to the shift between databases $D$ and $D'$. 

Nevertheless, given the stochastic nature of $x_i(t)$ and $x_j(t)$ as the outputs of preceding mechanisms, alongside the Gaussian noises present in $\hat{x}_{ij}$, it is impossible to establish a deterministic upper bound for $\Delta_i^2(t)$. A more tractable approach is to evaluate the tail probability $\delta''_{i,t}\in(0,1)$ that $\Delta_i^2(t)$ exceeds a specified upper bound $c(t)C_\Delta>0$. Based on \eqref{EX5},
\begin{align*}
    \Delta_i^2(t)&\leq c^2(t)\Delta^2_{K_{i,t}}\|\sum\limits_{j\in\mathcal{N}_i}a_{ij}  (\hat{x}_{ij}(t)+d_{ij}-x_i(t))\|^2 \nonumber \\
    &= c^2(t)\Delta^2_{K_{i,t}}\|\sum\limits_{j\in\mathcal{N}_i}a_{ij}\eta_{ij}(t)-\sum\limits_{j\in\mathcal{N}_i}a_{ij}\xi_{ij}(t)\|^2 \nonumber\\
    &\leq  c^2(t)|\mathcal{N}_i| \Delta^2_{K_{i,t}}\sum_{j\in\mathcal{N}_i} a_{ij}^2 \|\eta_{ij}(t)-\xi_{ij}(t) \|^2 \nonumber\\
    &\leq 2c^2(t)|\mathcal{N}_i| \Delta^2_{K_{i,t}}\left(\sum_{j\in\mathcal{N}_i} \|\xi_{ij}(t)\|^2+\sum_{j\in\mathcal{N}_i} \|\eta_{ij}(t)\|^2\right) \nonumber\\
    &\leq 2c^2(t)|\mathcal{N}_i| \Delta^2_{K_{i,t}}(| \|\xi(t)\|^2+| \|\eta(t)\|^2).
\end{align*} By applying Markov's inequality ($\Pr(X\ge a)\le\frac{\mathbb{E}(X)}{a}$) \cite{ash2014real},
\begin{align}\label{EX6}
    \Pr(\Delta_i^2(t)\ge c(t)C_\Delta)&\le \frac{2c^2(t)|\mathcal{N}_i| \Delta^2_{K_{i,t}}\mathbb{E}(| \|\xi(t)\|^2+| \|\eta(t)\|^2)}{c(t)C_\Delta} )\nonumber\\
    &\le \frac{2c(t)|\mathcal{N}_i| \Delta^2_{K_{i,t}}(\rho'_{\xi}+\rho'_\eta)}{C_\Delta},
\end{align}where $\mathbb{E}[\eta^\top(t)\eta(t)]\leq\rho'_\eta$. It is obtained $\delta''_{i,t}\leq\frac{2c(t)|\mathcal{N}_i| \Delta^2_{K_{i,t}}(\rho'_{\xi}+\rho'_\eta)}{C_\Delta}$. In accordance with \eqref{EX6}, and given the definiteness of $\Delta_i(t)$,
\begin{align}\label{EX7}
     \Pr(\Delta_i(t)\ge \sqrt{c(t)C_\Delta})\le \frac{2c(t)|\mathcal{N}_i| \Delta^2_{K_{i,t}}(\rho'_{\xi}+\rho'_\eta)}{C_\Delta}.
\end{align}

Given that the noise on each link $(i,j)$ is independent and heterogeneously distributed, the privacy guarantees exhibited by the $\mathcal{M}_{t,i}=\{\mathcal{M}_{t,ij}\mid j\in\mathcal{N}_i\},\ \mathcal{M}_{t,ij}=x_i(t+1)+\eta_{ij}^e(t+1)$ is link-dependent. Since the privacy relaxation probability $\delta'_t$ associated with the $\epsilon$ in Lemma \ref{le2} is a user-defined parameter, it is set as $\delta'_{ij,t}=\delta'_{i,t}=\delta'_{t}$. Following Lemma \ref{le2}, the privacy budget $\epsilon_{ij,t}$ for $\mathcal{M}_{t,ij}$ satisfies:
    \begin{align}\label{e39}
\epsilon_{ij,t}\! 
&\leq \frac{\Delta_{i}^2(t)}{2s_j}+\frac{\Delta_{i}(t)\mathcal{Q}^{-1}(\delta'_{t})}{\sqrt{s_j}}. 
    \end{align}
Combining \eqref{EX6}, \eqref{EX7} and \eqref{e39} yields $\Pr\left(\epsilon_{ij,t}  \ge C_\epsilon(t) \right)\leq  \frac{2c(t)|\mathcal{N}_i| \Delta^2_{K_{i,t}}(\rho'_{\xi}+\rho'_\eta)}{C_\Delta},\ C_\epsilon(t)  =\frac{{c(t)C_\Delta}}{2s_j} \! + \!\frac{\sqrt{c(t)C_\Delta}\mathcal{Q}^{-1}(\delta'_{t})}{\sqrt{s_j}}.$ Recalling Definition \ref{de3}, and noting that $\delta$ in inequality \eqref{e12} represents the probability of violating the privacy guarantee, it is concluded 
\begin{align}\label{EX8}
    \epsilon_{ij,t}=C_\epsilon(t),\quad \delta_{ij,t}=\delta'_t+\delta''_{i,t}.
\end{align}
    
    It is noted that the release of $\mathcal{M}_{t,ij},\ \forall j\in\mathcal{N}_i$ can be viewed as a sequence of independent mechanisms applied to the same underlying state $x_i(t+1)$. Consequently, by virtue of the sequential composition theorem \cite{dwork2014algorithmic}, the aggregate privacy budget for agent $i$'s weighting ratio $R_i^{-1}Q_i$ satisfies $\epsilon_{i,t}=\sum\limits_{j\in \mathcal{N}_i}\epsilon_{ij,t},\ \delta_{i,t}=\sum\limits_{j\in \mathcal{N}_i}\delta_{ij,t}$. By incorporating \eqref{EX8},
    \begin{align}\label{e40}
        \epsilon_{i,t}&=\sum\limits_{j\in \mathcal{N}_i}C_\epsilon(t)=\sum\limits_{j\in \mathcal{N}_i}\left(\frac{{c(t)C_\Delta}}{2s_j} \! + \!\frac{\sqrt{c(t)C_\Delta}\mathcal{Q}^{-1}(\delta'_{t})}{\sqrt{s_j}}\right),\nonumber\\
        \delta_{i,t}&\leq\sum\limits_{j\in \mathcal{N}_i}\left(\delta'_t+\frac{2c(t)|\mathcal{N}_i| \Delta^2_{K_{i,t}}(\rho'_{\xi}+\rho'_\eta)}{C_\Delta}\right).
    \end{align}
Note that $\epsilon_{i,t}$ represents the privacy budget and $\delta_{t}$ the failure probability of $\mathcal{M}_{t,i}$. By the parallel composition theorem \cite{dwork2014algorithmic}, the global privacy parameters for $\mathcal{M}_t$ are given by $\epsilon_{t} = \max\limits_{i} \{ \epsilon_{i,t} \}$ and $\delta_{t} = \max\limits_{i} \{ \delta_{i,t} \}$. Following \eqref{e40}, the following result is obtained:
\begin{align}\label{e42}
   \epsilon_{t}&\leq \frac{{c(t)N_{\mathcal{E}}C_\Delta}}{2\underline s}+\frac{\sqrt{c(t)C_\Delta}N_{\mathcal{E}}\mathcal{Q}^{-1}(\delta'_{t})}{\sqrt{\underline s}},\nonumber\\
   \delta_t&\leq N_{\mathcal{E}}\delta'_t+\frac{2c(t)N_{\mathcal{E}}^2 \Delta^2_{K_{t}}(\rho'_{\xi}+\rho'_\eta)}{C_\Delta},
\end{align} where $\Delta_{K_t}=\max\limits_{i}\{\Delta_{K_{i,t}}\}$. 

Note that $(\epsilon_t,\delta_t)$-DP is satisfied by the mechanism $\mathcal{M}_t$ at each time step~$t$, where $\epsilon_t$ and $\delta_t$ satisfy \eqref{e42}. By invoking the adaptive sequential composition theorem~\cite{dwork2014algorithmic} to account for the adaptive dependence on the execution history, the cumulative privacy parameters $\epsilon=\sum\limits_{t=0}^{t_f}\epsilon_t$ and $\delta=\sum\limits_{t=0}^{t_f}\delta_t$ are derived as follows:
\begin{align}\label{EX9}
    \epsilon &\leq \frac{N_{\mathcal{E}}C_{\Delta}}{2\underline{s}} \sum_{t=0}^{t_f} c(t) + \frac{N_{\mathcal{E}}\sqrt{C_{\Delta}}}{\sqrt{\underline{s}}} \sum_{t=0}^{t_f} \sqrt{c(t)} \mathcal{Q}^{-1}(\delta'_t), \nonumber\\
\delta&\leq N_{\mathcal{E}} \sum_{t=0}^{t_f} \delta'_t + \frac{2\Delta_{K}^2N_{\mathcal{E}}^2(\rho'_{\xi} + \rho'_{\eta})}{C_{\Delta}} \sum_{t=0}^{t_f} c(t),
\end{align}where $\Delta_{K}=\max\limits_{t}\{\Delta_{K_{t}}\}$. Because $ c(t)\in\ \ell_1,\ \delta'_t\in\ell_1$ and $\ \sqrt{c(t)} \mathcal{Q}^{-1}(\delta'_t)\in \ell_1$ in \eqref{con2}, it is immediate to show $\epsilon<\infty$ and $\delta<1,$\footnote{According to \eqref{EX9}, $\delta < 1$ can always be guaranteed by appropriately selecting the constant coefficients in sequences $c(t)$ and $\delta'_t$.} as $t_f\rightarrow\infty$. The proof is completed.
\end{proof}

Although Theorem \ref{th3} establishes that Algorithm \ref{alg1} achieves a strictly bounded cumulative DP guarantee under \eqref{con2}, comparing this result with the reference state protection in \cite{ma2026inherent} reveals a distinct quantitative disparity. Specifically, whilst both targets maintain theoretically bounded privacy over an infinite horizon, masking the weighting ratio inherently incurs a larger privacy loss. This demonstrates that inherent communication noise, despite being a valuable natural privacy mechanism, provides non-uniform protection. Its efficacy is governed by how the protected information structurally couples with the cooperative dynamics. This inherent characteristic also motivates a new research paradigm, wherein natural physical randomness and actively structured artificial noise are synergistically integrated to satisfy diverse and stringent privacy requirements.

Combining the requirement for unbiased convergence in Corollary \ref{co1}, $c(t)\in\ell_2, \notin \ell_1$, with the conditions in \eqref{con2} for achieving a bounded cumulative privacy budget in Theorem \ref{th3}, it is clear that exact asymptotically unbiased convergence and a bounded DP guarantee can not be achieved simultaneously over an infinite horizon. This result corroborates the dilemma revealed in \cite{huang2015differentially,cortes2016differential,nozari2015differentially,nozari2017differentially,huang2024differential}. Unlike artificial noise injection, where noise variance can be scaled up to suppress diverging sensitivity over infinite iterations in\cite{ma2025distributed}, inherent physical noise is limited by natural channel properties. Therefore, the only method to restrict cumulative privacy loss is to force the penalty factor $c(t)$ to decay rapidly (i.e., $c(t) \in \ell_1$). However, this early drop in control effort deprives the multi-agent system of the sustained power needed to overcome random disturbances, sacrificing unbiased convergence. This highlights the basic conflict between the continuous control needed for exact coordination and the reduced information leakage needed for strict privacy.

\begin{remark}
    Utilising the asymptotic property $\mathcal{Q}^{-1}(\delta_t)=O(\sqrt{\ln(1/\delta_t)})$, the selection of $c(t)$ and $\delta_t$ in  \eqref{con2} can be significantly simplified. Admittedly, these conditions are conservative. This conservatism stems primarily from two factors: the extensive algebraic relaxations employed to derive upper bounds, and the stringent requirement of bounded cumulative privacy budget $\epsilon$ over an infinite horizon. This requirement implicitly assumes the existence of an eavesdropper who continuously colludes with adversarial agents from the system's inception, an assumption that is inherently unrealistic. Furthermore, note that privacy leakage is most severe when the eavesdropper is co-located with the transmitter, as the communication noise variance reaches its minimum at this position. This worst-case scenario further constrains the feasible range of $c(t)$ and $\delta_t$.
\end{remark}

\begin{remark}
  It is worth noting that while the marginal distribution of the communication noise exhibits complex non-Gaussian mixture characteristics, the conditional distribution given the channel history remains Gaussian. This property ensures the validity of applying Lemma \ref{le2} at each time step. Another key aspect of proving Theorem \ref{th3} is addressing the state-dependent sensitivity $\Delta_i(t)$. Because the system state evolves within an unbounded continuous space ($\mathbb{R}^n$), the conventional worst-case global sensitivity inevitably evaluates to infinity, which renders standard DP mechanisms infeasible. To overcome this barrier, by leveraging the bounded mean-square tracking error and applying Markov's inequality, a probabilistic boundary is established, restricting the local sensitivity within a finite truncation threshold $C_\Delta$ with high probability. The tail probability $\delta''_{i,t}$ wherein the sensitivity exceeds the $C_\Delta$ is then rigorously accommodated by absorbing it into the overall privacy relaxation parameter $\delta_{ij,t}$ via the union bound. In essence, this establishes a formal link between the system's statistical stability and its privacy guarantees.
\end{remark}

\subsection{Privacy Efficiency: Weighting Ratios vs. Gradients}

In this subsection, a sensitivity analysis demonstrates that, under the addition/removal adjacency definition, adjacency relations based on weighting ratios yield less conservative privacy bounds than those based on gradients. This implies that gradients inherently contain redundant information, such as the underlying cost structure and dynamic system states, rendering the privacy budget $(\epsilon,\delta)$ calculated for gradient protection overly conservative. Consequently, this justifies the rationale of this paper to focus exclusively on the protection of weighting ratios. The analytical procedure is presented as follows.

Rather than obtaining the optimal input $u_i(t)$ via \eqref{e16}--\eqref{e17}, an alternative method relies on numerical optimisation approaches (e.g., see \cite[Chapter 8]{borrelli2017predictive}). To illustrate how gradient information directs the search for the optimal input, a gradient descent method is adopted here. Specifically, by defining the predicted input sequence over the finite horizon $T$ as $U_i(t)=\left[\begin{matrix} \bar{u}_i(t|t)^\top & \bar{u}_i(t+1|t)^\top &\ldots &\bar{u}_i(t+T-1|t)^\top \end{matrix}\right]^\top$, the analytical gradient of \eqref{eq:14a} with respect to the input sequence is given by
\begin{align}\label{e19}
    \nabla_{U_i(t)} J_i &= 2 \bar{R}_i U_i(t) + 2 c(t)\sum\limits_{j\in\mathcal{N}_i} a_{ij} \Phi^\top \bar{Q}_i \big(\Phi U_i(t) \nonumber\\
    &\quad + \Lambda[ x_i(t)-d_{ij}-\hat{x}_{ij}(t)]\big),
\end{align}
where $$\nabla_{U_i(t)} J_i\!=\!\left[\begin{matrix} \nabla_{\bar{u}_i(t|t)}J_i^\top \!& \!\nabla_{\bar{u}_i(t+1|t)}J_i^\top \!&\!\ldots \!&\!\nabla_{\bar{u}_i(t+T-1|t)}J_i^\top \end{matrix}\right]^\top.$$ Let $k\ge0$ denote the internal iteration index within the optimisation solver at time $t$. The gradient descent update law is formulated as
\begin{align}\label{EX1}
    U_i^{(k+1)}(t) = U_i^{(k)}(t) - \alpha \nabla_{U_i(t)} J_i\big(U_i^{(k)}(t)\big),
\end{align}
where $\alpha > 0$ is an appropriately chosen step size to ensure convergence. To accelerate this iterative process and exploit temporal coherence, the initial sequence $U_i^{(0)}(t)$ is constructed by shifting the converged optimal sequence obtained at the previous time step $t-1$:
\begin{align*}
    U_i^{(0)}(t) = \left[\begin{matrix} u_i^*(t|t-1)^\top& \dots& u_i^*(\tau|t-1)^\top& u_i^*(\tau|t-1)^\top \end{matrix}\right]^\top.
\end{align*} where $\tau=t+T-2$. Iterating \eqref{EX1} drives the sequence $U_i^{(k)}(t)$ to converge to the optimal solution
\begin{align}\label{EXX1}
    U_i^{\ast}(t)=\left[\begin{matrix} u_i^{\ast}(t|t)^\top & u_i^{\ast}(t+1|t)^\top &\ldots &u_i^{\ast}(t+T-1|t)^\top \end{matrix}\right]^\top,
\end{align}
which satisfies the first-order optimality condition $$\left. \nabla_{U_i(t)} J_i \right|_{U_i(t) = U_i^{\ast}(t)} = \mathbf{0}.$$ In accordance with the receding horizon principle, only the first element of $U_i^{\ast}(t)$, namely $u_i^{\ast}(t|t)$ is applied as the control input at time $t$. Furthermore, it can be verified that $u_i^{\ast}(t|t)$ equals the input in \eqref{e17}. Within such a numerical optimisation approach, a sensitive dataset is formulated from the gradients $\{\nabla_{U_i^{(0)}(t)} J_i\}_{i=0}^N$, upon which the adjacency relation is built. The rationale for targeting this specific iteration is that, when evaluated at this unoptimised warm-start state, the initial gradient $\nabla_{U_i^{(0)}(t)}J_i$ remains non-zero and explicitly manifests the raw, unmasked interaction between the tracking accuracy preference $Q_i$ and the control effort penalty $R_i$.

To facilitate comparison, an alternative adjacency relation definition \cite{dwork2014algorithmic} is adopted for the upcoming theorem. 

\begin{definition}[Addition/Removal-based adjacency, \cite{dwork2014algorithmic}]\label{de4}
    Let $\mathcal{D}_i$ denote the universe of admissible sets. Two datasets $D_i, D_i' \subseteq \mathcal{D}_i$ are adjacent if and only if they differ by the addition or removal of a single element, i.e., $\||D_i| - |D_i'|\| = 1$.
\end{definition}

The adoption of Definition \ref{de4} ensures that an eavesdropper cannot deduce whether a specific value within the sensitive dataset was actually utilised. This indistinguishability precisely addresses the requirement of protecting an agent's specific gradient or weighting ratio. Furthermore, evaluating these two metrics under Definition \ref{de1} relies on an adjacency threshold, $\theta_1$, which relies on the numerical scale of the target variables. Since the weighting ratio and the gradient possess different natural scales of variation, imposing an identical threshold on both would inevitably skew the comparison. By focusing purely on addition/removal of a element, Definition \ref{de4} can circumvent this issue, yielding a scale-neutral metric.

\begin{theorem}\label{th1}
Consider the LQR formulation presented in \eqref{e14} and adjacency in Definition \ref{de4}. With an identical privacy budget $(\epsilon, \delta)$, the noise variance required to protect the gradients $ \nabla_{U_i^{(0)}(t)} J_i\neq0$ is no less than that required for the weighting ratios $R_i^{-1}Q_i$.
\end{theorem}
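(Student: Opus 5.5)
By Lemma \ref{le2}, for a fixed budget $(\epsilon,\delta)$ the Gaussian noise standard deviation needed to protect a quantity is proportional to the $\ell_2$-sensitivity of the released output with respect to adjacent datasets. I will therefore reduce the theorem to a comparison of sensitivities. The plan is to show that, under Definition \ref{de4}, the worst-case output variation induced by adding or removing one gradient is at least the variation induced by adding or removing one weighting ratio. I will phrase this set-theoretically through the Hausdorff distance $\mathrm{d_H}$ between the sets of outputs generated by adjacent datasets. The equal-budget comparison of variances then follows immediately from the monotonicity of the bound in Lemma \ref{le2} in $\Delta_M$.

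\textbf{Step 1: express the initial gradient as a function of the weighting ratio.} From \eqref{e19} evaluated at $U_i^{(0)}(t)$, I will factor out $\bar R_i$:
\begin{align*}
\nabla_{U_i^{(0)}(t)}J_i = 2\bar R_i\Big(U_i^{(0)}(t)+c(t)\textstyle\sum_{j}a_{ij}\Phi^\top \bar R_i^{-1}\bar Q_i(\Phi U_i^{(0)}(t)+\Lambda[x_i(t)-d_{ij}-\hat x_{ij}(t)])\Big).
\end{align*}
Since the $Q_i,R_i$ are diagonal, $\bar R_i^{-1}\bar Q_i=\mathrm{diag}\{R_i^{-1}Q_i,\dots\}$ commutes with the block structure. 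Hence the gradient is $g=F(R_i,\,R_i^{-1}Q_i,\,U_i^{(0)},x_i,\hat x_{ij})$, where $F$ depends on the ratio and, in addition, on the scale $R_i$, the warm start, and the received states.

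Consequently, every element of the ratio dataset corresponds to a \emph{family} of gradient values: one for each admissible scale $R_i$ in $\mathcal R_i$ and each state/warm-start realisation. Under the scale invariance noted in the footnote, the ratio map $(Q_i,R_i)\mapsto R_i^{-1}Q_i$ is many-to-one, while the gradient map is not.

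\textbf{Step 2: compare sensitivities.} Under Definition \ref{de4}, the sensitivity of a query on $D_i$ is the largest change in output caused by inserting or deleting one element.
\begin{itemize}
\item For the ratio dataset, the mechanism's output ($u_i$ via $K_{i,t}$, which depends on $(Q_i,R_i)$ only through $R_i^{-1}Q_i$) changes by at most $\sup\|\mathcal M(D)-\mathcal M(D')\|$ over the ratio sets.
\item For the gradient dataset, the removed or added element ranges over the larger preimage set. This set contains, for each ratio, all gradients consistent with it.
\end{itemize}
I will show that the supremum over a superset is at least the supremum over the subset. Concretely, I will construct, for any pair of adjacent ratio datasets, an adjacent pair of gradient datasets achieving at least the same output discrepancy: take the gradients generated by the same $(Q_i,R_i)$ and states. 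This gives $\Delta_{\nabla}\ge\Delta_{R^{-1}Q}$. The hypothesis $\nabla_{U_i^{(0)}}J_i\neq0$ ensures that the gradient dataset is nontrivial, so that the added or removed element genuinely carries information and the map is well defined. It also makes the inequality meaningful rather than vacuous.

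\textbf{Step 3: conclude.} Apply Lemma \ref{le2} with the same $(\epsilon,\delta)$ to both mechanisms. The lower bound on $\sigma$ scales linearly in $\Delta_M$, so $\sigma_\nabla^2\ge\sigma_{R^{-1}Q}^2$.

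The main obstacle is Step 2: making precise that the output depends on the gradient dataset through a map that factors through the ratio, so that the gradient-based adjacency dominates the ratio-based one. The dependence on $R_i$ alone and on the stochastic states $\hat x_{ij}$ must be handled here; I would try treating states as fixed under adaptive composition, as in the proof of Theorem \ref{th3}, and then arguing via the inclusion of output sets and the monotonicity of the Hausdorff distance under set enlargement.
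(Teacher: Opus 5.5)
Your reduction to a sensitivity comparison through Lemma~\ref{le2} matches the paper, as do the factorisation of $\bar R_i$ as in \eqref{EX2}, the observation that $u_i(t)$ depends on $(Q_i,R_i)$ only through $S_i=R_i^{-1}Q_i$, and the Hausdorff-distance formulation. Step~2, however, does not go through. Deletion sensitivity is not a supremum of a fixed quantity over the candidate deleted elements, so a ``larger preimage set'' does not give a larger sensitivity. The output depends on a dataset only through the set of ratios it contains, so what matters is which ratios survive the deletion. Under the structure you assert (each ratio carries a family of gradients, one per scale $R_i$, and the gradient map is injective), deleting one gradient removes a single $(Q_i,R_i)$ pair. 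If that pair's ratio has another representative, the output image set does not change.

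Concretely, take $n=1$ and $\bar{\mathcal Q}_i=\bar{\mathcal R}_i=\{1,2\}$. The ratio $1$ is carried by $(1,1)$ and $(2,2)$. For fixed states, the bracket in \eqref{EX2} depends on $(Q_i,R_i)$ only through $S_i$; this includes the warm start, which is a shifted optimal sequence and hence depends only on $S_i$ by \eqref{EXX2}. So these two pairs produce gradients $g$ and $2g$, which are distinct precisely because $g\neq0$. Deleting the ratio $1$ removes $u_i(t)|_{S_i=1}$ from the image set. Deleting the gradient $g$ ``generated by the same $(Q_i,R_i)$'' leaves the image set unchanged, so your constructed adjacent gradient pair has zero discrepancy. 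In fact, for any fixed base dataset your structural assumption makes the gradient-deletion discrepancy \emph{at most} the ratio-deletion discrepancy, so the intuition in Step~2 points the wrong way.

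What the argument needs, and what the paper's proof is built around, is an inclusion of surviving ratio sets. The gradient deletion must remove every pair carrying the deleted ratio, so that $\{S_i\mid S_i\in D_i^{1''}\}\subseteq\{S_i\mid S_i\in D_i^{1'}\}$. Since $u_i(t)$ depends only on $S_i$ by \eqref{EXX2}, the inner infimum in the directed Hausdorff distance is then taken over a smaller set, and $\Delta_{D_i^3}(t)\ge\Delta_{D_i^2}(t)$ follows.

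The paper obtains this from the structural fact opposite to yours. It reads \eqref{EX2} as an underdetermined system in $(\bar S_i,\bar R_i)$, so the map $(S_i,R_i)\mapsto\nabla_{u_i^*(t|t-1)}J_i$ is many-to-one. Deleting one gradient therefore deletes a whole equivalence class of $(S_i,R_i)$ pairs, and (per its footnote) this is taken to be the class associated with the maximising ratio $\overline{R_i^{-1}Q_i}$.

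Within your own framework you could instead secure the inclusion by construction. Build the gradient dataset from a single representative $(Q_i,R_i)$ per ratio, rather than from all gradients consistent with each ratio as your Step~2 describes; this is legitimate if arbitrary subsets are admissible datasets under Definition~\ref{de4}. Either way, it is this inclusion, not the size of the preimage set, that yields the inequality.

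If you follow the paper's route, note that the many-to-one property is used for the first block $\nabla_{u_i^*(t|t-1)}J_i$ only. The paper therefore needs a separate final step to pass from this single component to the full vector $\nabla_{U_i^{(0)}(t)}J_i$ in the statement.
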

\begin{proof}
   A formal comparison is first established between the ratio matrix $R_i^{-1}Q_i$ and the initial element of the gradient vector $\nabla_{U_i^{(0)}(t)} J_i$, denoted as $\nabla_{u_i^*(t|t-1)}J_i$. Define the fundamental sensitive dataset drawn from the feasible dataset as $D^0_i \triangleq \{(Q_i, R_i) \mid Q_i \in \bar{\mathcal{Q}}_i, R_i \in \bar{\mathcal{R}}_i\}$, where $\forall\bar{\mathcal{Q}}_i\subset{\mathcal{Q}}_i,\forall\bar{\mathcal{R}}_i\subset{\mathcal{R}}_i$. By introducing the ratio matrix $S_i=R_i^{-1}Q_i$, a transformed dataset $D^1_i \triangleq \{(S_i, R_i) \mid S_i \triangleq R_i^{-1}Q_i, Q_i \in \bar{\mathcal{Q}}_i, R_i \in \bar{\mathcal{R}}_i\}$ is constructed. It is noted that, given the invertibility of $R_i$, a surjection is maintained between the original space $D^0_i$ and the transformed space $D^1_i$. Define the ratio sensitive set as $D^2_i \triangleq \{ R_i^{-1}Q_i \mid Q_i \in \bar{\mathcal{Q}}_i, R_i \in \bar{\mathcal{R}}_i \}$, and the gradient sensitive set as $D^3_i \triangleq \left\{ \nabla_{u_i^*(t|t-1)}J_i \mathrel{\bigg|} Q_i \in \bar{\mathcal{Q}}_i, R_i \in \bar{\mathcal{R}}_i \right\}$. The datasets $D^2_i$ and $D^{2'}_i$ are considered adjacent if they satisfies Definition \ref{de4}. Similarly, the corresponding adjacent sets for the gradient are defined as $D^3_i$ and $D^{3''}_i$. Suppose that $\overline{R_{i}^{-1}Q_i}$ and $\underline{\nabla_{u_i^*(t|t-1)}J_i}$ denote the elements added or deleted from $D^2_i$ and $D^3_i$, respectively. Subsequently, the influence of these elements on the base sensitive dataset $D^1_i$ is analysed, followed by an assessment of the impact on the transmitted state $x_i(t+1)$. 

  Given the symmetry of the adjacency relation and the arbitrary nature of the dataset $D_i^0$, it is sufficient to restrict the analysis to the operation of removing a single element. Consequently, $D^{2'}_i \triangleq \{ R_i^{-1}Q_i \mid  R_i^{-1}Q_i\neq \overline{R_{i}^{-1}Q_i}, Q_i \in \bar{\mathcal{Q}}_i, R_i \in \bar{\mathcal{R}}_i \}$, the corresponding mapping back to the original space $D^1_i$ is represented as $D^{1'}_i \triangleq \{(S_i,R_i) \mid  S_i\neq \overline{R_{i}^{-1}Q_i}, Q_i \in \bar{\mathcal{Q}}_i, R_i \in \bar{\mathcal{R}}_i \}$. Referring to \eqref{e19},
   \begin{align}\label{EX2}
\nabla_{U_i(t)} J_i &= 2 \bar{R}_i \bigg[ U_i(t) + c(t) \sum_{j \in \mathcal{N}_i} a_{ij}\bar{R}_i^{-1} \Phi^\top \bar{Q}_i \Big( \Phi U_i(t) \nonumber \\
&\quad + \Lambda \big( x_i(t) - d_{ij} - \hat{x}_{ij}(t) \big)  \Big)\bigg]\nonumber\\
&= 2 \bar{R}_i \bigg[ U_i(t) + c(t) \sum_{j \in \mathcal{N}_i} a_{ij} \Phi^\top \bar{R}_i^{-1}\bar{Q}_i \Big( \Phi U_i(t) \nonumber \\
&\quad + \Lambda \big( x_i(t) - d_{ij} - \hat{x}_{ij}(t) \big)  \Big)\bigg].
\end{align} 
   By treating \eqref{EX2} as a vector equation, the gradient $\nabla_{U_i(t)} J_i$ serves as the dependent variable, whilst $\bar{R}_i$ and $\bar{S}_i\triangleq\bar{R}_i^{-1}\bar{Q}_i$ act as the independent variables. Analysing the unknown parameters reveals that this mapping constitutes an underdetermined system: the total number of independent variables from $\bar{R}_i$ and $\bar{S}_i$ strictly exceeds the scalar equations provided by $\nabla_{U_i(t)} J_i$. Consequently, for a fixed $\nabla_{U_i(t)} J_i$, there are infinitely many combinations of $({S}_i, R_i)$ that map to the identical $\nabla_{U_i(t)} J_i$. The corresponding mapping of $D^{3''}_i$ back to the original space $D^1_i$ is $D^{1''}_i = \{(S_i,R_i) \mid  \nabla_{\bar{u}_i(t|t)}J_i^\top(S_i,R_i)\neq\underline{\nabla_{u_i^*(t|t-1)}J_i},\, Q_i \in \bar{\mathcal{Q}}_i,\, R_i \in \bar{\mathcal{R}}_i \}$. Under this mapping, the adjacencies $(D^2_i, D^{2'}_i)$ and $(D^3_i, D^{3'}_i)$ are transformed into the relations $(D^{1}_i, D^{1'}_i)$ and $(D^{1}_i, D^{1''}_i)$, respectively. With a dimensionality of $n=1$ and $Q, R \in \{1, 1.1, 1.2, \dots, 4\}$, $D^{1'}_i$ and $D^{1''}_i$ represent datasets obtained by deleting an arbitrary element from $D^{2'}_i$ and $D^{3'}_i$, respectively. The relationships between $D^{1}_i$, $D^{1'}_i$ and $D^{1''}_i$ are illustrated in Fig. \ref{fig1}.
\begin{figure}[htp]
    \centering
    \includegraphics[width=0.6\linewidth]{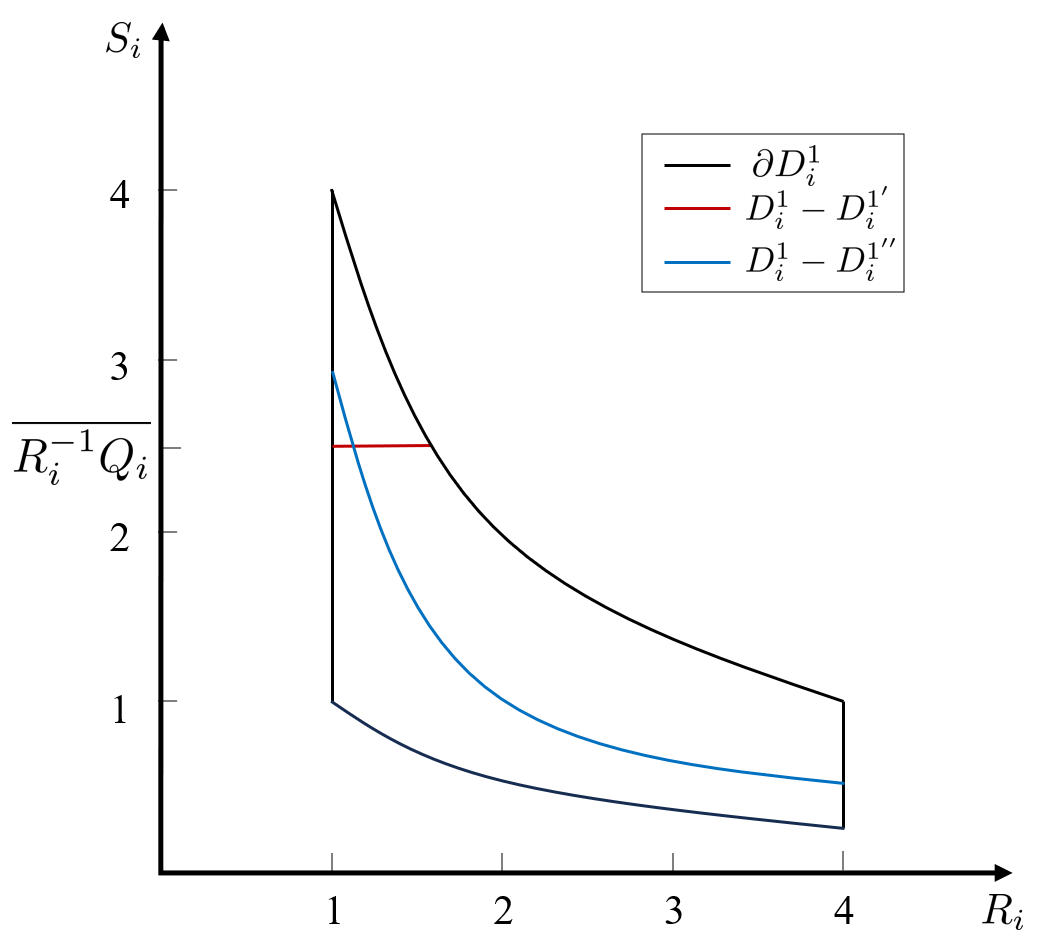}
    \caption{Illustration of the adjacency relations.}
    \label{fig1}
\end{figure}

 Recalling \eqref{e16},
   \begin{align}\label{EXX2}
&\left(\bar{R}_i+c(t)\sum\limits_{j\in\mathcal{N}_i}a_{ij}\Phi^\top\bar{Q}_i\Phi\right)^{-1}\Phi^\top\bar{Q}_i\Lambda\nonumber\\
&=\left(\mathbf{I}+c(t)\sum\limits_{j\in\mathcal{N}_i}a_{ij}\bar{R}_i^{-1}\Phi^\top\bar{Q}_i\Phi\right)^{-1}\bar{R}_i^{-1}\Phi^\top\bar{Q}_i\Lambda\nonumber\\
&=\left(\mathbf{I}+c(t)\sum\limits_{j\in\mathcal{N}_i}a_{ij}\Phi^\top\bar{R}_i^{-1}\bar{Q}_i\Phi\right)^{-1}\Phi^\top\bar{R}_i^{-1}\bar{Q}_i\Lambda.
   \end{align} It is concluded that the control gain $K_{i,t}$ in \eqref{e17} is uniquely determined by the ratio $S_i=R_i^{-1}Q_i$, rather than by the individual matrices $Q_i$ or $R_i$ independently. 
   To quantify the privacy implications of removal within the datasets $D_i^2$ and $D_i^3$, the resultant geometric perturbation in the output space is analysed. Let $U_{t,i}({D})$ denote the image set of the optimal control inputs generated under a specific parameter subset ${D}$, defined as $U_{t,i}({D}) \triangleq \{ u_i(t)|_{S_i} : S_i \in {D} \}$ and $u_i(t)$ is calculated by \eqref{e17}. The effective sensitivity can thus be formulated as the directed Hausdorff distance between the original image set and the reduced image set, measuring the maximum geometric 'void' created by the removal of a sensitive element. Consequently, the sensitivities $\Delta_{D^2_i}$ and $\Delta_{D^3_i}$, corresponding to the weighting ratios $R_i^{-1}Q_i$ and the gradients $\nabla_{u_i^*(t|t-1)}J_i$ respectively, are calculated as follows.
    \begin{align}\label{EX3}
     \Delta_{D^2_i}(t)&=\max\limits_{D^2_i,D^{2'}_i} \|\mathcal{M}_{t,i}(D^2_i)-\mathcal{M}_{t,i}(D^{2'}_i)\| \nonumber\\
&= \max_{D_i^1, D_i^{1'}} \rm{d_H}(U_{t,i}(D_i^1), U_{t,i}(D_i^{1'})) \nonumber \\
&= \max_{D_i^1, D_i^{1'}} \sup_{x \in U_{t,i}(D_i^1)} \inf_{y \in U_{t,i}(D_i^{1'})} ||x - y||.
 \end{align}
 Accordingly, it is obtained that
\begin{align}\label{EX4}
     \Delta_{D^3_i}(t)&=\max\limits_{D^3_i,D^{3'}_i} \|\mathcal{M}_{t,i}(D^3_i)-\mathcal{M}_{t,i}(D^{3'}_i)\| \nonumber\\
 &= \max_{D_i^1, D_i^{1''}} \rm{d_H}(U^\ast_{t,i}(D_i^1), U^\ast_{t,i}(D_i^{1''}))\nonumber \\
&= \max_{D_i^1, D_i^{1''}} \sup_{x \in U^\ast_{t,i}(D_i^1)} \inf_{y \in U^\ast_{t,i}(D_i^{1''})} ||x - y||\nonumber \\
&= \max_{D_i^1, D_i^{1''}} \sup_{x \in U_{t,i}(D_i^1)} \inf_{y \in U_{t,i}(D_i^{1''})} ||x - y||
 \end{align}
where $U^\ast_{t,i}({D}) \triangleq \{ u^\ast_i(t|t)|_{S_i} : S_i \in {D} \}$ and $u_i(t)$ is obtained by \eqref{EXX1}. The third equation holds since $u_i^{\ast}(t|t)=u_i(t)$ for same pair $(S_i,R_i)$. Combining \eqref{e17} and \eqref{EXX2} reveals that the optimal input $u_i(t)$ is determined uniquely by $S_i$. Given that the gradient $\nabla_{\overline{u}_i(t|t)}J_i$ is the image of $(R_i^{-1}Q_i, R_i)$ under a many-to-one mapping, its removal eliminates an entire equivalence class of ratios containing this maximising ratio. Consequently, the remaining set of ratios under gradient removal is a subset of those remaining under single-ratio removal, i.e., $\{S_i | S_i \in D_i^{1''}\} \subseteq \{S_i | S_i \in D_i^{1'}\}$ holds. For any given $x\in u_{i}(t)\mid_{ D^1_i}$, it implies 
\begin{align*}
    \inf\limits_{y\in u_{i}(t)\mid_{ D^{1''}_i}} \|x-y\|\ge\inf\limits_{y\in u_{i}(t)\mid_{ D^{1'}_i}} \|x-y\|.
\end{align*}
It is obtained $\Delta_{D^3_i}(t)\ge\Delta_{D^2_i}(t),\forall t\ge0$ \footnote{Since the many-to-one mapping implies that removing a single gradient element to construct $D_i^{1''}$ effectively removes an entire equivalence class of $(S_i, R_i)$ pairs. By selecting the ratio $\overline{R_i^{-1}Q_i}$ that maximises the sensitivity $\Delta_{D_i^2}(t)$ to define the single-ratio deletion set $D_i^{1'}$ above, and associating it with the removal of its corresponding gradient to define $D_i^{1''}$, it is established that $\Delta_{D^3_i}(t)\ge\Delta_{D^2_i}(t),\forall t\ge0$.}. This indicates that the output exhibits greater sensitivity to variations within the gradient set $D^3_i$. 

Based on Lemma \ref{le2}, it is established that protecting the predictive gradient $\nabla_{u_i^*(t|t-1)}J_i$ necessitates a larger noise variance than protecting the ratio matrix $R_i^{-1}Q_i$. More specifically, it is noted that $\nabla_{u_i^*(t|t-1)}J_i$ constitutes merely a single component within the full gradient vector $\left. \nabla_{U_i(t)} J_i \right|_{U_i(t) = U_i^{(0)}(t)}$. It can be directly deduced that, under identical privacy requirements, the noise variance required to safeguard this individual component is strictly smaller than that required for the entire vector. Through this transitive relationship, it is concluded that protecting the full gradient vector necessitates a considerably larger noise variance than protecting the ratio. The proof is completed.
\end{proof}

\begin{remark}\label{rre1}
     It should be noted that the proof of Theorem \ref{th1} establishes the subset inclusion $\{S_i \mid S_i \in D_i^{1''}\} \subseteq \{S_i \mid S_i \in D_i^{1'}\}$. The condition under which this relation degenerates into a strict subset ($\subsetneq$) remain an open question for further investigation. It is foreseeable that this condition might depend intricately on the boundary geometries of the feasible domains $\mathcal{Q}_i$ and $\mathcal{R}_i$, the transient system states, and the specific control inputs.
     
     The essence of this theorem lies not merely in geometric comparison, but rather in its exploration of the fundamental information boundaries within privacy-preserving control. Weighting ratios denote the irreducible core of an agent's intention, thus defining the essential information boundary necessary for control strategies. The gradient, on the other hand, is an operational representation. It may occasionally reduce to this boundary, but it can also expand far beyond it under certain conditions. Theorem \ref{th1} formalises this relationship to highlight a critical paradigm shift: rigorous DP design should not indiscriminately obscure operational outputs. Instead, designers should first explicitly identify and isolate the fundamental information boundaries that genuinely require protection.
\end{remark}

\section{Simulation Results}

To further illustrate the obtained results in Section \uppercase\expandafter{\romannumeral3}, three numerical simulations are conducted in this section. To accurately capture the impact of real-world communication noise, this part employs the noise model parameters experimentally obtained in \cite[\uppercase\expandafter{\romannumeral4}.B]{kunisch2008wideband}. Given the communication setup detailed in \cite[\uppercase\expandafter{\romannumeral4}.B]{kunisch2008wideband}, this paper considers a $4$-robot system operating in a $2$-dimensional space. In an environment with obstacles, the noise parameters are given as $\alpha_1=1.85$, $d_0=1$ m, $P_\mathrm{L}(d_0)=59.7$ dB, and $\sigma_\mathrm{dB}=3.2$ dB, which yields $C_0=0.001$. In a free-space scenario, $\alpha_2=2$ and $\sigma_\mathrm{dB}=0$ dB, whilst the remaining parameters are kept identical. The transmitter and receiver noises are configured as $s_j=0.5$ and $z_i=10^{-8}$ ($\forall i,\ j\in\{1,2,3,4\}$), respectively. The external eavesdropper is assumed to be co-located with the transmitter.

The dynamics of each robot are described by \eqref{e1}, where the state represents the $(x,y)$-position coordinates of the robot. The undirected communication graph is given by
\begin{align*}
    \mathcal{A}=\left[\begin{matrix}
        0 & 1 & 0 & 0\\1 & 0 &1& 0 \\0 & 1 &0 & 1\\ 0 & 0 & 1 & 0
    \end{matrix}\right],\quad B=\left[\begin{matrix}
        -1 &1 &0 &0\\0&-1 &1 &0\\ 0 & 0 &-1 &1
    \end{matrix}\right].
\end{align*} For the initial setup, the states of the four robots are assigned as $x_1(0)=\left[\begin{matrix}2 &1\end{matrix}\right]^{\top}$, $x_2(0)=\left[\begin{matrix}7 &3 \end{matrix}\right]^{\top}$, $x_3(0)=\left[\begin{matrix}3 &8 \end{matrix}\right]^{\top}$ and $x_4(0)=\left[\begin{matrix}4 &6 \end{matrix}\right]^{\top}$, with the prescribed relative formations being $d_{12}=\left[\begin{matrix}5 & 0\end{matrix}\right]^{\top}$, $d_{23}=\left[\begin{matrix}0 &5\end{matrix}\right]^{\top}$ and $d_{34}=\left[\begin{matrix}-5 &0\end{matrix}\right]^{\top}$. The parameter within \eqref{e10} is established as $\theta_1=1$. Additionally, the rolling horizon length is chosen to be $T=10$, and the weighting matrices are given by:
\begin{align*}
    Q_i=\left[\begin{matrix}
       q_i &0 \\0 &q_i 
    \end{matrix}\right],  {R}_i=\left[\begin{matrix}
       r_i &0 \\0 &r_i
    \end{matrix}\right],
\end{align*} where $q_i \in \mathcal{Q}$ and $r_i\in\mathcal{R}$, $i=1,2,3,4$.

With the system configurations and parameters established above, the subsequent subsections present the specific numerical simulations, {wherein $t$ denotes the system \eqref{e1} running time, and the eavesdropper is configured to continuously monitor the system from $t = 0$.}
\begin{figure}[htp]
    \centering
    \includegraphics[width=1\linewidth]{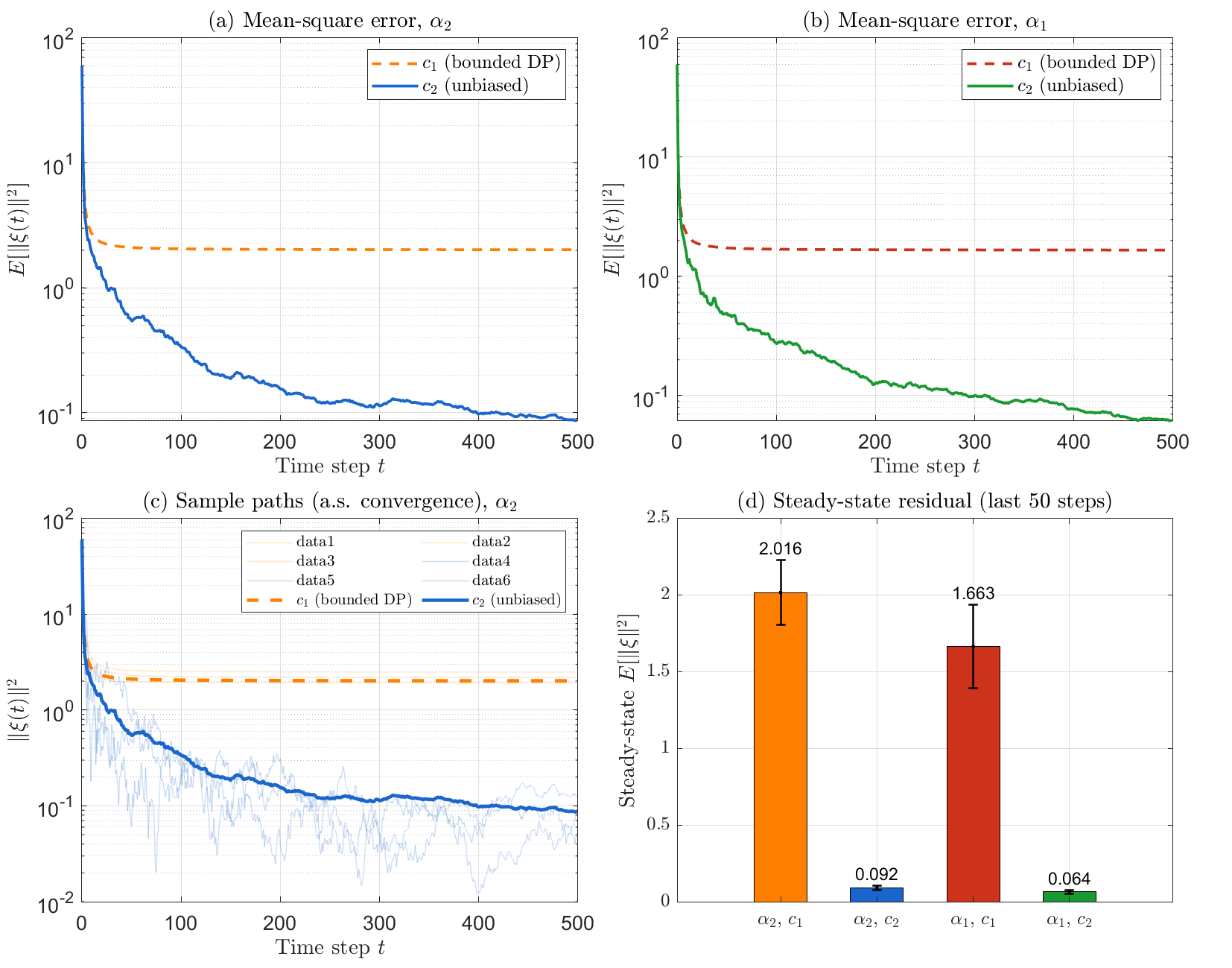}
    \caption{Formation tracking performance of the MAS \eqref{e1} under Algorithm \ref{alg:dpcc}}
    \label{fig2}
\end{figure}

\begin{figure}[htp]
    \centering
    \includegraphics[width=1\linewidth]{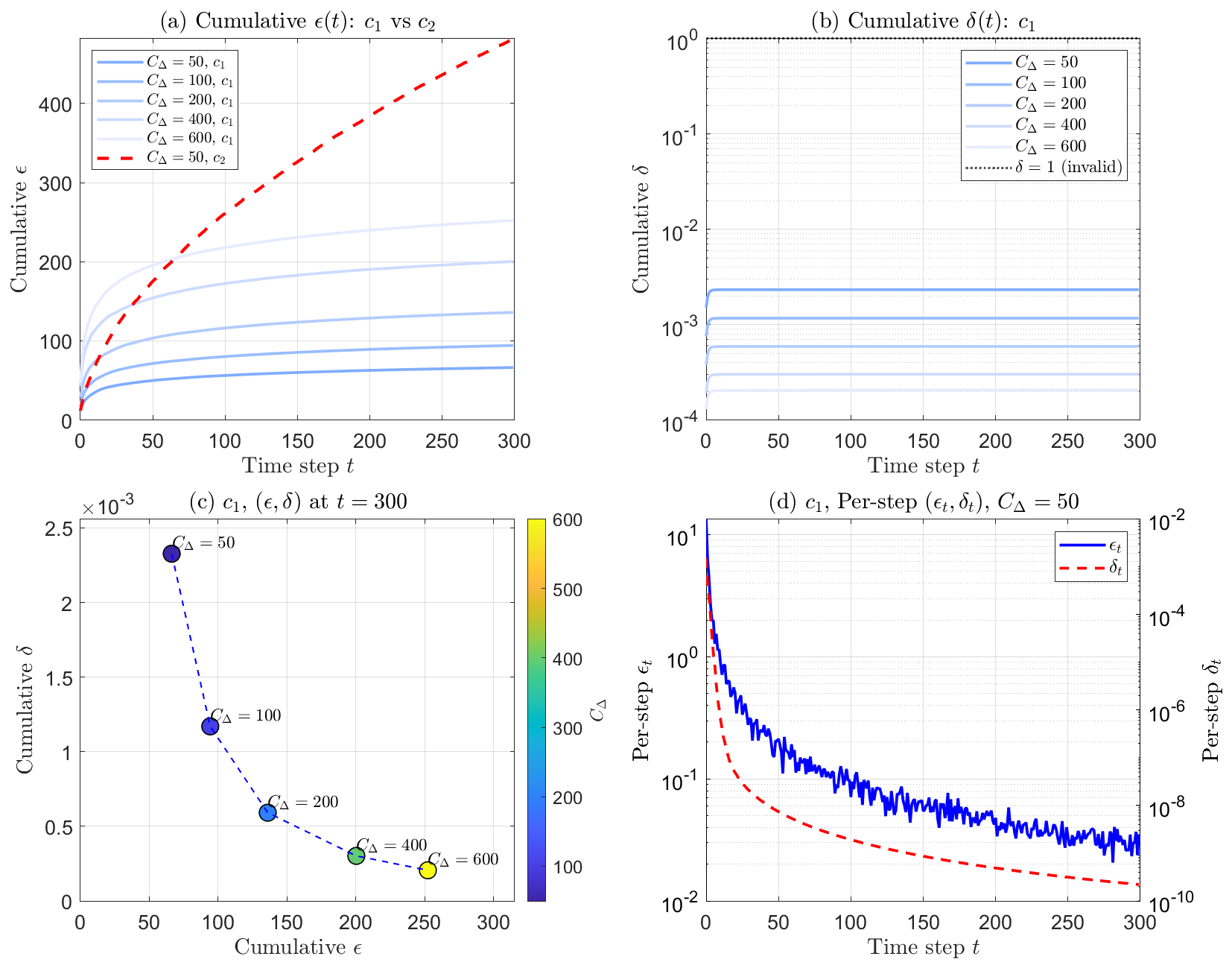}
    \caption{Temporal evolution of the $(\epsilon,\delta)$-DP guarantees under Algorithm \ref{alg:dpcc}}
    \label{fig3}
\end{figure}

\subsection{Convergence and Privacy Performance}

This subsection verifies the proposed algorithm's ability to drive the MAS \eqref{e1} towards the target formation whilst maintaining rigorous DP. The weighting matrices are configured with $q_i = 3$ and $r_i = 1$ for all $i \in \{1,2,3,4\}$. To validate Theorem \ref{th2} and Corollary \ref{co1} with $\alpha_2 = 2$, the penalty factor $c(t)$ is evaluated using two distinct profiles: $c_1(t)=\frac{1}{20(t+1)^{2.2}}\in \ell_1$ and $c_2(t)=\frac{1}{20(t+1)} \in \ell_2, \notin \ell_1$. To substantiate Proposition \ref{pr1}, these configurations are further examined in conjunction with the obstacle scenario where $\alpha_1 = 1.85$. Concurrently, the relaxation probability $\delta'_t=\frac{10^{-5}}{(t+2)^2}$ for DP in Theorem \ref{th3}. It can be verified that $c_1(t)$ and $\delta'_t$ satisfy condition \eqref{con2}. To demonstrate how the parameter $C_{\Delta}$ influences the cumulative privacy guarantees $\epsilon$ and $\delta$, multiple scenarios are conducted by selecting $C_{\Delta} \in \{50, 100, 200, 400, 600\}$. The simulation results are presented in Fig.~\ref{fig2} and Fig.~\ref{fig3}. 

The closed-loop convergence properties, statistically evaluated via Monte Carlo simulations, are delivered in Fig.~\ref{fig2}. As depicted in Figs.~\ref{fig2}(a) and \ref{fig2}(c), under the penalty factors $c_1(t)$ and $c_2(t)$ within a free-space environment $\alpha_2$, the system achieves the formation with a bounded tracking error. The precise values of this error are detailed in Fig.~\ref{fig2}(d), where the height of each bar represents the mean value, whilst the error bars denote the $95\%$ confidence interval. Specifically, to illustrate the almost-sure convergence in Fig.~\ref{fig2}(c), the faint background curves depict the individual stochastic sample paths across multiple realisations, whilst the dark solid lines represent the empirical mean trajectory. Conversely, employing the non-$\ell_1$ factor $c_2(t)$ ensures that the tracking error $\|\xi(t)\|^2$ diminishes to a markedly lower level (0.092 and 0.064 across the respective communication environments), thereby achieving practically unbiased convergence. Furthermore, Fig.~\ref{fig2}(b) demonstrates that the system also sustains convergence under the obstacle-induced noise $\alpha_1$, accompanied by a reduced steady-state residual. These results explicitly corroborate Theorem \ref{th2}, Corollary \ref{co1}, and Proposition \ref{pr1}, respectively.

Fig.~\ref{fig3} presents the corresponding DP guarantees within the feasible sets $\mathcal{Q}=\{2, 2.5,3\}$ and $\mathcal{R}=\{1,1.5,2\}$, subject to $\alpha_1$. Fig.~\ref{fig3}(a) shows a clear contrast: under $c_1(t)$, the cumulative privacy budget $\epsilon(t)$ remains within a finite limit, validating Theorem \ref{th3}. In contrast, $c_2(t)$ causes $\epsilon(t)$ to grow indefinitely. This confirms the fundamental control-privacy dilemma: achieving exact, unbiased convergence prevents the system from maintaining bounded DP guarantees over an infinite horizon. Figs.~\ref{fig3}(a)-(c) show the trade-off related to the operational bound $C_{\Delta}$. Additionally, Fig.~\ref{fig3}(d) reveals that communication noise can provide reasonable per-step DP ($\epsilon_t<10,\delta_t\leq 10^{-4}$ and strictly effective when $\epsilon_t<1$ \cite{near2025guidelines}) for weighting ratios, and that substantial privacy leakage occurs exclusively during the initial transient stage. This suggests that relying on globally designed parameters is overly conservative. Instead, local time-varying regulations of $c(t)$ and $C_{\Delta}$ can be strategically employed to achieve a more flexible and superior transient privacy-performance trade-off.

\subsection{Privacy across Distinct Targets: Ratios vs. Reference States}

To assess whether inherent communication noise provides universal privacy guarantees, this scenario contrasts its protection of weighting ratios with that of the reference state \cite{ma2026inherent}. For a rigorous comparison, both scenarios are subjected to the identical inherent communication noise environment ($\alpha_2$) and operate under the same penalty factor $c_1(t)$. The temporal evolution of the cumulative privacy budget $\epsilon(t)$ for both targets is delivered in Fig. \ref{fig5}.

\begin{figure}[htp]
    \centering
    \includegraphics[width=1\linewidth]{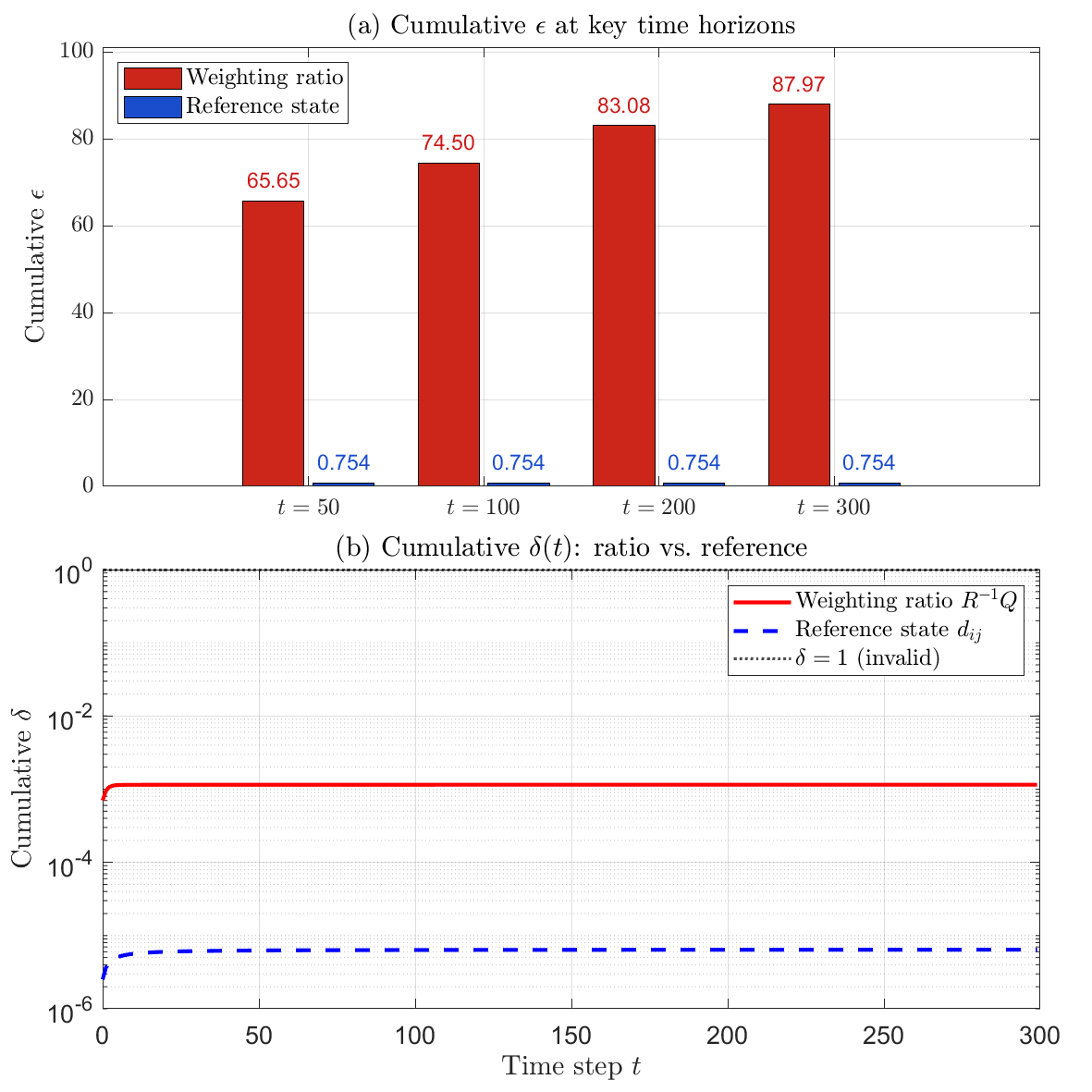}
    \caption{Target-dependent privacy: weighting ratio vs. reference state \cite{ma2026inherent} under noise \eqref{e8}}
    \label{fig5}
\end{figure}

{Fig. \ref{fig5} highlights a clear contrast in privacy protection between the two targets. The weighting ratio's $\epsilon(t)$ grows rapidly early on $t\in[0,100]$ before its rate diminishes. While this deceleration reflects effective privacy protection during the latter stages, the overall privacy guarantee is fundamentally compromised in practice, owing to the massive privacy leakage during the early stages of system operation. In contrast, the reference state maintains a low and nearly constant cumulative privacy leakage across all time horizons, demonstrating robust and enduring privacy protection under continuous eavesdropping.}  This quantitative result demonstrates that whilst inherent communication noise \eqref{e8} serves as a natural privacy mechanism, its protection is not uniform. Rather, its efficacy is target-dependent, leading to distinct rates of privacy loss based on the specific structural role of the protected information within the control mechanism.

\section{Conclusion}

This paper investigated the capacity of inherent communication noise to provide DP in distributed cooperative control MAS. Through the design of a distributed finite-horizon LQR framework, it is established that the cooperative tracking error remains bounded in expectation and convergent in both the mean-square and almost-sure senses for communication environments with path loss exponents $\alpha \leq 2$. Concurrently, the algorithm provides bounded $(\epsilon, \delta)$-DP guarantees for the protected weighting ratios without relying on artificial noise injection. The mathematical derivations also capture a fundamental dilemma: exact asymptotically unbiased convergence and an infinite-horizon bounded DP guarantee cannot be achieved simultaneously. From the perspective of the set-theoretic sensitivity analysis, it is proved that adjacency definitions premised upon weighting ratios yield less conservative privacy bounds in comparison to gradient-based formulations. This result justifies the rationale for the proposed ratio-based adjacency formulation. Numerical simulations are conducted to validate the effectiveness of the proposed algorithm. Furthermore, a comparative simulation with \cite{ma2026inherent} demonstrated that the privacy-preserving efficacy of inherent noise is highly target-dependent, yielding significantly different protection levels based on the structural role of the protected information within the control mechanism. Future research will explore state-dependent adaptive penalty factors to address super-linear noise amplification and secure convergence in severe fading environments where $\alpha > 2$.

\appendix
\subsection{Proof of Lemma 5}
    The expectation is derived using the law of iterated expectations:
    \begin{align*}
        \mathbb{E}[x] = \mathbb{E}_y[\mathbb{E}_{x|y}[x \mid y]] = \mathbb{E}_y[0] = 0.
    \end{align*}
    For the covariance, the law of total covariance is utilised:
    \begin{align*} 
        \mathrm{Cov}[x] = \mathbb{E}_y[\mathrm{Cov}_{x|y}[x \mid y]] + \mathrm{Cov}_y[\mathbb{E}_{x|y}[x \mid y]].
    \end{align*}
    Since $\mathbb{E}_{x|y}[x \mid y] = 0$, the second term on the right-hand side vanishes. For the first term, substitute the given conditional covariance:
    \begin{align*}
        \mathbb{E}_y[\mathrm{Cov}_{x|y}[x \mid y]] = \mathbb{E}_y\left[\left(\frac{a}{y^2} + c\right)\mathbf{I}\right] = \left(a \mathbb{E}[y^{-2}] + c\right)\mathbf{I}.
    \end{align*}
    Given that $\ln(y) \sim \mathcal{N}(0, b^2)$, the variable $y$ follows a log-normal distribution. Recall that the $k$-th moment of a log-normal variable $L \sim \text{LogNormal}(\mu, \sigma^2)$ is given by $\mathbb{E}[L^k] = \exp(k\mu + \frac{1}{2}k^2\sigma^2)$. Setting $\mu=0$, $\sigma=b$, and $k=-2$, obtain:
    \begin{align*}
        \mathbb{E}[y^{-2}] = \exp\left(0 + \frac{1}{2}(-2)^2 b^2\right) = e^{2b^2}.
    \end{align*}
    Substituting this back yields $\mathrm{Cov}[x] = (a e^{2b^2} + c)\mathbf{I}$. The proof is completed. \IEEEQED
\bibliographystyle{IEEEtran}
\bibliography{main}
    \begin{IEEEbiography}[{\includegraphics[width=1in,height=1.25in,clip,keepaspectratio]{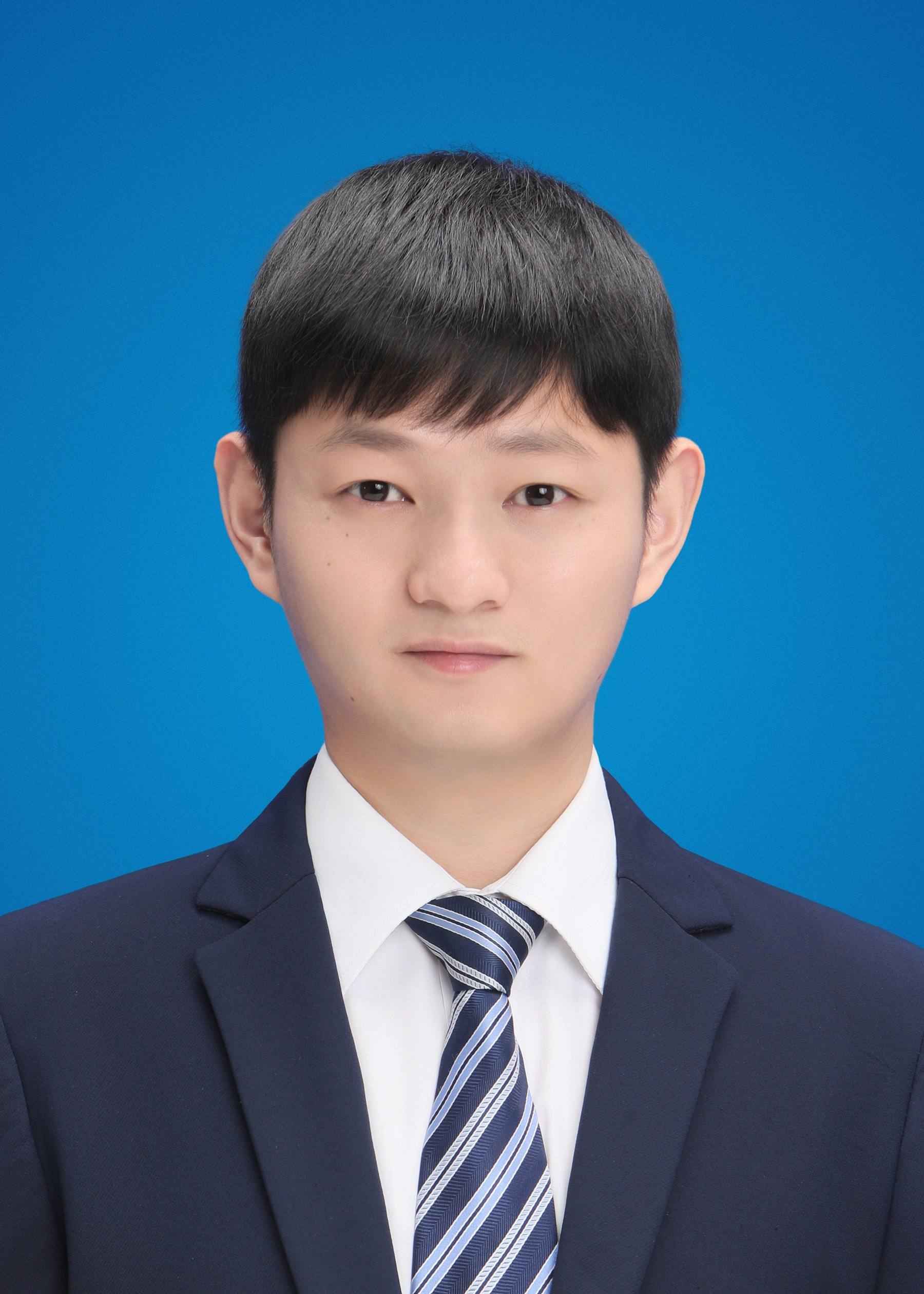}}]{Yuwen Ma}(Student Member, IEEE)
received the B.Eng. degree in automation from Beihang University, Beijing, China, in 2021, and the MSc degree in control engineering from Shanghai Jiao Tong University, Shanghai, China, in 2024. During his Master's studies, he was awarded the National Scholarship in 2023. He is currently working towards the Ph.D. degree under the supervision of Prof. Sarah Spurgeon and Dr Boli Chen at University College London, London, U.K., funded by a UKRI EPSRC DTP program. His Master's dissertation received the Outstanding Master's Thesis Award from the Chinese Association of Automation in 2026. His current research interests include differential privacy, optimal control, and multi-agent systems.
\end{IEEEbiography}

\begin{IEEEbiography}[{\includegraphics[width=1in,height=1.25in,clip,keepaspectratio]{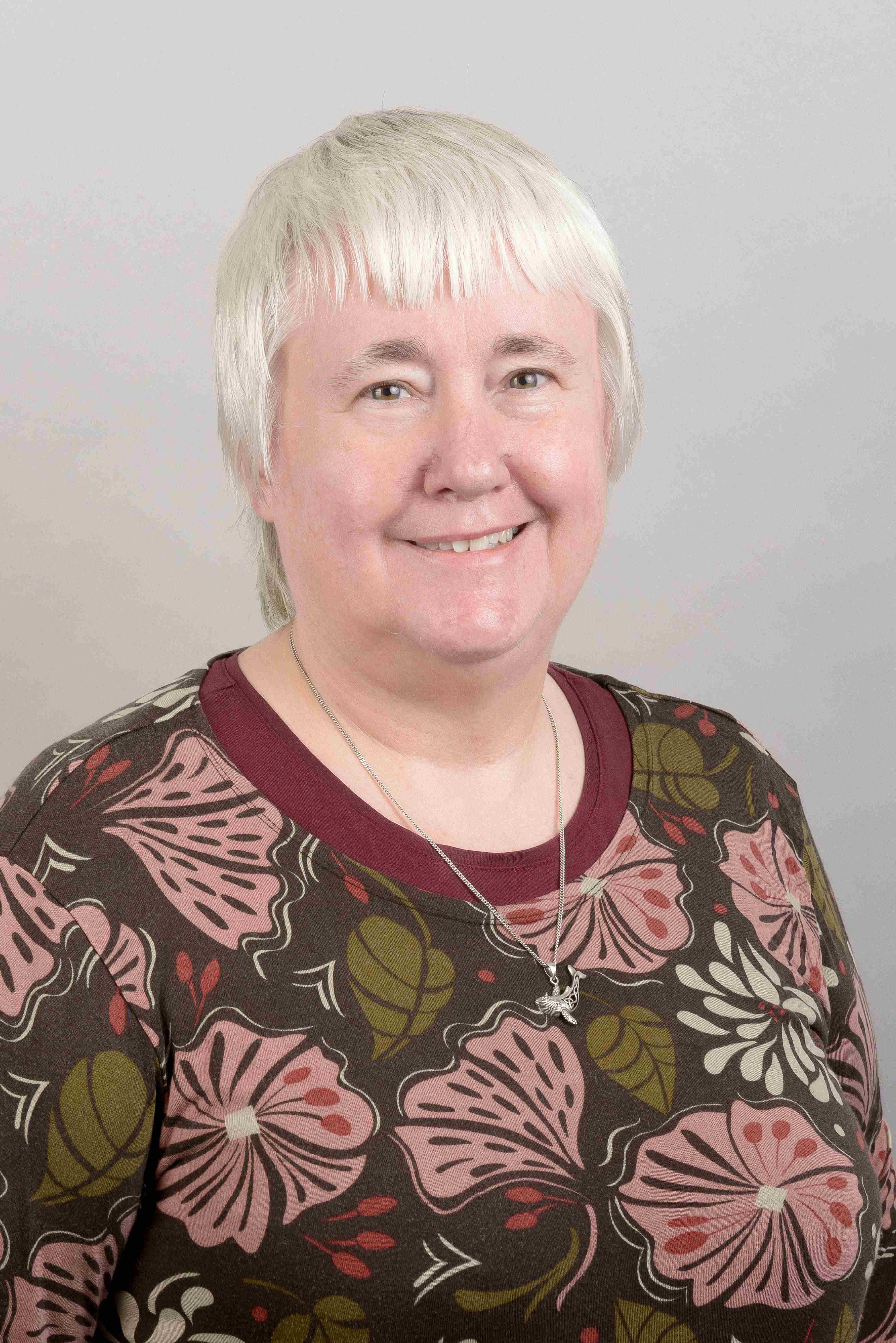}}]{Sarah Spurgeon}(Fellow, IEEE)
received the B.Sc. and D.Phil. degrees from the University of York, York, U.K., in 1985 and 1988, respectively.

She is currently Professor in control engineering and Director of the Robotics Institute at University College London, U.K. Her research interests involve the area of systems modeling and analysis and robust control and estimation. In these areas, she has published more than 300 refereed research articles. She is a Fellow of the Royal Academy of Engineering, InstMC, IET, and IMA. She is the past Editor-in-Chief of IEEE Press and is currently a Director of the IEEE Foundation and the Vice President (Publications) of the International Federation of Automatic Control (IFAC). She received the Honeywell International Medal for distinguished contributions as a Control and Measurement Technologist to developing the theory of control in 2010 and the IEEE Millennium Medal in 2000.
\end{IEEEbiography}

\vspace{-1mm}
\begin{IEEEbiography}[{\includegraphics[width=1in,height=1.25in,clip,keepaspectratio]{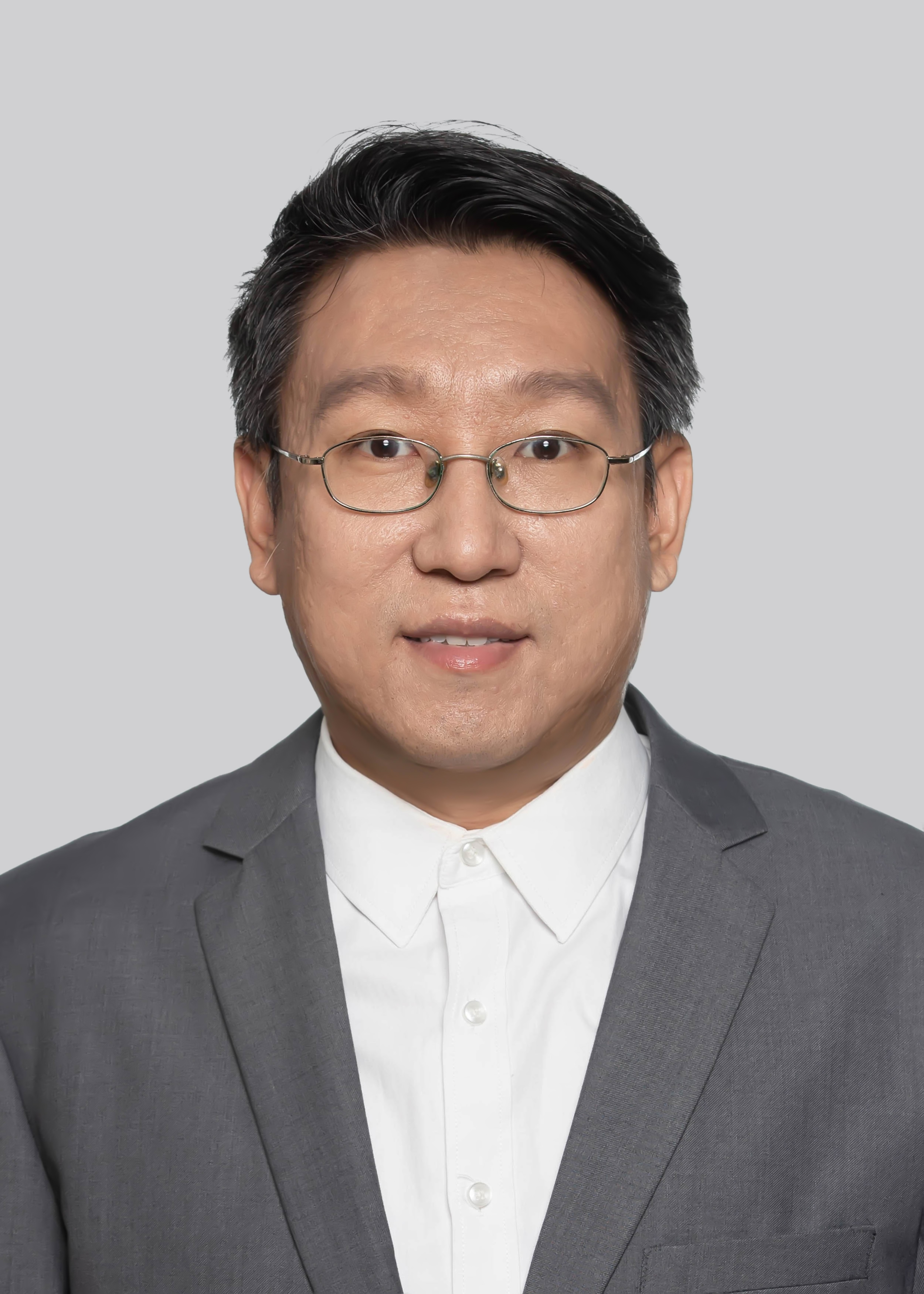}}]{Tao Li}(Senior Member, IEEE) 
received the B.E. degree in automation from Nankai University, Tianjin, China, in 2004, and the Ph.D. degree in systems theory from the Academy of Mathematics and Systems Science, Chinese Academy of Sciences, Beijing, China, in 2009. Since July 2009, he has been with the Academy of Mathematics and Systems Science, Chinese Academy of Sciences, where he is now a Full Professor. His current research interests include stochastic systems, distributed learning, control, and games.

Dr. Li was a recipient of the 28th Zhang Siying Outstanding Youth Paper Award in 2016 and the Best Paper Award of the 7th Asian Control Conference in 2009. He received the 2009 Singapore Millennium Foundation Research Fellowship and the 2010 Australian Endeavour Research Fellowship. He was appointed Dongfang Distinguished Professor by Shanghai Municipality in 2012 and received the Excellent Young Scientists Fund from NSFC in 2015. He was twice elected to the Chang Jiang Scholars Program (Youth Scholar in 2018 and Distinguished Professor in 2023), Ministry of Education, China. He served/serves as Associate Editors for several journals, including IEEE Transactions on Automatic Control, IEEE Control Systems Letters, Systems and Control Letters, Nonlinear Analysis: Hybrid Systems, and Science China Information Sciences.
\end{IEEEbiography}

\begin{IEEEbiography}[{\includegraphics[width=1in,height=1.25in,clip,keepaspectratio]{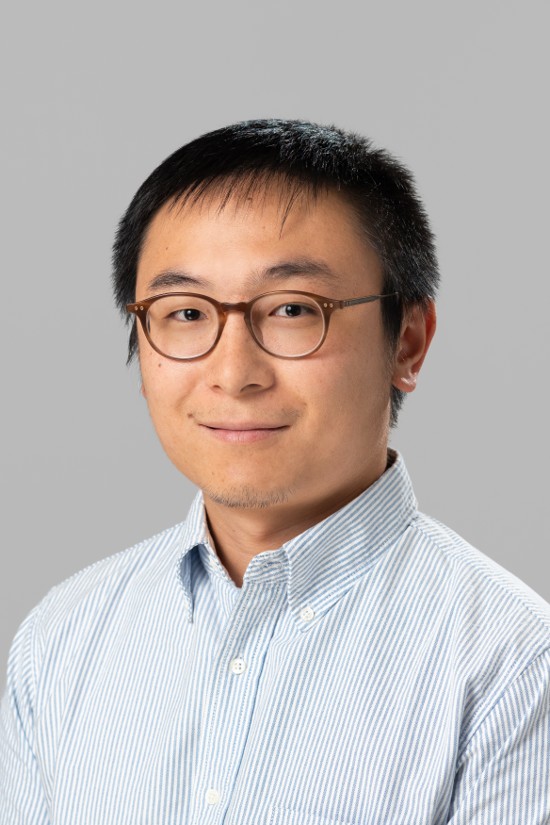}}]{Boli Chen}(M’16–SM’24, IEEE)
received the B.Eng. degree in Electrical and Electronic Engineering from Northumbria University, UK, in 2010. He received the M.Sc. and Ph.D. degrees in Control Systems from Imperial College London, UK, in 2011 and 2015, respectively. He is currently an Associate Professor in the Department of Electronic and Electrical Engineering at University College London (UCL), UK. His research focuses on the control, optimisation, and estimation of complex dynamical systems, with applications to smart cities, including transportation, electric energy systems, and sensor networks. Dr Chen is a member of the IEEE Control Systems Society Technical Committees on “Smart Cities” and “Automotive Controls”. He serves as an Associate Editor for the IEEE Transactions on Intelligent Transportation Systems and the European Journal of Control. He is also a member of the EUCA Conference Editorial Board and the IEEE Intelligent Transportation Systems Conference (ITSC) Editorial Board.
\end{IEEEbiography}
   

\end{document}